\providecommand{\U}[1]{\protect\rule{.1in}{.1in}}
\DeclareMathAlphabet{\pazocal}{OMS}{zplm}{m}{n}
\newtheorem{theorem}{Theorem}[section]
\newtheorem{lemma}[theorem]{Lemma}
\newtheorem{proposition}[theorem]{Proposition}
\newtheorem{remark}[theorem]{Remark}
 \numberwithin{equation}{section}
\numberwithin{theorem}{section}
\newcommand{\qed}{\hfill$\Box$}
\newenvironment{proof}{\begin{trivlist}\item[]{\em Proof:}\/}{\qed\end{trivlist}}
\newenvironment{proofof}[1][Proof]{\noindent \textit{#1.} }{\ \qed}
\newcommand{\R}{{\mathbb R}}
\newcommand{\C}{{\mathbb C\hspace{0.05 ex}}}
\newcommand{\N}{{\mathbb N}}
\newcommand{\ep}{\varepsilon}
\DeclareMathOperator{\Tr}{Tr}
\newcommand{\vphi}{{\varphi}}
\newcommand{\be}{\begin{equation}}
\newcommand{\ee}{\end{equation}}
\newcommand{\bes}{\begin{equation*}}
\newcommand{\ees}{\end{equation*}}
\newcommand{\beqs}{\begin{eqnarray}}
\newcommand{\eeqs}{\end{eqnarray}}
\newcounter{jlisti}
\title{Self-similar asymptotics for a modified Maxwell-Boltzmann equation in systems subject to deformations}
\author{Alexander Bobylev, 
 Alessia Nota, 
  Juan J.~L.~Vel\'azquez }
\date{\today}
\def\adresse{
\begin{description}
\item[Alexander V. Bobylev:] Keldysh Institute of Applied Mathematics, RAS \\ Miusskaya Pl.,4, Moscow, Russian Federation \\
E-mail: \texttt{alexander.bobylev@kau.se}

\item[Alessia Nota:] Institute for Applied Mathematics,\\ University of Bonn, Endenicher Allee 60, 53115 Bonn, Germany\\
E-mail: \texttt{nota@iam.uni-bonn.de}

\item[Juan J. L. Vel\'azquez:] Institute for Applied Mathematics,\\ University of Bonn, Endenicher Allee 60, 53115 Bonn, Germany\\ 
E-mail: \texttt{velazquez@iam.uni-bonn.de}
\end{description}
}
\begin{document}

\maketitle

\abstract{
In this paper we  study a generalized class of Maxwell-Boltzmann equations which in addition to the usual collision term contains a linear deformation term described by a matrix A. This class of equations arises, for instance, from the analysis of  homoenergetic solutions for the Boltzmann equation considered by many authors since 1950s. Our main goal is to study a large time asymptotics of  solutions under assumption of smallness of the matrix A.  The main  result of  the paper is formulated in Theorem 2.1. Informally stated, this Theorem says that,  for sufficiently small norm of A,  any non-negative solution with finite second moment tends to a self-similar solution of relatively simple form for large values of time. This is what we call ``the self-similar asymptotics".  We also prove that the higher order moments of the  self-similar profile are finite under further smallness condition on the matrix A.
}

\bigskip

\bigskip

\tableofcontents

\section{Preliminaries} \label{sec:1}

Let $f(v,t)$ be a one-particle distribution function, $v\in \R^d$ and $t\in \R_{+}=[0,\infty)$ denote respectively the particle velocity and time, $d\geq 2$. We consider the following modified Boltzmann equation for pseudo-Maxwell molecules 
\be \label{eq:GenBolt}
\partial_{t} f- \text{div}_{v} \left(A v f \right)    =Q\left(f,f \right) 
\ee
where $A\in M_{d\times d}(\R)$ is a real matrix and $Q\left(f,f\right)$ is the collision integral
\be  \label{eq:CollBolt}
Q\left(f,f\right) \left(  v\right)= \int_{\mathbb{R}^{d}}dw \int_{S^{d-1}}dn \ g\left(  \frac{(v-w)}{\vert v-w\vert } \cdot n \right)\left[  f(v^{\prime})f(w^{\prime})- f(v) f(w)\right]
\ee 
with $n\in S^{d-1}$, being $S^{d-1}$ the unit sphere in $\mathbb{R}^{d}$. Here $(v,w)$ is a pair of velocities in incoming collision configuration. We will denote by convenience $u=v-w$ the relative velocity, $\hat{u}=\frac{u}{\vert u \vert }$, and the corresponding pair of outgoing velocities is defined by the collision rule
\begin{align*} 
v^{\prime}  &  =\frac{1}{2} \left(v+w+\vert u \vert n \right), \\
w^{\prime}  &  =\frac{1}{2}\left(v+w-\vert u \vert n \right).  
\end{align*}

We will consider the initial value problem such that
\be \label{eq:incdt}
f(v,0)=f_0(v)\geq 0, \qquad \int_{\R^d} dv \ f_0(v)=1.
\ee 
The normalization in \eqref{eq:incdt} can be assumed without any loss of generality. Similarly, we can assume that the kernel $g(\eta)$ in \eqref{eq:CollBolt}, with $\eta\in[-1,1]$, is non-negative and normalized as 
\be \label{eq:kernel}
\int_{S^{d-1}}dn \ g(\omega\cdot n)=1, \quad \text{for any}\quad \omega\in  S^{d-1}.
\ee 
Thanks to the normalization assumptions both $f(v,t)$ and $g( \hat{u} \cdot n)$ are probability densities. 
\medskip

Our aim is to study the long-time behavior of $f(v,t)$. To be more precise, we intend 
\begin{itemize}
\item[(a)] to construct a class of self-similar solutions to \eqref{eq:GenBolt}-\eqref{eq:CollBolt} having the form
\be \label{eq:ssprof}
f(v,t)=e^{-d\beta t} F(v e^{-\beta t}),\; \beta\in\R,
\ee
\item[(b)] to prove that these solutions describe a large time asymptotics for a certain class of initial data. 
\end{itemize}
We observe that $\beta\in\R$ in \eqref{eq:ssprof} has to be thought as an eigenvalue. More precisely, solutions to \eqref{eq:GenBolt} with the form \eqref{eq:ssprof} exist only for specific choices of $\beta$ depending on the matrix $A$.  Therefore, this problem can be understood as an example of self-similarity of the second kind (see \cite{Bar}).

We are also interested in the self-similar profile $F(v)$, in particular in power moments of this function.

\medskip
Our motivation to analyze the properties of the solutions to the equation (\ref{eq:GenBolt}) arises from the study of the so-called
\textit{homoenergetic solutions.} These are some particular solutions of the
Boltzmann equation which were introduced in \cite{G1} and
\cite{T}. These solutions $g=g\left(  x,v,t\right)  $ solve the classical spatially inhomogeneous Boltzmann equation and they have the particular form $g\left(
x,v,t\right)  =f\left(  v-A\left(  t\right)  x,t\right)  $ for some suitable
choices of the matrix $A\left(  t\right)  \in M_{d\times d}\left(
\mathbb{R}\right)  .$ A detailed description of the possible choices of
$A\left(  t\right)  $ can be found in
\cite{JNV}. In order to obtain a solution $g$
of the Boltzmann equation, the matrix $A\left(  t\right)  $ must solve an
equation that can be reduced to (\ref{eq:GenBolt}) by means of some suitable
changes of variables.

The properties of the homoenergetic solutions have been intensively studied in
the physical literature, mostly in the case of Maxwell molecules (cf.
\cite{GS}), while their mathematical properties have been less studied. The global well-posedness of
(\ref{eq:GenBolt}) has been proved in \cite{CercArchive}, 
\cite{Cerc2000}, \cite{Cerc2002}. Additional properties concerning homoenergetic solutions in the two dimensional case can be found in \cite{BobCarSp}. More recently, a systematic analysis of 
the mathematical properties and the long-time asymptotics of the homoenergetic
solutions has been provided in several papers both for Maxwell molecules as
well as for more general class of collision kernels (cf. \cite{JNV}, \cite{JNV2}, \cite{JNV3}, \cite{MT}). In particular, the existence of self-similar solutions to
(\ref{eq:GenBolt}) with the form (\ref{eq:ssprof}) has been proved in
 \cite{JNV} if $A$ is small in some suitable norm.

In this paper we will obtain a different proof from the one in 
\cite{JNV} of the existence of
self-similar solutions of (\ref{eq:GenBolt}) with the form (\ref{eq:ssprof})
for small $A$ and in addition we will prove that these solutions are unique (in the class of probability
measures). We will also prove that they are stable and, more precisely, that
they attract all the solutions of (\ref{eq:GenBolt}) having the same mass
as well as the same first order moments. Actually, we will prove that they are exponentially convergent in the topology of the uniform convergence for the Fourier transforms of the probability measures. The main tool that we will
use to prove these results is the well developed machinery available for the
study of the Boltzmann equations in the case of Maxwell molecules by means of the
Fourier transform method that was introduced in \cite{Bo75}.

\bigskip

We finally notice that self-similar solutions to the homogeneous Boltzmann
equation (i.e.~the equation (\ref{eq:GenBolt}) with $A=0$) have been
considered in several situations that we discuss here. In the case of the
classical Boltzmann collision kernel and Maxwell molecules, the existence and stability of
self-similar solutions with the form \eqref{eq:ssprof} was proved in \cite{BC02a}, \cite{BC02b}, \cite{BC03}. These solutions have infinite second
order moments, something that might be expected, since otherwise those
solutions would violate the conservation of energy property. On the other
hand, the existence  and stability of self-similar solutions with finite second order moments
to the homogeneous Boltzmann equation for granular media and Maxwell molecules
was conjectured in \cite{EB02} and rigorously proved in
\cite{BCT03}. Such solutions have also the form  \eqref{eq:ssprof}. They will not be
discussed in this paper, but we remark that they can exist because
conservation of energy does not hold in this case. 
We further remark that all the previous self-similar solutions obtained for the case $A=0$ (both for classical Boltzmann and for granular media) are radially symmetric while the solutions obtained in this paper are non symmetric in general because the matrix $A$ breaks the invariance under rotation of the problem.

\section{Strategy of the paper and main results }  
\label{sec:2}

It is well known that the collision operator \eqref{eq:CollBolt} can be simplified by the Fourier Transform (see \cite{Bo75, Bo88}).  On the other hand the Fourier Transform of the hyperbolic term $div_v(Af)$ has a simpler form. 
More precisely, we denote by $\varphi(k)=\hat{f}(k)$ the Fourier Transform of the one particle probability density $f(v)$, i.e.
\be \label{eq:phif}
\vphi(k)=\hat{f}(k)=\int_{\R^d} dv \ f(v)e^{-i k\cdot v}, \;\; k\in\R^d
\ee
and, using \eqref{eq:kernel}, we obtain the following equation
\be \label{eq:BoltFourier}
\partial_t \vphi  + \left(Ak\right)\cdot \partial_k \vphi = \mathcal{I}^{+}(\vphi,\vphi)-\vphi_{|_{k=0}} \vphi 
\ee 
where 
\be \label{eq:BoltCollFourier}
 \mathcal{I}^{+}(\vphi,\vphi)(k)= \int_{S^{d-1}}dn \ g\left(  \hat{k} \cdot n \right)\vphi(k_{+}) \vphi (k_{-}) \;\; \text{with}\;\; k_{\pm}=\frac{1}{2}\left( k\pm \vert k \vert n\right),\quad \hat{k}=\frac{k}{\vert k\vert}.
\ee
\smallskip

\noindent The derivation of this formula can be found e.g.~in \cite{Bo75}, \cite{Bo88}. 
The initial condition \eqref{eq:incdt} becomes
\be \label{eq:incdtFourier}
\vphi (k,0)=\vphi_0(k)= \int_{\R^d} dv \ f_0(v)e^{-i k\cdot v}, \quad \vphi_0(0)=1.
\ee
Note that  \eqref{eq:BoltFourier} implies the mass conservation condition
\be \label{eq:phifk0}
\vphi(0,t)=\vphi_0(0)=1.
\ee 
Hence, equation \eqref{eq:BoltFourier} reads 
\be \label{eq:BoltFourier2}
\partial_t \vphi  +\vphi + \left(Ak\right)\cdot \partial_k \vphi = \mathcal{I}^{+}(\vphi,\vphi) . 
\ee 
For a given  matrix $A\in M_{d\times d}(\R)$ we will use the following classical matrix norm: 
\be \label{eq:normA2}
\vert |A\vert |=\sup_{\vert k\vert =1} \vert A k\vert , 
\ee
where $|\cdot |$ denotes the Euclidean norm in $\R^d$.

We recall that various generalizations of the equation \eqref{eq:BoltFourier2} with $A=0$ and with radial solutions $\vphi ( k  ,t)=\vphi ( \vert k \vert ,t)$ were studied in detail in \cite{BCG09} where exactly the same questions on self-similar asymptotics were considered. We shall see below how the approach of that paper can be generalized to \eqref{eq:BoltFourier2} with $A\neq 0$ and without the assumption that $\vphi=\vphi ( \vert k \vert ,t)$.   
First we need to introduce a suitable space of functions for the Cauchy problem  \eqref{eq:incdtFourier}, \eqref{eq:BoltFourier2}. A natural space to solve the modified Boltzmann equation \eqref{eq:GenBolt} is the space of probability measures $f(\cdot,t)$ in $\R^d$ depending on the parameter $t\in \R_{+}$. Then, it is natural to solve  \eqref{eq:incdtFourier}, \eqref{eq:BoltFourier2}  in the space of Fourier transforms $\vphi (\cdot,t)$ of time-dependent probability measures $f(\cdot,t)$, with $t\in \R_{+}$. We recall that in Probability Theory the Fourier transforms of probability measures are called characteristic functions.  The properties of these functions are well known (see, e.g., \cite{Fe2}). They are a subset $\Phi$ of the space of complex-valued continuous functions $C(\R^d;\mathbb{C})$. Then
\be\label{def:setffi}
\Phi=\{\varphi\in C(\R^d;\mathbb{C})\; :\; \varphi(k)= \hat{f}(k), \quad \text{for}\quad f\in \mathcal{P}_{+}(\R^d)  \},
\ee 
where we denote as $\mathcal{P}_{+}(\R^d)$ the set of Radon probability measures in $\R^d$, $d\geq1$, and the Fourier transform $\hat{f}(\cdot)$ is defined as in \eqref{eq:phif}.  Notice that  \eqref{eq:phif} implies that for every $\vphi \in \Phi$ we have $\|\vphi\|_{\infty}\leq \vphi(0)=1$ where $\| \vphi\|_{\infty}=\sup_{k\in\R^d} |\vphi(k)|$. Therefore, the set $\Phi$ is contained in the subset $\mathcal{U}$ of the unit sphere of $C(\R^d;\mathbb{C})$ defined by means of 
\be \label{def:setU}
\mathcal{U}:= \{\psi \in C(\R^d;\mathbb{C}) \, : \; \vert| \psi\vert|_{\infty}=1, \quad \psi(0) =1 \}.
\ee
It is well-known that the set $\Phi$ (cf. \eqref{def:setffi}) can be characterized alternatively by means of Bochner's Theorem (see for instance \cite{Fe2} Chapter XIX) but the definition of $\Phi$ above will enough for our purposes.

It follows from \eqref{def:setffi} that $\Phi$ is a convex set, i.e. if $\vphi_{1,2}\in \Phi$ then $\alpha\vphi_1+(1-\alpha)\vphi_2 \in \Phi$ for any $\alpha\in [0,1]$.

 We will use repeatedly the following family of functional spaces. Given $T\in (0,\infty]$ we define the space $C([0,T); \Phi)$ as the set of  functions continuous in time $\psi(\cdot,t)\in \Phi$ for every $t\in [0,T)$ and with $\Phi$ endowed with the topology given by $\|\cdot\|_{\infty}.$ We introduce a topology in $C([0,T); \Phi)$ by means of the metric 
 \be \label{def:distT}
 d_{T}(\psi_1,\psi_2)=\sup_{0\leq t\leq T} \vert| \psi_1(\cdot,t) -\psi_2(\cdot,t) \vert|_{\infty}.
 \ee 
 Notice that the space  $C([0,T); \Phi)$ is a complete metric space due to the fact that the space $\Phi$ is closed in the topology of the uniform convergence (cf. \cite{Fe2}).%i.e. the topology of the norm \|\|_{\infty}. this gives also mass conservation

\medskip

The goal of this paper is to obtain the following results:
\begin{enumerate}
\item To show the existence and uniqueness of solutions to the time-dependent problem   \eqref{eq:incdtFourier}, \eqref{eq:BoltFourier2} in the class of characteristic functions $\Phi$. 
\item To prove that for $A$ sufficiently small there exists a self-similar solution to \eqref{eq:BoltFourier2} with the form 
\be \label{eq:ssprofF}
\varphi(k,t)= \Psi(ke^{\beta t}),
\ee
for some suitable $\beta\in \R$.  This form of the characteristic function is associated to a probability measure as in \eqref{eq:ssprof}.

\item To prove that for every $A$ for which a self-similar solution as in \eqref{eq:ssprofF} exists, 
the solution $\varphi(k,t)$ to the time-dependent problem  \eqref{eq:incdtFourier}, \eqref{eq:BoltFourier2} satisfies 
\be \label{eq:cvffipsi}
\lim_{t\to\infty} \varphi(e^{-\beta t}k,t)= \Psi(k)
\ee
assuming suitable conditions for the moments associated to the initial value $\vphi(k,0)=\vphi_0(k)$. 
The convergence in \eqref{eq:cvffipsi} is understood in the sense of uniform convergence, i.e. in the norm $\|\cdot\|_{\infty}$. 

Notice that \eqref{eq:cvffipsi} implies uniqueness of the self-similar solution obtained in item $2.$
\item To study properties of the distribution function $f(v)$ associated to the characteristic function $\Psi$ defined in \eqref{eq:ssprofF}. 
\end{enumerate}

Some of the questions above have been addressed in  \cite{JNV}. More precisely, the well-posedness result for the time-dependent problem in item $1.$ follows from Theorem 4.2 in \cite{JNV} but an alternative proof, using characteristic functions, is included here for reader's convenience. The existence of self-similar solutions with the form \eqref{eq:ssprofF} (or more precisely as in \eqref{eq:ssprof}) has been obtained in \cite{JNV} under the same assumptions. The main novelty in this paper, that is the content of item $3.$, is to prove uniqueness and stability of the self-similar solution described in item $2.$ More precisely, the main result of the paper is the following.

\medskip

\begin{theorem} \label{thm}
Suppose that $f\in C([0,\infty);\mathcal{P}_{+}(\R^d))$ is a solution to \eqref{eq:GenBolt}, \eqref{eq:CollBolt} with initial datum  $f(\cdot,0)=f_0(\cdot)\in \mathcal{P}_{+}(\R^d)$ such that $\int_{\R^d} dv\ f_0(v) |v|^p<\infty$ for $p>2$.  There exists an $\ep_0=\ep_0(d,p)>0$ such that if  $\|A\|\leq \ep_0$ there exists a measure $f_{st}\in \mathcal{P}_{+}(\R^d)$ satisfying $\int_{\R^d} dv\ f_{st}(v) |v|^p<\infty$ with $2<p\leq 4$, and $\beta=\beta(A)\in\R$ such that 
\be
\left( e^{\beta t} \right)^d \ f \left(e^{\beta t}v+ e^{-tA^T} U,t\right) \rightarrow  \lambda^{-d}f_{st}\big({\lambda}^{-1} v\big) \quad \text{as}\quad t\to\infty
\ee
in the weak topology of $\mathcal{P}_{+}(\R^d)$ and where $\lambda=\lambda(f_0,A)> 0$, $U\in\R^d$. Here we denote as $A^T$ the transpose of $A$.  Moreover, for every $M\in\N$ there exists an $\ep_M\in (0,\ep_0)$ sufficiently small such that if $\| A\|\leq \ep_M $  then $\int_{\R^d} dv \ f_{st}(v)|v|^M<\infty$.
\end{theorem}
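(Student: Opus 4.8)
\medskip
\noindent The plan is to transfer the whole statement to the Fourier side and argue entirely with characteristic functions, extending the framework of \cite{BCG09} to the non-radial case $A\neq0$. With $\varphi=\hat f$, the four tasks are: global well-posedness of \eqref{eq:incdtFourier}--\eqref{eq:BoltFourier2} in $C([0,\infty);\Phi)$; existence of a self-similar profile $\Psi\in\Phi$ of the form \eqref{eq:ssprofF}; the convergence \eqref{eq:cvffipsi}, up to the dilation $\lambda$ and the translation $e^{-tA^{T}}U$; and finiteness of moments of the measure $f_{st}$ with $\hat f_{st}=\Psi$. The translation and the dilation are forced by the only non-universal data: the first moment $m(t)=\int_{\R^{d}}v\,f(v,t)\,dv$ solves the linear ODE whose solution is $e^{-tA^{T}}U$ with $U=\int_{\R^{d}}v\,f_{0}(v)\,dv$ (collisions conserve momentum), and the translation $v\mapsto v+e^{-tA^{T}}U$ is an exact symmetry of \eqref{eq:GenBolt}; removing it --- which is precisely the shift in the theorem --- one may assume $f(\cdot,t)$ centered, and then the covariance $M(t)=\mathrm{Cov}(f(\cdot,t))$ solves, by the Maxwell-molecule moment closure, the closed linear ODE $\dot M=\mathcal{L}_{A}[M]$ with $\mathcal{L}_{A}[S]:=(\Lambda_{2}-\mathrm{Id})[S]-(A^{T}S+SA)$, $\Lambda_{2}$ being the explicit action of the gain term on quadratic forms. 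Well-posedness in $\Phi$ (Theorem~4.2 of \cite{JNV}) is recovered by integrating \eqref{eq:BoltFourier2} along the characteristics $k\mapsto e^{sA}k$, giving $\varphi(e^{tA}k_{0},t)=e^{-t}\varphi_{0}(k_{0})+\int_{0}^{t}e^{-(t-s)}\mathcal{I}^{+}(\varphi,\varphi)(e^{sA}k_{0},s)\,ds$, and running a contraction on $C([0,T];\Phi)$, using that $\mathcal{I}^{+}$ preserves $\Phi$ (being the Fourier transform of the gain operator $Q^{+}$, which preserves probability measures), is $2$-Lipschitz in $\|\cdot\|_{\infty}$ there, and obeys $\|\varphi(\cdot,t)\|_{\infty}\le1$, so that a short-time fixed point extends to a unique global solution.

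For the profile I would first \emph{define} $\beta=\beta(A)$ spectrally. The operator $\mathcal{L}_{0}=\Lambda_{2}-\mathrm{Id}$ has a simple eigenvalue $0$ with positive-definite eigenmatrix $\mathrm{Id}$ (energy conservation) and strictly negative spectrum on trace-free symmetric matrices; hence, by analytic perturbation theory, for $\|A\|$ small $\mathcal{L}_{A}$ has a unique real eigenvalue $2\beta(A)=O(\|A\|)$ bifurcating from $0$, with positive-definite eigenmatrix $\mathcal{M}_{A}=\mathrm{Id}+O(\|A\|)$. In the self-similar variables $\psi(\xi,\tau)=\varphi(e^{-\beta\tau}\xi,\tau)$ equation \eqref{eq:BoltFourier2} becomes $\partial_{\tau}\psi+\psi+\big((\beta\,\mathrm{Id}+A)\xi\big)\cdot\partial_{\xi}\psi=\mathcal{I}^{+}(\psi,\psi)$, and a steady state $\Psi$ is a fixed point of $\mathcal{T}[\Psi](\xi)=\int_{0}^{\infty}e^{-s}\,\mathcal{I}^{+}(\Psi,\Psi)\big(e^{-s(\beta\,\mathrm{Id}+A)}\xi\big)\,ds$, which is well defined on $\Phi$ since $\|\mathcal{I}^{+}(\Psi,\Psi)\|_{\infty}\le1$. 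Writing $\Psi=1-\tfrac12\,\xi^{T}\mathcal{M}_{A}\xi+h$ with $h$ vanishing to order $>2$ at the origin, the linearization $\big((\beta\,\mathrm{Id}+A)\xi\big)\cdot\partial_{\xi}+\mathrm{Id}-\Lambda$ has, on homogeneous pieces of degree $n\ge3$, eigenvalues $n\beta+1-\mu(n)+O(\|A\|)$ with $\mu(n)<1$, where $\mu(p)=\int_{S^{d-1}}g(\hat k\cdot n)\big[\big(\tfrac{1+\hat k\cdot n}{2}\big)^{p/2}+\big(\tfrac{1-\hat k\cdot n}{2}\big)^{p/2}\big]\,dn$ is the symbol of $\Lambda$ on $|\xi|^{p}$; it is therefore boundedly invertible for $\|A\|$ small, and running the fixed-point argument (as in \cite{JNV}, \cite{BCG09}, now inside $\Phi$) in the weighted norm $\|h\|_{2+\delta}:=\sup_{\xi}|h(\xi)|/|\xi|^{2+\delta}$ with $\delta=\min(p-2,2)\in(0,2]$ yields a self-similar $\Psi\in\Phi$. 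Finiteness of $\|h\|_{2+\delta}$ is precisely the assertion that $f_{st}$ has finite moment of order $\min(p,4)\in(2,4]$; normalizing $f_{st}$ to have covariance $\mathcal{M}_{A}$ fixes the representative, whose uniqueness follows a posteriori from the convergence in the next step.

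The heart of the theorem --- and, I expect, the main obstacle --- is the stability statement \eqref{eq:cvffipsi}. I would write the (centered) solution in self-similar variables as $\psi(\xi,\tau)=\Psi_{\lambda}(\xi)+w(\xi,\tau)$ with $\Psi_{\lambda}(\xi)=\Psi(\lambda\xi)$, where $\lambda=\lambda(f_{0},A)>0$ is chosen so that covariances match in the limit: since $2\beta$ is the strictly dominant eigenvalue of $\mathcal{L}_{A}$, one has $e^{-2\beta t}M(t)\to c\,\mathcal{M}_{A}$ exponentially fast, $c>0$ being the projection of $\mathrm{Cov}(f_{0})$ onto the critical direction, and one sets $\lambda=\sqrt{c}$; undoing the recentering reinstates the shift $e^{-tA^{T}}U$. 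Then $w$ solves $\partial_{\tau}w=\mathcal{L}^{\mathrm{lin}}_{\Psi}w+\mathcal{I}^{+}(w,w)+(\text{exponentially small source})$, where $\mathcal{L}^{\mathrm{lin}}_{\Psi}$ has a spectral gap once the neutral mass and scaling modes are removed (mass is pinned by $w(0)=0$, the scaling mode by the choice of $\lambda$, and the translation modes have already been quotiented out); combining this with a uniform-in-time bound on the moment of order $2+\delta$ of the rescaled solution --- propagated through the Maxwell-molecule moment inequalities, which is where the hypothesis $\int_{\R^{d}}|v|^{p}f_{0}<\infty$ enters --- one closes a Gronwall estimate for $\|w(\cdot,\tau)\|_{2+\delta}$ and deduces $\|w(\cdot,\tau)\|_{\infty}\to0$ exponentially. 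By L\'evy's continuity theorem this gives \eqref{eq:cvffipsi}, and inverting the Fourier transform yields the asserted weak convergence. The delicate point is that the spectral gap of $\mathcal{L}^{\mathrm{lin}}_{\Psi}$ only governs the dynamics near $\Psi_{\lambda}$, whereas the initial datum is arbitrary; one must therefore either first propagate enough moments and compactness to enter a neighborhood of the self-similar family in finite time, or --- in the Fourier-metric tradition for Maxwell molecules --- show directly that the rescaled distance $\sup_{\xi}|\psi(\xi,\tau)-\Psi_{\lambda}(\xi)|/|\xi|^{2+\delta}$ is contracted by the flow, up to the exponentially small covariance mismatch, the transport term $\big((\beta\,\mathrm{Id}+A)\xi\big)\cdot\partial_{\xi}$ being kept under control by the smallness of $\|A\|$.

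The last assertion is a bootstrap of the profile construction. Given $M\in\mathbb{N}$, the linearization $\big((\beta\,\mathrm{Id}+A)\xi\big)\cdot\partial_{\xi}+\mathrm{Id}-\Lambda$ stays boundedly invertible on homogeneous pieces of all degrees $n\le M$ as soon as $n\beta(A)+1-\mu(n)>0$ for $3\le n\le M$, which holds once $|\beta(A)|=O(\|A\|)$ is sufficiently small, i.e.\ for $\|A\|\le\ep_{M}$ with some $\ep_{M}\in(0,\ep_{0})$; running the construction of $\Psi$ in the corresponding higher-order weighted norm then shows $\hat f_{st}=\Psi$ is $C^{M}$ at the origin with controlled remainder, which forces $\int_{\R^{d}}f_{st}(v)\,|v|^{M}\,dv<\infty$ --- equivalently, $f_{st}$ has an algebraic tail whose exponent tends to $\infty$ as $\|A\|\to0$ (and faster-than-algebraic decay when $\beta(A)\ge0$); passing to the next even integer disposes of odd $M$.
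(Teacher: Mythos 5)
Your overall architecture (Fourier side, well-posedness by contraction in $\Phi$, spectral definition of $\beta$ from the second-moment matrix equation, fixed point for the profile in a weighted space, bootstrap of higher moments via invertibility of $I-\mathcal{L}_{\ell}+(A_\beta k)\cdot\partial_k$ on homogeneous polynomials) coincides with the paper's. But the heart of the theorem — the convergence for \emph{arbitrary} initial data with finite $p$-moment — is exactly the step you leave open. Your proposed mechanism, a spectral gap of the linearization $\mathcal{L}^{\mathrm{lin}}_{\Psi}$ around $\Psi_\lambda$ plus a Gronwall estimate against the quadratic term $\mathcal{I}^{+}(w,w)$, is intrinsically local: it closes only if $\|w\|$ is small initially, and you yourself flag that one must "either first propagate enough moments and compactness to enter a neighborhood ... or show directly that the rescaled distance is contracted", without executing either route. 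The paper's resolution is not a spectral gap around $\Psi$ at all, but a \emph{global} comparison principle: the pointwise $\mathcal{L}$-Lipschitz property of Lemma \ref{lm:bdI}, $|\Gamma[\vphi_1]-\Gamma[\vphi_2]|(k)\le\mathcal{L}[|\vphi_1-\vphi_2|](k)$ (valid for all characteristic functions, since $|\vphi_i|\le1$ everywhere), which via Theorem \ref{thm:2} dominates the difference of \emph{any} two solutions by the linear semigroup $\exp[(\mathcal{L}-I-\hat D)t]$, and the explicit super-solution $u_p(k,t)=|k|^p e^{-(\lambda(p)-p\|A\|)t}$ of Lemma \ref{lem:bdup}, yielding Theorem \ref{thm:3}: any two solutions tangent to order $p>2$ at $k=0$ contract exponentially, with no closeness assumption. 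This is then applied not to the pair $(\vphi,\Psi(\lambda\cdot))$ directly (their Hessians at $k=0$ differ at finite times), but to $(\vphi,\tilde\vphi)$ with the time-dependent Gaussian $\tilde\vphi(k,t)=e^{-\frac12 B(t):k\otimes k}$ matching the covariance of $\vphi$ for all $t$ (Lemmas \ref{lem:approx}, \ref{lem:ffiffitilde}), combined with the exponential ODE convergence $B(t)\to\lambda^2N$ of Lemma \ref{lem:asym} and a two-stage time argument in the proof of Theorem \ref{th:main2}. Without this global device (or an equivalent one), your sketch proves at best local asymptotic stability of the self-similar profile, not the stated theorem.

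A secondary gap concerns nonemptiness of the spaces in which you run fixed points. For the profile you need at least one characteristic function whose quadratic Taylor part is exactly $-\tfrac12\,\xi^{T}\mathcal{M}_A\xi$ and which solves the equation up to $O(|k|^4)$ — the paper supplies the Gaussian $\Psi_0$ of Lemma \ref{lem:StaSol}, and this approximate-solution estimate is what lets the map $\mathcal{S}$ send the weighted class into itself; for the moment bootstrap the analogous issue is genuinely nontrivial, since the formal moments $Q_\ell$ produced by inverting the hierarchy must be realized by an actual probability measure before one can set up the fixed point in $\mathcal{F}_{p,M}(N)$. The paper proves this by perturbing a Maxwellian with truncated Hermite polynomials and using $\|A\|\le\ep_M$ to keep the perturbed density nonnegative (Lemma \ref{lem:lem2s9}); your bootstrap paragraph presumes this step. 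Also, your parenthetical "equivalently, $f_{st}$ has an algebraic tail" overstates what finiteness of the $M$-th moment gives and is not claimed (nor known) in this setting; it should be dropped.
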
 

More details concerning the  specific form of the measure $f_{st}$, the vector $U$ and $\lambda$ will be given in Theorem \ref{th:exssf} and Theorem \ref{thm:Main_f}.

Notice that Theorem \ref{thm} implies a global stability result for the self-similar solutions $f_{st}$. Furthermore, we remark that since the  detailed-balance does not hold in general  for \eqref{eq:GenBolt}, the proof cannot rely only on entropy arguments, as those used in \cite{ToVi} in the case $A=0$.  The argument used in this paper to prove Theorem \ref{thm} is a perturbative argument which strongly relies on the particular form of the collision operator for Maxwellian molecules. \medskip

The plan of the paper is the following. In Section \ref{sec:3} we recall the notion of $\mathcal{L}$-Lipschitzianity for the operator $\mathcal{I}^{+}$, which has been introduced in \cite{BCG09}.  In Section \ref{sec:4} we prove a well-posedness result for the Cauchy problem \eqref{eq:incdtFourier}, \eqref{eq:BoltFourier2}. Moreover, we also prove a fundamental Comparison Theorem which allows to estimate the difference of two solutions to  \eqref{eq:incdtFourier}, \eqref{eq:BoltFourier2}. This Theorem is used in Section \ref{sec:5} to show that two solutions  to  \eqref{eq:incdtFourier}, \eqref{eq:BoltFourier2}, with a sufficiently high order of tangency at $k=0$, approach to each other exponentially fast. In Section \ref{sec:6} we collect several properties of the self-similar solutions which are expected to yield the long-time asymptotics of solutions to   \eqref{eq:incdtFourier}, \eqref{eq:BoltFourier2}. In particular, we show that using a suitable change of variables, the problem can be reduced to the analysis of the steady states for  \eqref{eq:BoltFourier2} with particular choices of the matrix $A$.  The existence and uniqueness of these steady states is proved in Section \ref{sec:7} and their stability is obtained in Section \ref{sec:8}. Moreover, in Section  \ref{sec:9}, we prove that the steady state has moments of any given order $M>2$ under a smallness assumption on $\|A\|$, namely $\|A\| \leq \ep_{M}$.
 In Section \ref{sec:10} we reformulate the results obtained for characteristic functions $\vphi$ in terms of the probability measures $f$. 
 In Section \ref{sec:11} we summarize the main results obtained in this paper and we discuss some further perspectives.  In Appendix A we derive the matrix equation for the second order moments which is used in Sections \ref{sec:6}-\ref{sec:8}.

\section{$\mathcal{L}-$Lipschitzianity}\label{sec:3}

We will  assume in the rest of the paper the following assumptions for the collision kernel $g(\eta),$ $\eta\in[-1,1]$, 
 associated to the Boltzmann operator  \eqref{eq:CollBolt}: 
 \be\label{ass:kernel}
 g:[-1,1]\rightarrow \R_{+}, \;\; g\; \text{measurable} \quad \text{and} \quad \int_{S^{d-1}} dn \ g(e\cdot n)<\infty\;\; \text{with}\; e=(1,0,0,\dots) .
 \ee
 Notice that the invariance of the scalar product under the orthogonal group implies that $\int_{S^{d-1}} g(\omega \cdot n)dn=\int_{S^{d-1}} g(e\cdot n)dn$ for any $\omega\in S^{d-1}$. We will assume also the normalization condition \eqref{eq:kernel}. 
 
We now prove a crucial regularity property of the quadratic operator $\mathcal{I}^{+}(\vphi,\vphi)$ defined in \eqref{eq:BoltCollFourier} acting on the space $ C(\R^d;\mathbb{C})$. 

We first introduce the linearized operator 
\be \label{eq:linBolzF}
\mathcal{L}\big[\vphi\big](k)=\mathcal{I}^{+}(\vphi,1)+\mathcal{I}^{+}(1,\vphi)=  \int_{S^{d-1}}dn \ g\left(  \hat{k} \cdot n \right)\left[\vphi(k_{+})+ \vphi (k_{-})\right]
\ee
where $k_{\pm}$ and $\hat{k}$ are as in \eqref{eq:BoltCollFourier}.  Notice that the operator $\mathcal{I}^{+}$ considered as a bilinear form is non symmetric. 

We then collect the main properties of the operator $\mathcal{I}^{+}$ in the following lemma which is a straightforward adaptation of the one proved in \cite{BCG09}.

\begin{lemma}\label{lm:bdI}
Suppose that $g$ satisfies \eqref{eq:kernel} and \eqref{ass:kernel}. Then \eqref{eq:BoltCollFourier} defines an operator $\mathcal{I}^{+}: \mathcal{U}\times\mathcal{U} \rightarrow\mathcal{U} $ where $\mathcal{U}$ is the subset of $C(\R^d;\C)$ defined in \eqref{def:setU}. Analogously, $\mathcal{L}$ in \eqref{eq:linBolzF} defines an operator $\mathcal{L}: C(\R^d;\C)\rightarrow C(\R^d;\C)$. Moreover, for any $\vphi_j(\cdot) \in C(\R^d;\mathbb{C})$, $j=1,2$ such that $\vert| \vphi_j\vert|_{\infty}\leq 1$ the following estimate holds
\be \label{eq:est}
\vert \mathcal{I}^{+}(\vphi_1,\vphi_1)-\mathcal{I}^{+}(\vphi_2,\vphi_2)\vert (k) \leq \mathcal{L} \big[\vert \vphi_1-\vphi_2\vert \big] (k),
\ee
for any $k\in \R^d$.
\end{lemma}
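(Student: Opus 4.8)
The plan is to verify the three claims of the lemma in turn, the first two being essentially bookkeeping and the third, the estimate \eqref{eq:est}, being the heart of the matter. For the mapping property of $\mathcal{I}^{+}$, I would first check that $\mathcal{I}^{+}(\vphi_1,\vphi_2)$ is continuous in $k$: the integrand $g(\hat k\cdot n)\vphi_1(k_+)\vphi_2(k_-)$ depends continuously on $k$ away from $k=0$ since $k_\pm$ and $\hat k$ do, and near $k=0$ one has $k_\pm\to 0$ so that $\vphi_1(k_+)\vphi_2(k_-)\to\vphi_1(0)\vphi_2(0)$, with the normalization \eqref{eq:kernel} ensuring the integral converges to the right value; dominated convergence (using the integrable bound $g(e\cdot n)$ from \eqref{ass:kernel}) then gives continuity at the origin as well. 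The bound $\|\mathcal{I}^{+}(\vphi_1,\vphi_2)\|_\infty\le 1$ follows by pulling absolute values inside and using $\|\vphi_j\|_\infty\le 1$ together with \eqref{eq:kernel}, and evaluating at $k=0$ gives $\mathcal{I}^{+}(\psi,\psi)(0)=\psi(0)^2=1$ for $\psi\in\mathcal{U}$, so indeed $\mathcal{I}^{+}:\mathcal{U}\times\mathcal{U}\to\mathcal{U}$. The same continuity argument applied to $\mathcal{L}$, without the normalization constraint, shows $\mathcal{L}:C(\R^d;\C)\to C(\R^d;\C)$, using only \eqref{ass:kernel} for the integrability of $g$.

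The estimate \eqref{eq:est} is proved by the standard bilinear-telescoping trick. Writing
\[
\mathcal{I}^{+}(\vphi_1,\vphi_1)-\mathcal{I}^{+}(\vphi_2,\vphi_2)
=\mathcal{I}^{+}(\vphi_1-\vphi_2,\vphi_1)+\mathcal{I}^{+}(\vphi_2,\vphi_1-\vphi_2),
\]
one estimates pointwise in $k$: inside the integral defining the first term, $|g(\hat k\cdot n)(\vphi_1-\vphi_2)(k_+)\vphi_1(k_-)|\le g(\hat k\cdot n)|(\vphi_1-\vphi_2)(k_+)|$ since $\|\vphi_1\|_\infty\le 1$, and similarly for the second term we bound by $g(\hat k\cdot n)|(\vphi_1-\vphi_2)(k_-)|$ using $\|\vphi_2\|_\infty\le 1$. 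Adding the two contributions and recalling the definition \eqref{eq:linBolzF} of $\mathcal{L}$ acting on the nonnegative function $|\vphi_1-\vphi_2|$ gives exactly the right-hand side of \eqref{eq:est}. The key structural point making this work is that $g\ge 0$, so that $\mathcal{L}$ is a positivity-preserving linear operator and the triangle inequality inside the integral lands precisely on $\mathcal{L}[|\vphi_1-\vphi_2|]$.

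The only mild subtlety I anticipate is the behavior at $k=0$, where $\hat k$ is undefined: one must check that the integrals are genuinely independent of the (arbitrary) value assigned to $\hat k$ at the origin, which holds because at $k=0$ we have $k_\pm=0$ and the integrand reduces to $g(\hat k\cdot n)\vphi_j(0)\vphi_\ell(0)$, whose integral over $S^{d-1}$ is $\vphi_j(0)\vphi_\ell(0)$ by \eqref{eq:kernel} regardless of the direction chosen — so $k=0$ is not a genuine obstacle but merely requires a remark. Apart from that, the proof is a direct transcription of the Maxwellian-molecule argument of \cite{BCG09} to the $d$-dimensional, non-radial setting, the point being that neither the continuity argument nor the telescoping estimate uses radial symmetry of $\vphi$ in any way; the radial case and the general case are handled by literally the same computation. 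I would therefore present the proof compactly, spending most of the space on the continuity verification at the origin and on displaying the telescoping identity, and dispatch the bound \eqref{eq:est} in a couple of lines.
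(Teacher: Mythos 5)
Your proposal is correct, and the only place it departs from the paper is in the algebraic decomposition used for \eqref{eq:est}. For the mapping properties your argument is the same as the paper's: the only delicate point is continuity at $k=0$, handled by $k_{\pm}\to 0$ together with the fact that $\int_{S^{d-1}}g(\hat k\cdot n)\,dn$ does not depend on $\hat k$ (by \eqref{eq:kernel}), plus the bound $\|\mathcal{I}^{+}(\vphi_1,\vphi_2)\|_{\infty}\le\|\vphi_1\|_{\infty}\|\vphi_2\|_{\infty}$ and $\mathcal{I}^{+}(\vphi,\vphi)(0)=1$; like the paper, you treat continuity for $k\neq 0$ briefly, which with $g$ merely measurable strictly requires fixing the direction $\hat k$ by a rotation and applying dominated convergence, but this is the same level of detail as the published proof. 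For the estimate, you use the telescoping identity $\mathcal{I}^{+}(\vphi_1,\vphi_1)-\mathcal{I}^{+}(\vphi_2,\vphi_2)=\mathcal{I}^{+}(\vphi_1-\vphi_2,\vphi_1)+\mathcal{I}^{+}(\vphi_2,\vphi_1-\vphi_2)$, whereas the paper uses a polarization-type identity in terms of $\vphi_1+\vphi_2$ and $\vphi_1-\vphi_2$ and then bounds $|\vphi_1+\vphi_2|\le 2$; both routes reduce to the same pointwise triangle inequality inside the integral, using $g\ge 0$ and $\|\vphi_j\|_{\infty}\le 1$, and land exactly on $\mathcal{L}[|\vphi_1-\vphi_2|]$ as defined in \eqref{eq:linBolzF}. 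Your decomposition is in fact slightly cleaner here: since $\mathcal{I}^{+}$ is a non-symmetric bilinear form, the correct polarization identity must use mixed arguments, namely $\tfrac12\big[\mathcal{I}^{+}(\vphi_1+\vphi_2,\vphi_1-\vphi_2)+\mathcal{I}^{+}(\vphi_1-\vphi_2,\vphi_1+\vphi_2)\big]$ (the identity as displayed in the paper appears to be a misprint of this), and the telescoping form avoids both that issue and the factor-of-two bookkeeping, producing directly the two contributions $|\vphi_1-\vphi_2|(k_{+})$ and $|\vphi_1-\vphi_2|(k_{-})$ whose sum is the right-hand side of \eqref{eq:est}.
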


\begin{proof} We first observe that \eqref{eq:BoltCollFourier} and \eqref{eq:linBolzF} define a mapping from the space of continuous functions to itself. The only difficulty is to check the continuity at $k=0$ since $\hat{k}$ is not well-defined for $k=0$. However, the operators can be given a meaning in the sense of limit as $k\to 0$ using the fact that then $\vphi(k_{\pm})$ can be approximated by $\vphi(0)$ and the integral 
$\int_{S^{d-1}}dn \ g (  \hat{k} \cdot n )$ is independent on $\hat{k}$. 

In order to prove that $\mathcal{I}^{+}$ defines a quadratic operator from $ \mathcal{U}\times\mathcal{U} $ into $ \mathcal{U}$ we first notice that
\be \label{eq:estI}
\vert| \mathcal{I}^{+}(\vphi_1,\vphi_2) \vert|_{\infty} \leq \vert| \vphi_1 \vert|_{\infty} \vert| \vphi_2\vert|_{\infty} =1 \quad \text{for any } \ \vphi_1,\vphi_2\in \mathcal{U}. %^2 .
\ee
In order to prove that the operator  $\mathcal{I}^{+}(\vphi,\vphi)$ maps the set $\mathcal{U}$ 
into itself it only remains to prove that $\mathcal{I}^{+}(\vphi,\vphi)(0)=1$. This immediately follows from \eqref{eq:BoltCollFourier} using the fact that $\vphi(0)=1$. 

In order to prove \eqref{eq:est} we argue as follows. Given $\vphi_1,\vphi_2\in \mathcal{U}$ we have  
$$\mathcal{I}^{+}(\vphi_1,\vphi_1)-\mathcal{I}^{+}(\vphi_2,\vphi_2)=\frac 1 2 \left[ \mathcal{I}^{+}(\vphi_1+\vphi_2,\vphi_1+\vphi_2)-\mathcal{I}^{+}(\vphi_1-\vphi_2,\vphi_1-\vphi_2)\right].$$
Using that $\vert\vphi_1+\vphi_2 \vert \leq 2$ as well as $\pm (\vphi_1-\vphi_2)\leq \vert \vphi_1-\vphi_2 \vert $  we obtain, using also \eqref{eq:linBolzF}, 
$$\vert \mathcal{I}^{+}(\vphi_1,\vphi_1)-\mathcal{I}^{+}(\vphi_2,\vphi_2)\vert (k)\leq 
\mathcal{L} \big[\vert \vphi_1-\vphi_2\vert \big] (k)$$
for any $k\in\R^d$, whence \eqref{eq:est} follows.
\end{proof}

The class of operators satisfying \eqref{eq:est} were termed as ``$\mathcal{L}-$Lipschitz operators" in \cite{BCG09} due to the fact that they satisfy a Lipschitz condition with respect to the linear operator $\mathcal{L}$ instead of with respect to a norm. Notice that the condition \eqref{eq:est} is a pointwise condition for any $k\in\R^d$.  
This condition is much stronger than the classical Lipschitz condition 
\be \label{eq:ILip}
\vert| \mathcal{I}^{+}(\vphi_1,\vphi_1)-\mathcal{I}^{+}(\vphi_2,\vphi_2)\vert|_{\infty} \leq 2 \vert| \vphi_1-\vphi_2\vert|_{\infty},
\ee
which follows in a straightforward way from \eqref{eq:est}. 
We shall see below that the property \eqref{eq:est} will play a crucial role in the proof of the results presented in Section \ref{sec:2}.

We now state an additional property of the operator $\mathcal{I}^{+}$ that will be useful in order to prove that the solutions $\vphi$ of  \eqref{eq:incdtFourier}, \eqref{eq:BoltFourier2}  are characteristic functions for every $t\in R_{+}$. 
\begin{lemma}\label{eq:I+map}
Let $\Phi$ and  $\mathcal{I}^{+}$ be as in \eqref{def:setffi} and \eqref{eq:BoltCollFourier} respectively. Then, for any $\vphi\in \Phi$ we have 
$\mathcal{I}^{+}(\vphi,\vphi)\in\Phi$.
\end{lemma}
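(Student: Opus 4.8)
\textbf{Proof proposal for Lemma \ref{eq:I+map}.}

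The plan is to show that $\mathcal{I}^{+}(\vphi,\vphi)$ is the Fourier transform of a probability measure whenever $\vphi=\hat f$ for some $f\in\mathcal{P}_{+}(\R^d)$. The natural route is to rewrite $\mathcal{I}^{+}(\vphi,\vphi)$ directly as the characteristic function of an explicit probability measure built from $f$. Recall from \eqref{eq:BoltCollFourier} that
\be
\mathcal{I}^{+}(\vphi,\vphi)(k)=\int_{S^{d-1}}dn\ g(\hat k\cdot n)\,\vphi(k_{+})\vphi(k_{-}),\qquad k_{\pm}=\tfrac12\bigl(k\pm|k|n\bigr).
\ee
Since $\vphi=\hat f$, we have $\vphi(k_{+})\vphi(k_{-})=\widehat{f\otimes f}(k_{+},k_{-})=\int_{\R^d}\int_{\R^d}f(v)f(w)\,e^{-i(k_{+}\cdot v+k_{-}\cdot w)}\,dv\,dw$. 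Writing $k_{+}\cdot v+k_{-}\cdot w=\tfrac12 k\cdot(v+w)+\tfrac12|k|\,n\cdot(v-w)$ and comparing with the pre-post collision relations in Section \ref{sec:1}, this equals $k\cdot v'$-type combinations; more precisely $k_{+}\cdot v+k_{-}\cdot w = k\cdot v''$ where $v'',w''$ are the post-collisional velocities associated with the incoming pair $(v,w)$ and the unit vector $n$ (the same rule as for $v',w'$ but in $k$-space, which is exactly the self-adjointness of the collision transform under Fourier that underlies the Bobylev identity). Hence
\be
\mathcal{I}^{+}(\vphi,\vphi)(k)=\int_{\R^d}\int_{\R^d}f(v)f(w)\left[\int_{S^{d-1}}g(\hat k\cdot n)\,e^{-ik\cdot v''}\,dn\right]dv\,dw,
\ee
and one must massage the inner integral so that the weight $g$ depends on $\widehat{(v-w)}\cdot n$ rather than on $\hat k\cdot n$; this is the standard consequence of the rotational invariance already used in \eqref{eq:kernel} and \eqref{ass:kernel}, exploiting that only $|k|\,n\cdot(v-w)$ enters the exponent together with $g(\hat k\cdot n)$, and that the integrand is a function of $k$ and $n$ through rotation-covariant combinations. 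After this rewriting, $\mathcal{I}^{+}(\vphi,\vphi)(k)=\int e^{-ik\cdot v''}\,dQ^{+}(v'')$ where $Q^{+}$ is the push-forward under $(v,w,n)\mapsto v''$ of the measure $f(dv)\,f(dw)\,g(\widehat{(v-w)}\cdot n)\,dn$; by the normalization \eqref{eq:kernel} and the fact that $f$ is a probability measure, $Q^{+}$ is a probability measure, hence $\mathcal{I}^{+}(\vphi,\vphi)\in\Phi$.

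An alternative, softer argument avoids the explicit change of variables: use Lemma \ref{lm:bdI} to know $\mathcal{I}^{+}$ maps $\mathcal{U}\times\mathcal{U}$ to $\mathcal{U}$, then verify positive-definiteness of $\mathcal{I}^{+}(\vphi,\vphi)$ in the sense of Bochner. For finitely many points $k_1,\dots,k_m$ and complex $c_1,\dots,c_m$, one has $\sum_{a,b}c_a\bar c_b\,\mathcal{I}^{+}(\vphi,\vphi)(k_a-k_b)$; the obstacle here is that $k_a-k_b$ does not interact nicely with the map $k\mapsto k_{\pm}$, which is nonlinear in $k$ through $|k|$, so the quadratic form does not obviously split. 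For this reason I expect the explicit representation above to be the cleaner path, and I would present that one. A further simplification: since $\Phi$ is closed under uniform convergence and convex, it suffices to prove the claim for $\vphi$ corresponding to a convex combination of point masses (empirical measures), i.e. $f=\sum_j p_j\delta_{v_j}$; then $\vphi(k_+)\vphi(k_-)=\sum_{j,\ell}p_jp_\ell e^{-i(k_+\cdot v_j+k_-\cdot v_\ell)}$ and each term integrates against $g(\hat k\cdot n)\,dn$ to the characteristic function of a single push-forward, which is manifestly in $\Phi$; passing to the limit gives the general case.

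The main obstacle is the bookkeeping in the change of variables: one must carefully check that the map $(v,w,n)\mapsto v''$ (post-collisional velocity read off in $k$-space) together with the weight $g(\hat k\cdot n)$ can be re-expressed, after integrating over $n\in S^{d-1}$, as integration against $g$ evaluated at $\widehat{(v-w)}\cdot n$, using only the orthogonal invariance noted after \eqref{ass:kernel}. This is precisely the content of the Bobylev Fourier identity derived in \cite{Bo75,Bo88}, so in the write-up I would cite that identity and reduce the lemma to the statement that the collision transform $(v,w)\mapsto (v',w')$ pushes $f\otimes f$, averaged over $n$ with density $g$, to a probability measure in each marginal — which is immediate because the transform is measure-preserving on $\R^d\times\R^d$ for each fixed $n$ and $g(\cdot)\,dn$ is a probability measure on $S^{d-1}$ by \eqref{eq:kernel}. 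Continuity of the resulting characteristic function at $k=0$ has already been handled in the proof of Lemma \ref{lm:bdI}.
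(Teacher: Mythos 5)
Your main route is in substance the same as the paper's: both proofs rest on the Bobylev identity identifying $\mathcal{I}^{+}(\vphi,\vphi)$ with the Fourier transform of the gain term $Q^{+}(f,f)$, together with the observation that $Q^{+}(f,f)$ is the push-forward of the probability measure $f(dv)\,f(dw)\,g(\hat u\cdot n)\,dn$ under $(v,w,n)\mapsto v'$ and hence itself a probability measure. The paper states this backwards — it inverts the Fourier transform (its identity \eqref{eq:idFourier}, with the derivation of the Bobylev formula deferred to \cite{Bo75}, \cite{Bo88}) and then checks non-negativity and total mass one — while you derive the identity in the forward direction; this is the same mathematical content.

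Two statements in your write-up should be repaired, though neither destroys the argument. First, the claim $k_{+}\cdot v+k_{-}\cdot w=k\cdot v''$ with $v''$ a post-collisional velocity is false pointwise: the left side equals $\tfrac12 k\cdot(v+w)+\tfrac12|k|\,n\cdot(v-w)$, whereas $k\cdot v'=\tfrac12 k\cdot(v+w)+\tfrac12|v-w|\,k\cdot n$. What is true is the equality of the two sphere integrals, $\int_{S^{d-1}}dn\,g(\hat k\cdot n)e^{-i(k_{+}\cdot v+k_{-}\cdot w)}=\int_{S^{d-1}}dn\,g(\hat u\cdot n)e^{-ik\cdot v'}$, which is precisely the rotational-invariance swap you invoke afterwards; phrase it as an identity of integrals, not of phases. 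Second, in the representation used in this paper the map $(v,w)\mapsto(v',w')$ is \emph{not} measure preserving for fixed $n$: writing $V=\tfrac12(v+w)$, $u=v-w$, it fixes $V$ and sends $u\mapsto|u|n$, whose Jacobian is singular, so the parenthetical justification at the end of your proposal fails as stated. The correct justification is either the joint involution $(v,w,n)\mapsto(v',w',\hat u)$, which preserves $dv\,dw\,dn$ and the kernel, or — simpler and already contained in your first route — the push-forward formulation, which needs no Jacobian at all since the total mass of $f(dv)\,f(dw)\,g(\hat u\cdot n)\,dn$ equals one by \eqref{eq:kernel}.
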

\begin{proof}
Since $\vphi\in \Phi$ we have that $\vphi=\hat{f}$ for some $f\in \mathcal{P}_{+}(\R^d)$. Then, the inverse Fourier transform of $\mathcal{I}^{+}$ is 
\be\label{eq:idFourier}
\frac {1}{(2\pi)^d} \int_{\R^d}dk \ \mathcal{I}^{+}(\vphi,\vphi)(k)\, e^{i k\cdot v} = \int_{\R^d}dw \int_{S^{d-1}} dn\ g( \hat{u}\cdot n)f(w') f(v')=\mu(v).
\ee
This identity has to be understood in the sense of distributions, acting on suitable test functions. More precisely, we have 
\bes
\frac {1}{(2\pi)^d} \int_{\R^d}dk \ \hat{\psi}(k) \mathcal{I}^{+}(\vphi,\vphi)(k)\, e^{i k\cdot v} = \int_{\R^d}dv \ \Psi(v)\mu(v),\quad \forall \; \psi\in \mathcal{S}(\R^d)
\ees
where $\mathcal{S}(\R^d)$ denotes the space of Schwartz functions on $\R^d$. 
Notice that the distribution $\mu$ defined in \eqref{eq:idFourier} is a a non-negative Radon measure.  Moreover, it is a probability measure since
\begin{align*}
\int_{\R^d}dv \int_{\R^d}dw \int_{S^{d-1}} dn\ g( \hat{u}\cdot n)f(w') f(v')&= \int_{\R^d}dv' f(v') \int_{\R^d}dw' f(w') \int_{S^{d-1}} dn\ g( \hat{u}\cdot n)\\&
 = \left(\int_{\R^d}dv f(v)\right)^2,
\end{align*}
where in the last identity we used \eqref{eq:kernel}. 
\end{proof}

\section{Existence and uniqueness of solutions. Comparison Theorem}  
\label{sec:4}

We reformulate the Cauchy problem  \eqref{eq:incdtFourier}, \eqref{eq:BoltFourier2} as an integral equation using Duhamel's formula, or equivalently, we rewrite  \eqref{eq:incdtFourier}, \eqref{eq:BoltFourier2} in mild form.  We will also use by shortness the following notation
\be \label{eq:notshort}
\Gamma\big[\vphi\big]= \mathcal{I}^{+}(\vphi ,\vphi),\qquad \hat{D}=\big(Ak\big)\cdot \partial_k .
\ee
Then  \eqref{eq:incdtFourier}, \eqref{eq:BoltFourier2} yields, formally, the following integral equation 
\be \label{eq:IntBoltFou}
\vphi (\cdot, t)= E(t)\vphi_0(\cdot) + \int_0^t d\tau \ E(t-\tau) \Gamma\big[\vphi (\cdot,\tau)\big]
\ee
where 
\be \label{eq:defE}
E(t)=\exp\big(-t (1+\hat{D}) \big).
\ee 
Notice that the linear operator $E(t)$ acts on any function $\psi\in C(\R^d;\C)$ in the following way:
\be \label{eq:opE}
E(t)\psi(k) = e^{-t} \psi \left(e^{-At}k\right), \quad k\in\R^d, \; t\in\R.
\ee
The concept of solutions to  \eqref{eq:incdtFourier}, \eqref{eq:BoltFourier2}  that we will use in this paper are functions  $\vphi \in C\left( [0,+\infty); \Phi\right)$ solving the integral equation \eqref{eq:IntBoltFou}.

We will need some auxiliary results. 
\begin{lemma}\label{lm:bdS}
Let $q=q(k)$ be a  continuous mapping $q: \R^d\to \R^d$. We define a linear operator  $S: C(\R^d;\C)\rightarrow C(\R^d;\C)$ by means of 
\be \label{eq:defSpsi}
S\psi(k)=\psi(q(k)), \quad \forall \; k\in\R^d, \quad \forall \;\psi\in C(\R^d;\C).
\ee
Then, for any $\vphi_{1,2}(k)\in \mathcal{U}$ such that $\|\vphi_{1,2}\|\leq 1$ we have
\be \label{eq:estS}
\vert S\Gamma\big[\vphi_1\big]-S\Gamma\big[\vphi_2\big] \vert  (k) \leq S \mathcal{L}\left[ \vert \vphi_{1}-\vphi_{2} \vert \right] (k),\quad \text{for any} \quad k\in\R^d .
\ee
\end{lemma}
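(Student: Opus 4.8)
The statement to prove is Lemma 4.2 (the one with label \texttt{lm:bdS}), which asserts that the composition of the ``evaluate at $q(k)$'' operator $S$ with $\Gamma$ inherits the $\mathcal{L}$-Lipschitz estimate pointwise.

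\medskip

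\textbf{Proof plan.} The key observation is that $S$ acts by pre-composition with the continuous map $q$, so it simply relabels the point at which all functions are evaluated; it does not mix values at different points or introduce new factors. Concretely, for any $\psi_1,\psi_2\in C(\R^d;\C)$ one has $|S\psi_1 - S\psi_2|(k) = |\psi_1(q(k)) - \psi_2(q(k))| = \big(|\psi_1-\psi_2|\big)(q(k)) = S\big[|\psi_1-\psi_2|\big](k)$, and more generally $S$ commutes with taking pointwise absolute values and preserves pointwise inequalities between non-negative functions. The plan is therefore: first, apply Lemma \ref{lm:bdI} (the estimate \eqref{eq:est}) to the pair $\vphi_1,\vphi_2\in\mathcal{U}$ with $\|\vphi_{1,2}\|\leq 1$, obtaining the pointwise bound
\[
\big|\Gamma[\vphi_1] - \Gamma[\vphi_2]\big|(\ell) \ \leq\ \mathcal{L}\big[|\vphi_1-\vphi_2|\big](\ell)\qquad\text{for every }\ell\in\R^d.
\]
Then evaluate this inequality at the particular point $\ell = q(k)$, which is legitimate because the inequality holds for \emph{all} $\ell\in\R^d$ and $q(k)\in\R^d$. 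This gives $\big|\Gamma[\vphi_1]-\Gamma[\vphi_2]\big|(q(k)) \leq \mathcal{L}\big[|\vphi_1-\vphi_2|\big](q(k))$. Finally, rewrite both sides using the definition \eqref{eq:defSpsi} of $S$: the left-hand side equals $\big|S\Gamma[\vphi_1] - S\Gamma[\vphi_2]\big|(k)$ since $S$ commutes with the absolute value, and the right-hand side equals $S\mathcal{L}\big[|\vphi_1-\vphi_2|\big](k)$. This is precisely \eqref{eq:estS}.

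\medskip

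\textbf{Secondary points to address.} One should note at the outset that $S$ indeed maps $C(\R^d;\C)$ into itself, which is immediate from continuity of $q$ and of $\psi$; this justifies writing $S\Gamma[\vphi_j]$ and $S\mathcal{L}[\,\cdot\,]$ as elements of $C(\R^d;\C)$. One should also observe that $S$ commutes with the operation $\psi\mapsto|\psi|$ — i.e. $|S\psi| = S|\psi|$ as functions — and that $S$ is order-preserving on real-valued functions: if $h_1\leq h_2$ pointwise then $S h_1\leq S h_2$ pointwise. These two elementary facts are all that is needed to transport the inequality \eqref{eq:est} through $S$.

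\medskip

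\textbf{Main obstacle.} There is essentially no analytic obstacle here: the lemma is a purely formal consequence of \eqref{eq:est} together with the fact that $S$ is evaluation along $q$, hence respects pointwise absolute values and pointwise inequalities. The only thing requiring a word of care is that the estimate \eqref{eq:est} from Lemma \ref{lm:bdI} is stated for arguments in $\mathcal{U}$ with $\|\cdot\|_\infty\leq 1$ (in fact $=1$ on $\mathcal{U}$, but the estimate is proved under $\|\vphi_j\|_\infty\leq 1$), so one must simply check that the hypotheses of the present lemma — $\vphi_{1,2}\in\mathcal{U}$, $\|\vphi_{1,2}\|\leq 1$ — match what Lemma \ref{lm:bdI} requires before invoking it. No smallness of $A$ and no structure of $q$ beyond continuity enters.
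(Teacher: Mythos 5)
Your proposal is correct and follows exactly the paper's own argument: apply the pointwise $\mathcal{L}$-Lipschitz estimate \eqref{eq:est} from Lemma \ref{lm:bdI}, evaluate it at the point $q(k)$, and use that $S$ (being pre-composition with $q$) commutes with absolute values and the linearity of $S$ to rewrite both sides, which is precisely the chain of identities in the paper's proof. No discrepancies to report.
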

\begin{proof}
Using the definition of the operator $S$ in \eqref{eq:defSpsi}, as well as \eqref{eq:est} in  Lemma \ref{lm:bdI}, we obtain
\begin{align*}
\vert S\Gamma\big[\vphi_1\big]-S\Gamma\big[\vphi_2\big] \vert  (k)&=\vert S \left(\Gamma\big[\vphi_1\big]-\Gamma\big[\vphi_2\big] \right)\vert  (k) = \vert \Gamma\big[\vphi_1\big]-\Gamma\big[\vphi_2\big]\vert  (q(k))\\& \leq \mathcal{L}\left[\vert \vphi_{1}-\vphi_{2} \vert\right](q(k)) = S \mathcal{L}\left[\vert \vphi_{1}-\vphi_{2}\vert\right] ( k)
\end{align*}
whence the result follows. 
\end{proof}
\begin{remark}
Notice that the function $q: \R^d\to \R^d$ in Lemma \eqref{lm:bdS} is continuous but not necessarily bounded. 
\end{remark}

We introduce some lemmas to prove directly that the equation is solved in the class of characteristic functions.
\begin{lemma}\label{lem:qBk}
Suppose that $q:\R^d\rightarrow \R^d$ is a linear mapping $q(k)=Bk$ with $B\in M_{d\times d}(\R)$ such that $\vert det(B)\vert\neq 0$ and $k\in\R^d$. Let the operator $S$ be as in \eqref{eq:defSpsi}. Then $S(\Phi)\subset\Phi$ with the set $\Phi$ as in \eqref{def:setffi}.
\end{lemma}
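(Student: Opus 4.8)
The plan is to produce, for each $\vphi\in\Phi$, a probability measure whose Fourier transform equals $S\vphi$. Since $\vphi\in\Phi$, write $\vphi=\hat f$ with $f\in\mathcal{P}_{+}(\R^d)$. I would introduce the linear map $T:\R^d\to\R^d$, $T(v)=B^{T}v$, which is continuous and, because $|\det(B)|\neq 0$, a homeomorphism; let $g$ be the pushforward of $f$ under $T$, i.e.\ the Borel measure defined by $g(E)=f\big(T^{-1}(E)\big)$ for every Borel set $E\subset\R^d$. Since $T$ is Borel measurable and $f$ is a Radon probability measure, $g$ is again a non-negative Radon measure with $g(\R^d)=f\big(T^{-1}(\R^d)\big)=f(\R^d)=1$, so $g\in\mathcal{P}_{+}(\R^d)$.

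The main step is then a direct computation of $\hat g$. Using the change-of-variables formula for pushforward measures together with the identity $k\cdot(B^{T}v)=(Bk)\cdot v$,
\begin{align*}
\hat g(k)&=\int_{\R^d} e^{-i k\cdot w}\,dg(w)=\int_{\R^d} e^{-i k\cdot(B^{T}v)}\,df(v)\\
&=\int_{\R^d} e^{-i (Bk)\cdot v}\,df(v)=\hat f(Bk).
\end{align*}
By \eqref{eq:defSpsi} with $q(k)=Bk$ the right-hand side is exactly $S\vphi(k)$; hence $S\vphi=\hat g$ with $g\in\mathcal{P}_{+}(\R^d)$, and by \eqref{def:setffi} this means $S\vphi\in\Phi$. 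Since $\vphi\in\Phi$ was arbitrary, $S(\Phi)\subset\Phi$.

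I do not expect a genuine obstacle here: this is simply the standard fact that a nonsingular linear substitution in the argument of a characteristic function corresponds to a linear pushforward of the underlying law. The only points requiring care are checking that $g$ is a probability measure (immediate from measurability of $T$ and $f(\R^d)=1$) and keeping track of the transpose in $k\cdot(B^{T}v)=(Bk)\cdot v$. In fact the inclusion $S(\Phi)\subset\Phi$ would hold for any $B\in M_{d\times d}(\R)$; the hypothesis $|\det(B)|\neq 0$ is used only to guarantee in addition that $T$ is bijective, so that $S$ restricts to a bijection of $\Phi$ onto itself whose inverse is associated with $q(k)=B^{-1}k$, which is what will be needed in the sequel.
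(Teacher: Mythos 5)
Your proof is correct. It follows the same underlying idea as the paper's proof---identify $S\vphi$ as the characteristic function of a linearly transformed measure, using $k\cdot(B^{T}v)=(Bk)\cdot v$---but you implement it differently: you define the new measure as the pushforward of $f$ under $v\mapsto B^{T}v$ and compute its Fourier transform by the transfer formula, whereas the paper performs the substitution $w=B^{T}v$ treating $f$ as a density and carries along the Jacobian factor $1/\det(B^{T})$, then verifies that the resulting density integrates to one. Your formulation is actually the more robust of the two: elements of $\mathcal{P}_{+}(\R^d)$ are general Radon probability measures (Dirac masses are explicitly relevant later in the paper), for which the paper's density-with-Jacobian computation is only formal, and the Jacobian factor should in any case carry $\vert\det(B^{T})\vert$ rather than $\det(B^{T})$; the pushforward argument sidesteps both issues. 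Your closing remark is also accurate: the inclusion $S(\Phi)\subset\Phi$ needs no invertibility of $B$ in your argument, while the paper's change of variables uses $\det(B)\neq 0$ essentially; the hypothesis matters only for the way $S$ is later applied with $q(k)=e^{-tA}k$, where the map is automatically invertible.
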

\begin{proof} 
We recall that $\vphi=\hat{f}$. We have
\begin{align*}
\left(S\vphi\right)(k)&=\vphi(q(k))=\vphi(Bk)=\int_{\R^d}dv \  f(v)e^{-i (Bk)\cdot v} \\&=\int_{\R^d} dv \ f(v)e^{-i k\cdot (B^T)v} =\int_{\R^d}  dw\ f((B^T)^{-1}w)e^{-i k\cdot w} det((B^T)^{-1})
\end{align*}
where in the last identity we used the change of variable $w=B^Tv$ as well as  $det((B^T)^{-1})=\frac{1}{det(B^T)}=\frac{1}{det(B)}$.
 
We need to prove that the Radon measure $g$ defined by means of $g=\frac{1}{det(B^T)} f\circ \tilde{q}$ is a probability measure, namely  $g\in \mathcal{P}_{+}(\R^d)$. 
This follows from the fact that 
\bes
\int_{\R^d} dw \ g(w)=\int_{\R^d} dw \ \frac{1}{det(B^T)} f((B^T)^{-1}w)=\frac{1}{det(B^T)}\int_{\R^d} dv \ f(v) det(B^T).
\ees
Hence $S(\Phi)\in \Phi$.
\end{proof}

\begin{lemma}\label{lem:Int}
Let be $\psi_{1},\ \psi_{2} \in C([0,\infty); \Phi)$. Then, for any $t\in [0,\infty)$ we have
\be 
h(\cdot, t)=e^{-t} \psi_1(\cdot,t)+\int_0^t d\tau e^{-(t-\tau)} \psi_2(\cdot,\tau) \in \Phi.
\ee
\end{lemma}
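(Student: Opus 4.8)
The plan is to exploit two structural facts about $\Phi$ already recorded above: it is convex, and it is closed in the topology of uniform convergence on $\R^d$. The starting observation is the elementary identity $e^{-t}+\int_0^t e^{-(t-\tau)}\,d\tau=1$, which exhibits $h(\cdot,t)$ as a ``continuous convex combination'' of the element $\psi_1(\cdot,t)\in\Phi$ and the family $\{\psi_2(\cdot,\tau)\}_{\tau\in[0,t]}\subset\Phi$. If we can realize $h(\cdot,t)$ as a uniform limit of genuine finite convex combinations of points of $\Phi$, the closedness of $\Phi$ finishes the argument.

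For $t=0$ there is nothing to prove, since $h(\cdot,0)=\psi_1(\cdot,0)\in\Phi$. Fix $t>0$ and a partition $0=\tau_0<\tau_1<\dots<\tau_N=t$ of $[0,t]$, and set
\[
S_N(\cdot)=e^{-t}\psi_1(\cdot,t)+\sum_{j=1}^N e^{-(t-\tau_j)}(\tau_j-\tau_{j-1})\,\psi_2(\cdot,\tau_j),\qquad c_N=e^{-t}+\sum_{j=1}^N e^{-(t-\tau_j)}(\tau_j-\tau_{j-1}).
\]
All the coefficients are non-negative and sum to $c_N>0$, so $S_N/c_N$ is a convex combination of the $N+1$ elements $\psi_1(\cdot,t),\psi_2(\cdot,\tau_1),\dots,\psi_2(\cdot,\tau_N)$ of $\Phi$; by convexity of $\Phi$ and a trivial induction on the number of terms, $S_N/c_N\in\Phi$.

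It remains to send the mesh of the partition to $0$. The scalar Riemann sums give $c_N\to e^{-t}+\int_0^t e^{-(t-\tau)}\,d\tau=1$. For the uniform convergence $\|S_N-h(\cdot,t)\|_{\infty}\to 0$ we use that $\psi_2\in C([0,\infty);\Phi)$, so the map $\tau\mapsto\psi_2(\cdot,\tau)$ is continuous from $[0,t]$ into the metric space $(\Phi,\|\cdot\|_{\infty})$, hence uniformly continuous on the compact interval $[0,t]$; together with the continuity of $\tau\mapsto e^{-(t-\tau)}$ this is precisely what makes the $\|\cdot\|_{\infty}$-valued Riemann sums $\sum_j e^{-(t-\tau_j)}(\tau_j-\tau_{j-1})\psi_2(\cdot,\tau_j)$ converge in $\|\cdot\|_{\infty}$ to $\int_0^t e^{-(t-\tau)}\psi_2(\cdot,\tau)\,d\tau$. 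Since $\|h(\cdot,t)\|_{\infty}\le e^{-t}+\int_0^t e^{-(t-\tau)}\,d\tau=1$, the bound
\[
\left\|\frac{S_N}{c_N}-h(\cdot,t)\right\|_{\infty}\le\frac{1}{c_N}\,\|S_N-h(\cdot,t)\|_{\infty}+\left|\frac{1}{c_N}-1\right|\,\|h(\cdot,t)\|_{\infty}
\]
shows $S_N/c_N\to h(\cdot,t)$ uniformly on $\R^d$. As $\Phi$ is closed under uniform convergence (cf.\ the discussion following \eqref{def:distT}), we conclude $h(\cdot,t)\in\Phi$.

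The only genuinely delicate point is the uniform-in-$k$ convergence of the Riemann sums on the non-compact domain $\R^d$: this does not follow from pointwise convergence alone, but is guaranteed by the uniform continuity of $\tau\mapsto\psi_2(\cdot,\tau)$ in the sup-norm, which is built into the definition of $C([0,\infty);\Phi)$. An alternative, more probabilistic route would be to write $\psi_i(\cdot,s)=\hat{f_i}(\cdot,s)$ with $f_i(\cdot,s)\in\mathcal{P}_{+}(\R^d)$, use L\'evy's continuity theorem to see that $s\mapsto f_i(\cdot,s)$ is weakly continuous, define the Radon probability measure $\mu_t=e^{-t}f_1(\cdot,t)+\int_0^t e^{-(t-\tau)}f_2(\cdot,\tau)\,d\tau$ by duality against $C_0(\R^d)$, and identify $h(\cdot,t)=\hat{\mu_t}$ via Fubini; but the convexity/closedness argument above is shorter given what has already been established in the paper.
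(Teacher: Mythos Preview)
Your proof is correct but takes a genuinely different route from the paper. The paper argues directly on the measure side: writing $\psi_i(\cdot,s)=\hat{f}_i(\cdot,s)$ with $f_i(\cdot,s)\in\mathcal{P}_+(\R^d)$, it sets $g(\cdot,t)=e^{-t}f_1(\cdot,t)+\int_0^t e^{-(t-\tau)}f_2(\cdot,\tau)\,d\tau$, observes that $g(\cdot,t)$ is a non-negative Radon measure, and checks $\int_{\R^d}g(v,t)\,dv=e^{-t}+\int_0^t e^{-(t-\tau)}\,d\tau=1$, so that $h(\cdot,t)=\hat{g}(\cdot,t)\in\Phi$. This is exactly the ``more probabilistic route'' you sketch at the end of your proposal. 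Your argument, by contrast, stays entirely on the Fourier side and uses only the convexity of $\Phi$ and its closedness under uniform convergence, realizing $h(\cdot,t)$ as a $\|\cdot\|_\infty$-limit of finite convex combinations via Riemann sums. The paper's approach is shorter on the page but sweeps under the rug the definition of the time-integral of a measure-valued map and the identification $\widehat{\int\!f_2}=\int\!\widehat{f_2}$; your approach avoids these points entirely since the continuity of $\tau\mapsto\psi_2(\cdot,\tau)$ in $\|\cdot\|_\infty$ is given by hypothesis, at the cost of a slightly longer argument.
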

\begin{proof} 
Since $\psi_1(\cdot,t)=\hat{f}_1(\cdot, t)$, $\psi_2(\cdot,t)=\hat{f}_2(\cdot, t)$ with $f_1, f_2 (\cdot, t) \in \mathcal{P}_{+}(\R^d)$ for any $t\in [0,T)$ then 
we have that $h(\cdot, t)=\hat{g}(\cdot,t)$ for $t\in [0,T)$ with
$${g}(\cdot, t)=e^{-t} f_1(\cdot,t)+\int_0^t d\tau e^{-(t-\tau)} f_2(\cdot,\tau).$$ 
Thus $g(\cdot, t)\in \mathcal{M}_{+}(\R^d)$. To prove that $g(\cdot, t)\in \mathcal{P}_{+}(\R^d)$ we just notice that, since $f_1,\ f_2 (\cdot, t) \in \mathcal{P}_{+}(\R^d)$, 
$$\int_{\R^d} g(v,t)dv=e^{-t}+\int_0^t d\tau e^{-(t-\tau)}=1$$
whence the result follows.
\end{proof} 

The main result that we prove in this section is the following.
\begin{theorem}\label{thm:1}
Let be $\vphi_0(k)\in \Phi$ where $\Phi$ is defined as in \eqref{def:setffi}. Then there exists a unique $\vphi \in C([0,\infty); \Phi)$ safisfying \eqref{eq:IntBoltFou}. 
\end{theorem}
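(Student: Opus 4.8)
The plan is to solve the integral equation \eqref{eq:IntBoltFou} by a standard Banach fixed-point argument, carried out first on a short time interval $[0,T_0]$ and then extended to $[0,\infty)$ using that the length $T_0$ of the contraction interval can be chosen independently of the initial time. Define the map $\mathcal{T}$ on $C([0,T];\Phi)$ by
\be
\mathcal{T}[\vphi](\cdot,t) = E(t)\vphi_0(\cdot) + \int_0^t d\tau\ E(t-\tau)\Gamma[\vphi(\cdot,\tau)].
\ee
First I would check that $\mathcal{T}$ maps $C([0,T];\Phi)$ into itself. By \eqref{eq:opE} we have $E(t-\tau)\psi(k) = e^{-(t-\tau)}\psi(e^{-A(t-\tau)}k)$; since $e^{-A(t-\tau)}$ is invertible, Lemma \ref{lem:qBk} gives $E(t-\tau)/e^{-(t-\tau)}$ maps $\Phi$ to $\Phi$, and Lemma \ref{eq:I+map} gives $\Gamma[\vphi(\cdot,\tau)] = \mathcal{I}^+(\vphi(\cdot,\tau),\vphi(\cdot,\tau))\in\Phi$ whenever $\vphi(\cdot,\tau)\in\Phi$. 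Combining these with Lemma \ref{lem:Int} (applied with $\psi_1 = $ the transported $\vphi_0$ and $\psi_2(\cdot,\tau) = $ the transported $\Gamma[\vphi(\cdot,\tau)]$, noting $e^{-At}$ acting on a characteristic function again gives one via Lemma \ref{lem:qBk}) shows $\mathcal{T}[\vphi](\cdot,t)\in\Phi$ for every $t\in[0,T]$; continuity in $t$ follows from continuity of $\tau\mapsto\vphi(\cdot,\tau)$, of the shift operators, and dominated convergence.

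Next I would establish the contraction estimate. Given $\vphi_1,\vphi_2\in C([0,T];\Phi)$, write $S_{t-\tau}$ for the shift operator $\psi\mapsto\psi(e^{-A(t-\tau)}\cdot)$, which is of the form covered by Lemma \ref{lm:bdS}. Then
\begin{align*}
|\mathcal{T}[\vphi_1](k,t) - \mathcal{T}[\vphi_2](k,t)| &\leq \int_0^t d\tau\ e^{-(t-\tau)} |S_{t-\tau}\Gamma[\vphi_1(\cdot,\tau)] - S_{t-\tau}\Gamma[\vphi_2(\cdot,\tau)]|(k)\\
&\leq \int_0^t d\tau\ e^{-(t-\tau)} S_{t-\tau}\mathcal{L}\big[|\vphi_1(\cdot,\tau)-\vphi_2(\cdot,\tau)|\big](k),
\end{align*}
using \eqref{eq:estS} from Lemma \ref{lm:bdS}. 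Since $\mathcal{L}$ maps functions bounded by $c$ to functions bounded by $c$ (because $\int_{S^{d-1}}dn\ g(\hat k\cdot n)=1$ and each summand $\vphi(k_\pm)$ is controlled pointwise), and the shift operators do not increase the sup-norm, I get $\|\mathcal{T}[\vphi_1](\cdot,t)-\mathcal{T}[\vphi_2](\cdot,t)\|_\infty \leq (1-e^{-t})\, d_T(\vphi_1,\vphi_2)$, hence $d_T(\mathcal{T}[\vphi_1],\mathcal{T}[\vphi_2]) \leq (1-e^{-T})\, d_T(\vphi_1,\vphi_2)$. Since $1-e^{-T} < 1$ for every finite $T$, $\mathcal{T}$ is a contraction on the complete metric space $C([0,T];\Phi)$, giving a unique fixed point on $[0,T]$ for every finite $T$; letting $T\to\infty$ and using uniqueness on each subinterval yields the unique solution in $C([0,\infty);\Phi)$.

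\textbf{Main obstacle.} The routine parts are the fixed-point mechanics; the point requiring care is the verification that $\mathcal{T}$ genuinely lands in $\Phi$ rather than merely in $\mathcal{U}$ — that is, that the Duhamel iterates remain characteristic functions (Fourier transforms of honest probability measures), not just bounded continuous functions normalized at the origin. This is exactly what Lemmas \ref{eq:I+map}, \ref{lem:qBk} and \ref{lem:Int} are designed to supply, and the argument is to check that the three operations composing $\mathcal{T}$ — the invertible linear change of variables $k\mapsto e^{-At}k$, the quadratic operator $\Gamma$, and the exponentially weighted time-averaging — each preserve $\Phi$, so that the closedness of $\Phi$ under uniform limits (needed both for the time integral and for the fixed-point limit) closes the argument. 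A secondary technical point is that the contraction constant $1-e^{-T}$ degenerates to $1$ as $T\to\infty$, which is why I work on finite intervals and patch; alternatively one can introduce an exponentially weighted metric $\sup_t e^{-\lambda t}\|\cdot\|_\infty$ to get a global contraction in one step, but the interval-patching route is cleaner here since uniqueness is immediate on each piece.
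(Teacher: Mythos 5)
Your overall route is the same as the paper's: Duhamel formulation, Banach fixed point on $C([0,T];\Phi)$, with the $\Phi$-invariance of $\mathcal{T}$ supplied exactly by Lemmas \ref{eq:I+map}, \ref{lem:qBk} and \ref{lem:Int}. The problem is in your contraction estimate. You assert that ``$\mathcal{L}$ maps functions bounded by $c$ to functions bounded by $c$'', but $\mathcal{L}$ as defined in \eqref{eq:linBolzF} has \emph{two} summands $\vphi(k_+)+\vphi(k_-)$, so with the normalization \eqref{eq:kernel} it maps functions bounded by $c$ to functions bounded by $2c$ (this is precisely \eqref{est:L}, and equivalently the classical Lipschitz bound \eqref{eq:ILip} has constant $2$, coming from the bilinearity of $\mathcal{I}^+$ on the unit ball). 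With the correct factor your estimate becomes $d_T(\mathcal{T}\vphi_1,\mathcal{T}\vphi_2)\leq 2\,(1-e^{-T})\,d_T(\vphi_1,\vphi_2)$, which is a contraction only for $T<\log 2$. Hence your concluding step --- ``$1-e^{-T}<1$ for every finite $T$, so $\mathcal{T}$ is a contraction on every finite interval, then let $T\to\infty$'' --- does not stand as written.

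The repair is minor and is in fact what your opening plan announced: fix $T_0<\log 2$, obtain the unique fixed point on $[0,T_0]$, and continue iteratively; the interval length can be kept fixed because any $\Phi$-valued solution satisfies $\|\vphi(\cdot,t)\|_\infty\leq\vphi(0,t)=1$ uniformly in $t$, so the same contraction constant $2(1-e^{-T_0})$ works on each step, and uniqueness on each subinterval glues the pieces. This is exactly the paper's argument (it restricts to $T<\log 2$ and invokes a standard extension). Alternatively, the exponentially weighted metric $\sup_t e^{-\lambda t}\|\cdot\|_\infty$ with $\lambda>1$ that you mention does give a global contraction in one step, with constant $2/(1+\lambda)<1$; either fix closes the gap. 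A last small remark: for the self-map and contraction steps you do not need the pointwise $\mathcal{L}$-Lipschitz property of Lemma \ref{lm:bdS} at all --- the sup-norm bound \eqref{eq:ILip} together with $\|e^{-t\hat D}\psi\|_\infty\le\|\psi\|_\infty$ suffices, which is how the paper proceeds; the pointwise estimate is reserved for the comparison Theorem \ref{thm:2}.
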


\begin{proof} 
We will prove the result using a standard Banach fixed point argument. 

We define an operator $\mathcal{T}: C([0,T]; \Phi)\rightarrow C([0,T]; \Phi)$ by means of
\be
\left(\mathcal{T}\vphi \right)(\cdot, t)=E(t)\vphi_0+\int_0^t d\tau e^{-(t-\tau)}\Gamma[\vphi(\tau)].
\ee
In order to check that the operator $\mathcal{T}$ maps the space $ C([0,T]; \Phi)$ into itself, we first notice that $\Gamma[\Phi]\subset\Phi$ thanks to Lemma \ref{eq:I+map}. We rewrite the operator $E(t)$ as $E(t)=e^{-t}e^{t\hat{D}}$ then we have
\be
\left(e^{t\hat{D}} \vphi \right) (k) =\vphi(q(k))\quad \text{with}\quad  q(k)=e^{-tA} k.
\ee 
Hence, using Lemma \ref{lem:qBk}, it follows that $\left(e^{t\hat{D}} \vphi_0 \right)\in\Phi$ for any $t\in [0,T)$ as well as $e^{-(t-\tau)}\Gamma[\vphi(\tau)]\in \Phi$ for any $0\leq \tau\leq t < T$. 
It is straightforward to see that $\left(e^{t\hat{D}} \vphi_0 \right)$ and $e^{-(t-\tau)}\Gamma[\vphi(\tau)]$ are continuous in time. 
Therefore, using Lemma \ref{lem:Int}, we obtain that $\mathcal{T}\vphi \in C([0,T);\Phi)$.  
Notice that, since $\vphi, \mathcal{T}\vphi \in C([0,T);\Phi)$ then $\vert | \vphi (\cdot,t) \vert|_{\infty}=\vert | \mathcal{T}\vphi (\cdot,t) \vert |_{\infty}=1$ for any $t\in [0,T)$.

In order to prove that the operator $\mathcal T$ is contractive we consider $\vphi_1, \vphi_2 \in C([0,T);\Phi)$ and look at 
$$
\mathcal T\vphi_1 (\cdot,\tau)- \mathcal T\vphi_2 (\cdot,\tau)=\int_0^t d\tau \ E(t-\tau) \big\{\Gamma\big[\vphi_{1} (\cdot,\tau)\big] - \Gamma\big[\vphi_{2} (\cdot,\tau)\big]\big\}.
$$
Then, using \eqref{eq:ILip} as well as the bound $\| e^{-t\hat{D}}\|\leq 1$, we obtain
\bes
\vert| \mathcal T\vphi_1 (\cdot,\tau)- \mathcal T\vphi_2 (\cdot,\tau) \vert|_{\infty} \leq 2 \int_0^t d\tau \vert| \vphi_1 (\cdot,\tau)- \vphi_2 (\cdot,\tau) \vert|_{\infty} \leq 2 d_T(\vphi_1,\vphi_2) \big(1-e^{-T}\big)
\ees
whence 
$$d_T(\mathcal T\vphi_1,\mathcal T\vphi_2) \leq 2  \big(1-e^{-T}\big) d_T(\vphi_1,\vphi_2).$$
Thus, the operator $\mathcal T$ is contractive if $T<\log(2)$. 

Given that $\vphi (\cdot,\tau)\in \Phi$ for any $t\in [0,T)$ and therefore $\| \vphi (\cdot,\tau)\|_{\infty}\leq 1$ the solution can be extended to arbitrary values of $T\in (0,\infty)$ using a standard extension argument. 

\end{proof}

\begin{remark}
We observe that the proof of the previous theorem is rather standard. We notice that results which imply Theorem \ref{thm:1} can be found in \cite{CercArchive} and  in \cite{JNV} although these papers do not use the Fourier transform approach.
\end{remark}

\begin{remark}\label{eq:C1sol}
Notice that if $\vphi\in C([0,\infty);\Phi)$ is a solution to \eqref{eq:IntBoltFou}  then $\vphi\in C^1((0,\infty); \Phi)$. A standard computation shows that $\vphi$ solves the PDE problem  \eqref{eq:incdtFourier}, \eqref{eq:BoltFourier2}.  
In the rest of the paper we will not make any distinction between solutions to the integral equation \eqref{eq:IntBoltFou} and solutions to    \eqref{eq:incdtFourier}, \eqref{eq:BoltFourier2}.
\end{remark}
As we have seen, the classical Lipschitz property for norms (cf.\eqref{eq:ILip}) is sufficient to prove existence and uniqueness of solutions to \eqref{eq:IntBoltFou}. We will now show that the stronger pointwise Lipschitz property \eqref{eq:est} in Lemma \ref{lm:bdI} is useful to compare two solutions with different initial data. 

It will be convenient in the following to introduce the semigroup notation.  Let $C_b(\R^{d};\C)$  be the space of continuous and bounded functions with complex value. Given a function $y_0\in C_b(\R^{d};\C)$ we denote as $\left[\exp\big[\big(\mathcal{L}-{I}- \hat{D} \big)\ t\big] y_0\right](\cdot)$ the unique solution $y\in C([0,\infty); C(\R^{d};\C))$ of the equation
\be \label{eq:sgreq}
y(\cdot,t)=\left[E(t)y_0\right](\cdot)+\int_0^t E(t-\tau) \mathcal{L}\big(y(\cdot,\tau)\big)d\tau
\ee
where $E(t)$ is defined as in \eqref{eq:defE} and we recall that $\hat{D}=\big(Ak\big)\cdot \partial_{k}$. Notice that \eqref{eq:sgreq} represents the mild formulation of the equation
\bes
\partial_t y(\cdot,t)+ \big(Ak\big)\cdot \partial_{k} y(\cdot,t)=\mathcal{L}y(\cdot,t)-y(\cdot,t), \quad y(\cdot ,0)= y_{0}(\cdot) \ .
\ees
The existence and uniqueness of solutions to \eqref{eq:sgreq} can be obtained using a standard fixed point argument along the lines of the proof of Theorem \ref{thm:1}.

The following Theorem allows to estimate the difference of two solutions to  \eqref{eq:incdtFourier}, \eqref{eq:BoltFourier2} in terms of the semigroup associated to the operator $\mathcal{L}-{I}- \hat{D}$  where $\mathcal{L}$ is as in \eqref{eq:linBolzF} and $\hat{D}$ is as in \eqref{eq:notshort}.
\begin{theorem} \label{thm:2}
Let be $\vphi_{1}, \vphi_{2}\in C([0,\infty);\Phi)$ safisfying \eqref{eq:IntBoltFou} with initial data $\vphi_{1,0}, \vphi_{2,0} \in \Phi$. 
Then the following inequality holds for all $t>0$ and $k\in \R^d$
\be \label{eq:est2th2}
\vert \vphi_{1}(k,t)-\vphi_{2}(k,t)\vert \leq \left[ \exp\big[\big(\mathcal{L}-{I}- \hat{D} \big) \ t \big] \ \vert \vphi_{1,0}-\vphi_{2,0}\vert \right] (k),%=\Delta_{0} (k)
\ee
with $\hat{D}$ as in \eqref{eq:notshort}.
\end{theorem}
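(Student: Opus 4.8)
The plan is to convert the mild equation \eqref{eq:IntBoltFou} into a scalar, pointwise integral inequality for $w(k,t):=|\vphi_1(k,t)-\vphi_2(k,t)|$ and then compare it with the linear semigroup $\exp\big[(\mathcal{L}-I-\hat D)t\big]$ defined through \eqref{eq:sgreq}. Setting $w_0(k):=|\vphi_{1,0}(k)-\vphi_{2,0}(k)|$ and subtracting the two instances of \eqref{eq:IntBoltFou},
\be
\vphi_1(\cdot,t)-\vphi_2(\cdot,t)=E(t)\big(\vphi_{1,0}-\vphi_{2,0}\big)+\int_0^t d\tau\ E(t-\tau)\big(\Gamma[\vphi_1(\cdot,\tau)]-\Gamma[\vphi_2(\cdot,\tau)]\big).
\ee
First I would take moduli at a fixed $k\in\R^d$, move them inside the time integral by the triangle inequality, and use the elementary identity $|E(s)\psi|(k)=e^{-s}|\psi(e^{-As}k)|=\big(E(s)|\psi|\big)(k)$ together with the $\mathcal{L}$-Lipschitz estimate \eqref{eq:est} of Lemma \ref{lm:bdI} (applicable since $\vphi_{1,2}\in\Phi$ gives $\|\vphi_{1,2}\|_{\infty}\leq 1$) and the fact that both $E(s)$ and $\mathcal{L}$ map real non-negative functions to real non-negative ones and are order preserving (because $e^{-s}>0$ and $g\geq0$). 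This yields
\be \label{eq:wineq}
w(k,t)\leq \big(E(t)w_0\big)(k)+\int_0^t d\tau\ \big(E(t-\tau)\,\mathcal{L}[w(\cdot,\tau)]\big)(k),\qquad w(\cdot,0)=w_0 .
\ee

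Next, by the very definition of the semigroup in \eqref{eq:sgreq}, the function $W(k,t):=\big(\exp\big[(\mathcal{L}-I-\hat D)t\big]\,w_0\big)(k)$ satisfies \eqref{eq:wineq} with equality; since $w_0\in C_b(\R^d;\C)$ the function $W$ is continuous, hence bounded on $\R^d$ for each fixed $t$ and locally bounded in $t$ (this can also be read off from \eqref{eq:sgreq} via a Gronwall estimate using $\|\mathcal{L}[\psi]\|_{\infty}\leq 2\|\psi\|_{\infty}$, which follows from \eqref{eq:linBolzF} and the normalization \eqref{eq:kernel}). Subtracting \eqref{eq:wineq} from the identity for $W$ and using linearity of $E(t-\tau)$ and $\mathcal{L}$, the function $z:=W-w$ satisfies, pointwise in $k$ and for every $t\geq0$,
\be \label{eq:zineq}
z(k,t)\geq \int_0^t d\tau\ \big(E(t-\tau)\,\mathcal{L}[z(\cdot,\tau)]\big)(k),\qquad z(\cdot,0)=0,
\ee
and $z$ is bounded on $\R^d\times[0,T]$ for every $T>0$. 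The assertion \eqref{eq:est2th2} is precisely $z\geq0$.

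To establish $z\geq0$ I would run a Gronwall argument on the negative part. Write $z=z_{+}-z_{-}$ with $z_{\pm}:=\max(\pm z,0)\geq0$ continuous. Since $\mathcal{L}[z_{+}]\geq0$ and $E(t-\tau)$ is order preserving, \eqref{eq:zineq} gives $-z(k,t)\leq \int_0^t d\tau\,\big(E(t-\tau)\mathcal{L}[z_{-}(\cdot,\tau)]\big)(k)$; as the right-hand side is non-negative the left-hand side may be replaced by $z_{-}(k,t)$, so that
\be
z_{-}(k,t)\leq \int_0^t d\tau\ \big(E(t-\tau)\,\mathcal{L}[z_{-}(\cdot,\tau)]\big)(k).
\ee
With $\|E(s)\psi\|_{\infty}\leq\|\psi\|_{\infty}$ and $\|\mathcal{L}[\psi]\|_{\infty}\leq2\|\psi\|_{\infty}$, the function $m(t):=\|z_{-}(\cdot,t)\|_{\infty}$ (finite on each $[0,T]$) obeys $m(t)\leq2\int_0^t m(\tau)\,d\tau$ with $m(0)=0$, hence $m\equiv0$ by Gronwall's lemma; thus $z_{-}\equiv0$, i.e.\ $w\leq W$, which is exactly \eqref{eq:est2th2}. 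The routine ingredients — local boundedness and continuity of $W$, measurability of $m$, interchanging $|\cdot|$ with the time integral — are standard; the one genuinely structural step, and the crux of the argument, is \eqref{eq:wineq}: here one must use the \emph{pointwise} $\mathcal{L}$-Lipschitz bound \eqref{eq:est} rather than the weaker norm bound \eqref{eq:ILip}, since the comparison requires propagating the error term $\mathcal{L}[w]$ through the positivity-preserving evolution $E$, not merely through its operator norm.
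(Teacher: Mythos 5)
Your proposal is correct and follows essentially the same route as the paper: subtract the two mild formulations \eqref{eq:IntBoltFou}, take absolute values pointwise, push the modulus through $E(t-\tau)$ (Lemma \ref{lm:bdS}) and use the pointwise $\mathcal{L}$-Lipschitz bound \eqref{eq:est} to arrive at the integral inequality for $|\vphi_1-\vphi_2|$, and then compare with the linear semigroup defined by \eqref{eq:sgreq}. The only difference is that where the paper invokes ``a standard comparison argument'' to conclude $|y|\leq u$, you supply that step explicitly via the Gronwall estimate on the negative part $z_-$ (using $\|\mathcal{L}\psi\|_\infty\leq 2\|\psi\|_\infty$ and the positivity and order preservation of $E(t)$ and $\mathcal{L}$), which is a valid and complete way to justify the subsolution--solution comparison.
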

\begin{proof}
We denote 
\bes
y(\cdot,t)=\vphi_{1}(\cdot,t)-\vphi_{2}(\cdot,t), \qquad y(\cdot,0)=y_0 (\cdot)= \vphi_{1,0}(\cdot)-\vphi_{2,0}(\cdot)
\ees
Then, using that $\vphi_{1} (\cdot, t), \vphi_{2} (\cdot, t)$ satisfy \eqref{eq:IntBoltFou}, we obtain 
\bes 
y(\cdot, t)= E(t)y_0(\cdot) + \int_0^t d\tau \ E(t-\tau) \left\{\Gamma\big[\vphi_{1} (\cdot,\tau)\big]-\Gamma\big[\vphi_{2} (\cdot,\tau)\big]\right\}.
\ees
Hence, for any fixed $k\in\R^d$, we obtain 
\begin{align*}
\vert y(\cdot,t) \vert \leq E(t)\vert y_0(\cdot)\vert +  \int_0^t d\tau \vert E(t-\tau) \left\{\Gamma\big[\vphi_{1} (\cdot,\tau)\big]-\Gamma\big[\vphi_{2} (\cdot,\tau)\big]\right\} \vert 
\end{align*}
where  $E(t)$ is as in \eqref{eq:defE}. 

We now use Lemma \ref{lm:bdS} with $S= \exp \big(-t \hat{D} \big)$ and obtain 
\be \label{eq:bdy}
\vert y(\cdot,t) \vert \leq E(t)\vert y_0(\cdot)\vert +  \int_0^t d\tau  E(t-\tau) \mathcal{L} \vert y(\cdot,\tau) \vert ,
\ee
where $\mathcal{L}$ is given as in \eqref{eq:linBolzF}. 

Let us consider the Cauchy problem for the linear equation
\be \label{eq:u}
\partial_t u+ \big(Ak\big)\cdot \partial_{k} u+u=\mathcal{L}u, \quad u(\cdot ,0)=\vert y_{0}(\cdot )\vert.
\ee
Then $u(k,t)$ satisfies the integral equation 
\be \label{eq:u_int}
u(\cdot ,t)=E(t)\vert y_0(\cdot )\vert +  \int_0^t d\tau  E(t-\tau) \mathcal{L}u(\cdot ,\tau).
\ee
A standard comparison argument, applied to $u(\cdot ,t)$ and $\vert y(\cdot ,t) \vert $ satisfying the inequality \eqref{eq:bdy}, shows that 
\be \label{eq:comparison}
\vert y(\cdot ,t) \vert \leq u(\cdot ,t)= \exp\big[\big(\mathcal{L}- {I}- \hat{D} \big) \ t\big]\vert y_0(\cdot )\vert.
\ee
Thus, using the definition of $y(\cdot ,t)$ and $y_0(t)$, the theorem follows.
\end{proof}

In the next section we consider some applications of Theorem \ref{thm:2}.  More precisely, the way in which we will apply this Theorem is the following. Suppose that $f_1, f_2 $ are two solutions of \eqref{eq:GenBolt} having identical moments up to a given order $N$. Then, the corresponding characteristic functions $\vphi_1, \vphi_2$ will be tangent up to order $N$ at $k=0$. Hence, the estimate \eqref{eq:est2th2} will imply that the characteristic functions $\vphi_1, \vphi_2$ become close to each other, as $t\to \infty$ if $N$ is sufficiently large.

\section{Long-time asymptotics}  \label{sec:5}

In order to study the long-time asymptotics we would relate the moments of the probability distribution $f(v)$ with the Taylor series of the corresponding characteristic function $\vphi(k)$.
To this end we introduce some notation for multi-indexes. We will write $\N_{\star}=\{0,1,2,\dots \}$. A multi-index $\alpha$ will be an element of $\N_{\star}^d$, i.e. $\alpha=(\alpha_1 \dots \alpha_d)$ with $\alpha_j\in\N_{\star}$. We then define $|\alpha|=\alpha_1+\dots+\alpha_d$ and $\alpha !=\prod_{j=1}^d (\alpha_j)!$. Given any $\xi\in\R^d$ with the form $\xi=(\xi_1, \dots, \xi_d)$, $\xi_j \in\R$ we define $\xi^{\alpha}=\xi_1^{\alpha_1}\dots \xi_d^{\alpha_d}=\prod_{j=1}^d (\xi_{j})^\alpha_j $. Using this notation we have the following formula
\be\label{eq:expseries}
e^{-i k\cdot v}=\sum_{\alpha\in\N_{\star}^d} \frac{(-1)^{|\alpha}}{(\alpha )!} k^{\alpha }v^{\alpha },\quad v\in\R^d, \ k\in \R^d.
\ee

Suppose that $f\in\mathcal{P}_{+}(\R^d)$. Then, using the definition of $\vphi$ given in \eqref{eq:phif} as well as \eqref{eq:expseries} we obtain formally that 
\be \label{eq:series}
\vphi(k)=\sum_{\alpha\in\N_{\star}^d} \frac{(-i)^{|\alpha|}}{(\alpha)!}m_{\alpha} k^{\alpha} \quad \text{where}\quad m_{\alpha}=\int_{\R^d} dv f(v) \ v^{\alpha}.
\ee
Notice that \eqref{eq:series} holds if the moments of the measure $f$ are not too large for large $|\alpha|$. For instance, if the probability measure $f$ is supported in the set $\{|v|\leq A\}$ we will have $\vert m_{\alpha}\vert\leq A^{|\alpha|}$ and the series \eqref{eq:series} would then be convergent for any value of $k\in\R^d$.  We will not require in this paper such a strong condition. We just remark that if $f$ has moments $m_{\alpha}$ with $|\alpha|\leq N$ then the function $\vphi\in C^{N-1}(\R^d)$. Moreover, if the moments $m^1_{\alpha}, \ m^2_{\alpha}$ of two measures $f_1,\ f_2\in \mathcal{P}_{+}(\R^d)$ are the same for $|\alpha|\leq N$  then the corresponding characteristic functions $\vphi_1, \ \vphi_2$ have finite moments up to order $N$ and take the same values for $|\alpha|\leq N-1$. Then, $\vphi_1, \ \vphi_2$ have the same Taylor series up to order $N-1$. %are tangent

Notice that since $f \in \mathcal{P}_{+}(\R^d)$ we have that the zero-th order moment $m_O=1$ where $O=(0,0,\dots,0)$.

\begin{lemma}\label{lem:estfi1fi2}
Suppose that $f_1, f_2\in \mathcal{P}_{+}(\R^d)$ satisfy the conditions 
\be \label{eq:moments}
\int_{\R^d} dv f_j(v) \vert  v\vert^N<\infty,\quad j=1,2. 
\ee
Suppose that in addition the moments $m^1_{\alpha}, \ m^2_{\alpha}$ associated to $f_1,\ f_2$ respectively, satisfy $ m^1_{\alpha}=m^2_{\alpha}$ for any $|\alpha|\leq N$.  
Let $\vphi_{1}, \ \vphi_2\in \Phi$ be the corresponding characteristic functions. Then, there exists a function $\omega=\omega(k)$, $\omega:\R^d\rightarrow \R_{+}$ satisfying 
\be\label{def:omegak}
\lim_{k\to 0 }\omega(k)=0 \quad \text{and}\quad \omega(k)\leq C_1,
\ee
such that the following estimate holds:
\be \label{eq:estfi1fi2}
\vert \vphi^{1}(k)-\vphi^{2}(k) \vert \leq 
\omega(k)\vert k\vert^N,\quad k\in \R^d 
\ee
where the function $\omega$ depends on the measures $f_1, f_2$, on the order $N$ and on the dimension $d$. The constants $C_1$ depends on the moments \eqref{eq:moments}, on $N$ and $d$. 
\end{lemma}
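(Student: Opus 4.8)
The plan is to expand both characteristic functions in Taylor series at $k=0$, cancel the common low-order terms using the hypothesis $m^1_\alpha=m^2_\alpha$ for $|\alpha|\le N$, and control the remainder. First I would use the remark preceding the lemma: since $f_1,f_2$ have finite moments up to order $N$, the characteristic functions $\vphi^1,\vphi^2$ belong to $C^{N-1}(\R^d)$ and share the same Taylor polynomial up to degree $N-1$. Writing $\psi = \vphi^1-\vphi^2$, this means $\partial^\alpha\psi(0)=0$ for all $|\alpha|\le N-1$, so by Taylor's theorem with integral (or Peano) remainder, $\psi(k) = \sum_{|\alpha|=N-1}\frac{k^\alpha}{\alpha!}\bigl(\partial^\alpha\psi(k^*)-\partial^\alpha\psi(0)\bigr)$ for some $k^*$ on the segment $[0,k]$ (or more cleanly, an integral form of the remainder), which already gives a bound $|\psi(k)|\le \tilde\omega(k)|k|^{N-1}$ with $\tilde\omega(k)\to 0$. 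The point, however, is to gain the full power $|k|^N$, and for that I would exploit that the degree-$(N-1)$ Taylor coefficients of $\vphi^1$ and $\vphi^2$ also agree (they are determined by $m^1_\alpha=m^2_\alpha$ with $|\alpha|=N-1$).

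The cleaner route avoiding $C^N$ regularity (which need not hold, as only $|v|^N$, not all $|\alpha|=N$ moments separately, is assumed finite — though $|\alpha|=N$ implies $|v|^{|\alpha|}$ integrable, so actually $\vphi^j\in C^{N-1}$ and the degree-$N$ difference quotients are controlled) is the following. Set $\psi=\vphi^1-\vphi^2$ and note
\be
\psi(k) = \int_{\R^d} \bigl(f_1(v)-f_2(v)\bigr)\Bigl(e^{-ik\cdot v} - \sum_{|\alpha|\le N}\frac{(-i)^{|\alpha|}}{\alpha!}(k\cdot \text{-terms})\Bigr)\,dv,
\ee
more precisely
\be
\psi(k) = \int_{\R^d}\bigl(f_1(v)-f_2(v)\bigr)\,R_N(k,v)\,dv, \qquad R_N(k,v) = e^{-ik\cdot v}-\sum_{j=0}^{N}\frac{(-ik\cdot v)^j}{j!},
\ee
where the subtracted polynomial integrates to zero against $f_1-f_2$ precisely because all moments of order $\le N$ coincide. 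One then uses the elementary bound $|R_N(k,v)|\le \min\bigl(\tfrac{|k\cdot v|^{N+1}}{(N+1)!},\, \tfrac{2|k\cdot v|^N}{N!}\bigr) \le \tfrac{2}{N!}|k|^N|v|^N\,\min(1,|k||v|)$. Integrating against $|f_1|+|f_2|$ and splitting the domain at $|v|\le \delta/|k|$ versus $|v|> \delta/|k|$ gives
\be
|\psi(k)| \le \frac{2|k|^N}{N!}\int_{\R^d}\bigl(f_1(v)+f_2(v)\bigr)|v|^N\min(1,|k||v|)\,dv =: \omega(k)\,|k|^N.
\ee
By dominated convergence $\omega(k)\to 0$ as $k\to 0$ (the integrand is dominated by $(f_1+f_2)|v|^N\in L^1$ and tends to $0$ pointwise), and $\omega(k)\le \tfrac{2}{N!}\int(f_1+f_2)|v|^N\,dv =: C_1$, which depends only on the moments in \eqref{eq:moments}, on $N$ and on $d$. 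This yields \eqref{def:omegak} and \eqref{eq:estfi1fi2}.

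The only genuinely delicate point is the moment-cancellation bookkeeping: one must verify that $\int(f_1-f_2)\,\frac{(-ik\cdot v)^j}{j!}\,dv=0$ for each $j\le N$, which follows by multinomial expansion of $(k\cdot v)^j=\sum_{|\alpha|=j}\binom{j}{\alpha}k^\alpha v^\alpha$ together with the assumption $m^1_\alpha=m^2_\alpha$ for $|\alpha|\le N$ — so the subtracted Taylor polynomial of $R_N$ is exactly the piece annihilated by $f_1-f_2$. Everything else is the standard Taylor-remainder estimate for characteristic functions; I expect no real obstacle beyond being careful that $|v|^N$ integrability (rather than separate integrability of each $|\alpha|=N$ monomial, which it anyway implies) suffices for all the interchanges of sum and integral, which it does since each $|v^\alpha|\le |v|^{|\alpha|}$.
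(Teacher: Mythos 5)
Your proof is correct and follows essentially the same route as the paper: both rest on the Taylor-remainder bound $\bigl|e^{-ik\cdot v}-\sum_{|\alpha|\le N}\tfrac{(-i)^{|\alpha|}}{\alpha!}k^\alpha v^\alpha\bigr|\le C\min\{|k|^{N+1}|v|^{N+1},|k|^N|v|^N\}$ together with the cancellation of the common degree-$N$ Taylor polynomial coming from $m^1_\alpha=m^2_\alpha$, $|\alpha|\le N$. The only (harmless) difference is that you absorb the remainder into the explicit function $\omega(k)=\tfrac{2}{N!}\int (f_1+f_2)(v)\,|v|^N\min(1,|k||v|)\,dv$ and invoke dominated convergence, whereas the paper splits the $v$-integral at $|v|=R$ and uses the tail function $\delta(R)$; these are equivalent ways of establishing \eqref{def:omegak}.
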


\begin{remark}
As it will be seen in the proof, the rate of convergence of the function $\omega(\cdot)$ as $k\to 0$ (cf. \eqref{def:omegak}) depends on the rate of convergence of the integrals $\int_{\{\vert v\vert\leq R\}} \vert v\vert^N f_j(v) dv$ to the limit value $\int_{\R^d} \vert v\vert^N f_j(v) dv$ as $R\to \infty$. In order to obtain uniform estimates for $\omega(\cdot)$ it will be enough to assume, for instance, that $\int_{\R^d} \vert v\vert^{N+\delta} f_j(v) dv<\infty$ for some $\delta>0$.
\end{remark}

\begin{proof}
We use the following inequality which is a consequence of the Taylor series for the exponential function:
\be \label{eq:SN}
 \left\vert e^{-i (k\cdot v)} - \sum_{\vert \alpha \vert \leq N} \frac{(-i)^{\vert \alpha \vert }}{( \alpha )!} k^{\alpha} v^{\alpha} \right\vert 
 \leq C\min\{\vert k\vert ^{N+1}\vert v\vert ^{N+1}, \vert k\vert ^{N}\vert v\vert ^{N}\}
\ee
where $C$ is a constant that depends on $d$ and $N$. Notice that \eqref{eq:moments} implies that there exists a function $\delta=\delta(R)$, $\delta:\R^{+}\to\R^{+} $ satisfying $\lim_{R\to \infty} \delta(R)=0$, such that 
\be\label{eq:deltaR}
\max_{j=1,2} \left( \int_{\{\vert v\vert > R\}} \vert v\vert^N f_j(v) dv \right) \leq \delta(R) .
\ee 
Using \eqref{eq:phif} as well as \eqref{eq:SN}, and splitting the domain of integration in the remainder as $\{\vert v\vert \leq R \} \cup \{\vert v\vert > R\}$ we obtain
 \begin{align} \label{eq:SN1}
 \left\vert \vphi_j(k)- \sum_{\vert \alpha \vert \leq N} \frac{(-i)^{\vert \alpha \vert }}{( \alpha )!}  m^j_{\alpha} k^{\alpha}\right\vert 
&\leq C \int_{\{\vert v\vert \leq R\}}\vert k\vert ^{N+1}\vert v\vert ^{N+1} f_j(v) dv +C \int_{\{\vert v\vert > R\}}\vert k\vert ^{N}\vert v\vert ^{N} f_j(v) dv \nonumber \\&
\leq  C \vert k\vert ^{N+1} R \int_{\R^d} \vert v\vert^{N} f_j(v) dv +C \vert k\vert ^{N} \delta(R)
\end{align}
We now remark that for any $\ep>0$  arbitrarily small, taking $R$ sufficiently large and then $k$ sufficiently small, we can estimate the right hand side of \eqref{eq:SN1} as $\ep  \vert k\vert ^{N} $. This proves the existence of the function $\omega(k)$ with the properties stated in \eqref{def:omegak} such that: 
\bes
 \left\vert \vphi_j(k)- \sum_{\vert \alpha \vert \leq N} \frac{(-i)^{\vert \alpha \vert }}{( \alpha )!}  m^j_{\alpha} k^{\alpha}\right\vert 
\leq \frac{\omega(k)}{2} \vert k\vert ^{N} \quad j=1,2.
\ees
Therefore, using this formula, as well as the fact that $m^1_{\alpha}=m^2_{\alpha}$ for $|\alpha|\leq N$ we can estimate the difference $\vert \vphi^{1}(k)-\vphi^{2}(k) \vert $ as in \eqref{eq:estfi1fi2} and the Lemma follows. 
\end{proof}

We will denote as $C_p(\R^{d};\C)$ the normed space of continuous functions such that 
$\vert|\psi\vert|_p:=\sup_{k\in\R^d} \frac{\vert \psi(k)\vert}{1+\vert k\vert^p}<\infty$ with $0<p<\infty$. 

\begin{lemma}\label{lem:estEL}
Let the operators $\mathcal{L}$ and $E(t)$ be defined as in \eqref{eq:linBolzF},  \eqref{eq:defE} respectively.
Then, the following estimates hold:
\begin{align}  
& \vert| \mathcal{L}\psi \vert |_{p} \leq 2 \vert| \psi \vert|_{p}\ , \label{est:L} \\
&\vert| E(t)\psi \vert|_{p} \leq e^{-(1-p\vert | A \vert |)t }\vert| \psi\vert|_{p},\quad t\geq 0 \ . \label{est:E}
\end{align}
\end{lemma}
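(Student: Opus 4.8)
The plan is to establish the two estimates separately, in each case by a direct pointwise bound on $\mathcal{L}\psi(k)$ and on $E(t)\psi(k)$, reducing matters to controlling how the relevant change of variables acts on the weight $1+|k|^p$ and then invoking the kernel normalization \eqref{eq:kernel} (for $\mathcal{L}$) or the explicit formula \eqref{eq:opE} for $E(t)$. The only structural inputs needed are that the maps $k\mapsto k_{\pm}$ are non-expansive and that $e^{-At}$ has operator norm at most $e^{t\|A\|}$; the sharper $\mathcal{L}$-Lipschitz property of Lemma \ref{lm:bdI} is not required.

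First I would treat \eqref{est:L}. Since $k_{\pm}=\frac{1}{2}\big(k\pm|k|n\big)$, a direct computation gives $|k_{+}|^2+|k_{-}|^2=|k|^2$, hence $|k_{\pm}|\le|k|$ and therefore $1+|k_{\pm}|^p\le 1+|k|^p$. From the definition of $\|\cdot\|_p$ this yields $|\psi(k_{\pm})|\le\|\psi\|_p\,(1+|k_{\pm}|^p)\le\|\psi\|_p\,(1+|k|^p)$, so that
\[ |\mathcal{L}\psi(k)| \le \int_{S^{d-1}}dn\ g(\hat k\cdot n)\,\big(|\psi(k_{+})|+|\psi(k_{-})|\big) \le 2\,\|\psi\|_p\,(1+|k|^p)\int_{S^{d-1}}dn\ g(\hat k\cdot n). \]
The last integral equals $1$ by \eqref{eq:kernel}; dividing by $1+|k|^p$ and taking the supremum over $k\in\R^d$ gives \eqref{est:L}. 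Continuity of $\mathcal{L}\psi$ at $k=0$ is handled exactly as in the proof of Lemma \ref{lm:bdI}.

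For \eqref{est:E} I would start from $E(t)\psi(k)=e^{-t}\psi(e^{-At}k)$. Using the elementary bound $\|e^{-At}\|\le e^{t\|A\|}$ for $t\ge0$, immediate from the power series expansion of the matrix exponential together with submultiplicativity of the operator norm \eqref{eq:normA2}, one has $|e^{-At}k|\le e^{t\|A\|}|k|$, and hence, for $t\ge 0$,
\[ |E(t)\psi(k)| = e^{-t}\,|\psi(e^{-At}k)| \le e^{-t}\,\|\psi\|_p\,\big(1+e^{p t\|A\|}|k|^p\big) \le e^{-t}\,e^{p t\|A\|}\,\|\psi\|_p\,(1+|k|^p), \]
where in the last step we used $e^{p t\|A\|}\ge 1$. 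Dividing by $1+|k|^p$, taking the supremum over $k$, and rewriting $e^{-t}e^{p t\|A\|}=e^{-(1-p\|A\|)t}$ gives \eqref{est:E}. I do not expect a genuine obstacle: the only two points needing a line of justification are the identity $|k_{+}|^2+|k_{-}|^2=|k|^2$ (equivalently the non-expansiveness of $k\mapsto k_{\pm}$), which is precisely what makes \eqref{eq:kernel} directly applicable, and the standard exponential-norm bound $\|e^{-At}\|\le e^{t\|A\|}$; both are routine.
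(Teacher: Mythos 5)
Your proof is correct and follows essentially the same route as the paper: bound $|\psi(k_{\pm})|$ by $\|\psi\|_p(1+|k_{\pm}|^p)$ with $|k_{\pm}|\le|k|$ and use the normalization \eqref{eq:kernel} for \eqref{est:L}, and use $|e^{-At}k|\le e^{t\|A\|}|k|$ in the explicit formula \eqref{eq:opE} for \eqref{est:E}. The only (immaterial) difference is that you derive $|k_{\pm}|\le|k|$ from the identity $|k_{+}|^2+|k_{-}|^2=|k|^2$, whereas the paper uses the triangle-inequality bound $|k_{\pm}|\le\tfrac12|k|(1+|n|)\le|k|$.
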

\begin{proof}
In order to prove \eqref{est:L} we notice that \eqref{eq:linBolzF} implies 
\bes 
\vert \mathcal{L}\big[\vphi\big](k)\vert \leq \int_{S^{d-1}}dn \ g\left(  \hat{k} \cdot n \right) \vert| \psi\vert|_{p}\left(2+\vert k_{+}\vert+ \vert k_{-}\vert\right)
\ees
Using than that $\vert k_{\pm}\vert \leq \frac 1 2 \vert k\vert (1+\vert n\vert )\leq \vert k \vert$ we obtain 
\bes 
\vert \mathcal{L}\big[\vphi\big] (k)\vert \leq 2 \vert| \psi\vert|_{p}\left(1+ \vert k \vert^p \right)
\ees
whence  \eqref{est:L}  follows.  
Estimate \eqref{est:E} follows from the following chain of inequalities:
\begin{align*}
\vert E(t)\psi \vert  &\leq e^{-t} \vert \psi\left(e^{-At}k\right)\vert \leq e^{-t} \vert| \psi\vert|_{p}\left(1+\vert e^{-At}k \vert^p  \right)\\&
\leq e^{-t} \vert| \psi\vert|_{p} \left(1+e^{p\vert|A\vert| t}\vert k \vert^p  \right) \leq  e^{-t+p\vert|A\vert| t} \vert| \psi\vert|_{p}\left(1+\vert k \vert^p  \right) \ .
\end{align*}
\end{proof}
\smallskip

We collect in the following proposition some properties of the semigroup $\exp\big[\big(\mathcal{L}-I - \hat{D} \big) \ t\big]$. 

\begin{proposition}\label{prop:supersol}
The following results hold.
\begin{itemize}
\item[(i)] For any $y_0\in C_p(\R^{d};\C)$ there exists a unique solution $y\in C([0,\infty); C_p(\R^{d};\C))$ to \eqref{eq:sgreq}. We will denote this solution as 
\bes
y(\cdot, t)=\exp\big[\big(\mathcal{L}-I - \hat{D} \big) \ t\big]y_0(\cdot).
\ees
\item[(ii)] Suppose that $y_0, u_0 \in C_p(\R^{d};\C)$ with $y_0(k), u_0(k)\in\R_{+}$ for any $k\in\R^d$ and $y_0(k) \leq u_0(k)$.
Then 
\be \label{eq:mon1}
0\leq \exp\big[\big(\mathcal{L}-I - \hat{D} \big) \ t\big]y_0(\cdot) \leq \exp\big[\big(\mathcal{L}-I - \hat{D} \big) \ t\big]u_0(\cdot).
\ee
\item[(iii)] Let $y_0\in C_p(\R^{d};\C)$, $y_0(k), u_0(k)\in\R_{+}$ and suppose that $u\in C([0,\infty); C_p(\R^{d};\C))$ is a nonnegative function satisfying
\be\label{eq:ubound}
u(\cdot,t)\geq (E(t)y_0)(\cdot)+\int_0^t d\tau E(t-\tau) \mathcal{L}(u(\cdot, \tau)) \quad \text{for any}\quad t\geq 0. 
\ee
Then 
\be\label{eq:supersol}
\exp\big[\big(\mathcal{L}-I - \hat{D} \big) \ t\big]y_0(\cdot)\leq u(\cdot,t).
\ee
\end{itemize}
\end{proposition}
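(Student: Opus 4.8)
\textbf{Proof plan for Proposition \ref{prop:supersol}.}

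The three statements are the standard existence/monotonicity/comparison package for a linear mild-form evolution, and I would prove them exactly in that order, reusing the machinery already set up.

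For (i), the plan is to run the Banach fixed point argument of Theorem \ref{thm:1}, but now in the space $C([0,T]; C_p(\R^d;\C))$ equipped with the weighted sup-metric $d_T^{(p)}(y_1,y_2)=\sup_{0\le t\le T}\vert| y_1(\cdot,t)-y_2(\cdot,t)\vert|_p$. Define $(\calT y)(\cdot,t)=E(t)y_0+\int_0^t E(t-\tau)\mathcal{L}(y(\cdot,\tau))\ud\tau$. Using the two estimates of Lemma \ref{lem:estEL}, namely $\vert| E(t)\psi\vert|_p\le e^{-(1-p\|A\|)t}\vert|\psi\vert|_p$ and $\vert|\mathcal{L}\psi\vert|_p\le 2\vert|\psi\vert|_p$, one gets $\vert|\calT y_1(\cdot,t)-\calT y_2(\cdot,t)\vert|_p\le 2\int_0^t e^{-(1-p\|A\|)(t-\tau)}\ud\tau\, d_t^{(p)}(y_1,y_2)\le \frac{2}{1-p\|A\|}(1-e^{-(1-p\|A\|)T})\, d_T^{(p)}(y_1,y_2)$ whenever $p\|A\|<1$ (if $p\|A\|\ge 1$ one instead uses the crude bound $\vert| E(t)\psi\vert|_p\le e^{p\|A\|t}\vert|\psi\vert|_p$ on a short interval); this is a contraction for $T$ small, and the bound is uniform enough to iterate to all $T\in(0,\infty)$, giving the global solution. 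One also checks $\calT$ maps $C([0,T];C_p)$ into itself, which is immediate from the same two estimates plus continuity in $t$ of $E(t)y_0$ and of the Duhamel integral. This defines the semigroup notation $\exp[(\mathcal{L}-I-\hat D)t]y_0$ consistently with the earlier use in Theorem \ref{thm:2}.

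For (ii), the key observation is positivity preservation: if $y_0\ge 0$ then the Picard iterates $y^{(n+1)}=\calT y^{(n)}$ starting from $y^{(0)}=0$ are all nonnegative, because $E(t)$ is multiplication by $e^{-t}$ composed with a change of variables (cf.\ \eqref{eq:opE}) hence preserves nonnegativity, $\mathcal{L}$ has a nonnegative kernel $g(\hat k\cdot n)$ hence maps nonnegative functions to nonnegative functions (cf.\ \eqref{eq:linBolzF}), and the Duhamel integral of a nonnegative integrand is nonnegative. Passing to the limit gives $\exp[(\mathcal{L}-I-\hat D)t]y_0\ge 0$. Linearity then upgrades this to monotonicity: apply it to $u_0-y_0\ge 0$ and use that the solution map is linear, so $\exp[(\mathcal{L}-I-\hat D)t](u_0-y_0)=\exp[(\mathcal{L}-I-\hat D)t]u_0-\exp[(\mathcal{L}-I-\hat D)t]y_0\ge 0$, which is \eqref{eq:mon1}.

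For (iii), I would set $w(\cdot,t)=u(\cdot,t)-\exp[(\mathcal{L}-I-\hat D)t]y_0(\cdot)$; subtracting the integral identity for the semigroup from the inequality \eqref{eq:ubound} gives $w(\cdot,t)\ge \int_0^t E(t-\tau)\mathcal{L}(w(\cdot,\tau))\ud\tau$, with $w$ a priori only known to lie in $C_p$. Iterating this inequality (a Gr\"onwall/Neumann-series argument in the weighted norm, legitimate because $E$ preserves nonnegativity and $\mathcal{L}$ has nonnegative kernel, so each application of the map $w\mapsto\int_0^t E(t-\tau)\mathcal{L}w\,\ud\tau$ preserves the inequality and the $n$-fold iterate has operator norm $\le (2t)^n e^{p\|A\|t}/n!\to 0$) forces $w(\cdot,t)\ge 0$, i.e.\ \eqref{eq:supersol}. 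This is essentially the comparison step already invoked in the proof of Theorem \ref{thm:2}; the only mild care needed is that the "standard comparison argument" there is made rigorous here by the Neumann-series iteration rather than a maximum principle, since the relevant functions need not decay and the domain is unbounded. I expect this iteration — checking that the remainder of the Neumann series vanishes in the $\vert|\cdot\vert|_p$ norm and that the positivity of the kernel is genuinely what makes each step monotone — to be the only point requiring any real care; the rest is bookkeeping with the estimates of Lemma \ref{lem:estEL}.
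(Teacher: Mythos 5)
Your proposal is correct, and for item (i) it coincides with the paper's argument (Picard/Banach iteration in $C([0,T);C_p(\R^d;\C))$ driven by the two estimates of Lemma \ref{lem:estEL}, then extension in $T$ by linearity). Where you diverge is in the logical organization of (ii) and (iii). The paper proves (iii) first, by showing inductively that the very iterates $z_n$ from the proof of (i) satisfy $z_n(\cdot,t)\leq u(\cdot,t)$ (each step uses \eqref{eq:ubound} once), and then passes to the limit $z_n\to y$; item (ii) is then obtained as a corollary of (iii), simply by observing that $u(\cdot,t)=\exp\big[(\mathcal{L}-I-\hat D)t\big]u_0$ satisfies \eqref{eq:ubound} because $E(t)u_0\geq E(t)y_0$. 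You instead prove (ii) directly, via nonnegativity of the Picard iterates (positivity of $E(t)$ and of the kernel of $\mathcal{L}$) plus linearity of the solution map applied to $u_0-y_0$, and prove (iii) by subtracting the semigroup identity from \eqref{eq:ubound} to get $w\geq\int_0^t E(t-\tau)\mathcal{L}w\,d\tau$ for $w=u-y$ and killing the remainder by a Neumann-series estimate $\|\Phi^n w\|_p\lesssim (2t)^n e^{p\|A\|t}/n!\to 0$. Both routes are sound: the paper's induction against $u$ is slightly more economical, since it needs no decay estimate for the iterated operator (only the convergence $z_n\to y$ already established), while your version makes explicit two points the paper leaves implicit, namely the nonnegativity $0\leq\exp\big[(\mathcal{L}-I-\hat D)t\big]y_0$ and the fact that the "standard comparison argument" invoked in Theorem \ref{thm:2} is rigorously a monotone Neumann-series iteration rather than a maximum principle. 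Note also that both you and the paper gloss over the same minor technical point, the continuity of $t\mapsto E(t)y_0$ in the $\|\cdot\|_p$ norm for a general $y_0\in C_p(\R^d;\C)$, so this is not a gap relative to the paper's own standard of rigor.
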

\begin{remark}
Notice that for any function $\psi\in C_p(\R^{d};\C)$  we have $\big(E(t)\psi\big)(k)=e^{-t}\psi(e^{-At}k).$ Thus $E(t)$ maps $C_p(\R^{d};\C)$ into itself for any $t\geq 0$ and therefore equation \eqref{eq:sgreq} is meaningful for $y\in C([0,\infty);C_p(\R^{d};\C))$.
\end{remark}

\begin{proof}
In order to prove item $(i)$, namely the existence of a solution $y$  satisfying \eqref{eq:sgreq}, we define the following iterative sequence 
\be \label{eq:iterzn}
z_0(\cdot,t)=0,\quad z_{n+1}(\cdot,t)= E(t)y_0(\cdot)  + \int_{0}^{t} d\tau \ E(t-\tau) \mathcal{L} z_{n}(\cdot,\tau ). 
\ee 
Then we consider the p-norm of the difference $z_{n+1}-z_{n}$ for $n\geq 1$, using the equation above. Using Lemma \ref{lem:estEL} and setting $c\leq 1-p\vert | A \vert | $ we obtain
\bes 
\|z_{n+1}(t)- z_{n}(t)\|_p  \leq 2\int_0^t  e^{-c(t-\tau) }\vert| z_{n+1}(\tau)-z_{n}(\tau)\vert|_{p} .
\ees
Notice that we can assume without loss of generality that $c\neq 0$ (it may be negative). Then, for a given $T> 0$, 
$$\sup_{0\leq t\leq T}\|z_{n+1}(t)- z_{n}(t)\|_p \leq \frac {2}{\vert c\vert } \vert 1- e^{-cT }\vert \sup_{0\leq t\leq T}\|z_{n}(t)- z_{n-1}(t)\|_p,\quad n\geq 2. $$
On the other hand, since $z_0(\cdot,t)=0$,  we have
$$\sup_{0\leq t\leq T}\|z_{1}(t)- z_{0}(t)\|_p \leq \max\{1, e^{-cT }\} \vert| y_0\vert|_{p}.$$
Then, if $T>0$ is sufficiently small we obtain that $\{z_{n}\}_{n\geq 0} $ is a Cauchy sequence in $C([0,T); C_p(\R^{d};\C))$.  Thus there exists $y\in C([0,T); C_p(\R^{d};\C)) $ such that $\lim_{n\to\infty} z_{n}= y$ in $C([0,T); C_p(\R^{d};\C)) $. 
Taking the limit as $n\to \infty$ in \eqref{eq:iterzn} it follows that $y$ satisfies \eqref{eq:sgreq}. Due to the linearity of the problem  \eqref{eq:sgreq} we can use the standard extension argument to obtain $T$ arbitrary large. Therefore the existence part of item (i) follows. 
To prove uniqueness we suppose to have two different solutions $y_1, y_2 \in C([0,T); C_p(\R^{d};\C))$. Their difference satisfies
\bes 
y_1(\cdot,t)- y_2(\cdot,t)=E(t)y_0(\cdot)  + \int_{0}^{t} d\tau \ E(t-\tau) \mathcal{L}\left( y_1- y_2\right)(\cdot,\tau ).
\ees
Then, Lemma \ref{lem:estEL} implies
\bes 
\| y_1(t)- y_2(t)\|_p\leq 2 \int_{0}^{t} d\tau \ e^{-c(t-\tau)}  \| y_1(\tau)- y_2(\tau)\|_p
\ees
whence $y_1(\cdot,t)=y_2(\cdot,t)$ by using Gronwall Lemma.  

We now prove item $(iii)$.  Suppose that $u(\cdot,t)$ is such that the inequality \eqref{eq:ubound} holds. We can then prove that the sequence of functions $\{z_{n}\}$ defined in \eqref{eq:iterzn} satisfies 
\be \label{est:znu}
z_{n}(\cdot,t)\leq u(\cdot,t) \quad \text{ for any}\quad n\geq 0.
\ee 
This claim follows by induction. Indeed, for $n=0$ we have $z_{0}(\cdot,t)=0\leq u(\cdot,t)$. Suppose now that $n\geq 0$ and that $z_{n}(\cdot,t)\leq u(\cdot,t)$. Then 
\bes
z_{n+1}(\cdot,t)\leq E(t)y_0(\cdot)  + \int_{0}^{t} d\tau \ E(t-\tau) \mathcal{L} u(\cdot,\tau ) \leq u(\cdot,\tau ),
\ees 
where in the first inequality we used the definition \eqref{eq:iterzn} and the induction assumption, while in the second inequality we used \eqref{eq:ubound}. Hence, \eqref{est:znu} follows whence $y=\lim_{n\to \infty} z_n \leq u$ and this yields item $(iii)$. 

We now prove item $(ii)$. To this end we notice that the definition of $\exp\big[\big(\mathcal{L}-I - \hat{D} \big) \ t\big]u_0(\cdot)=u(\cdot,t)$ implies 
\bes 
u(\cdot,t)=\left[E(t)u_0\right](\cdot)+\int_0^t E(t-\tau) \mathcal{L}\big(u(\cdot,\tau)\big)d\tau. 
\ees 
Using then $E(t)u_0 \geq E(t)y_0$ it follows that the inequality \eqref{eq:ubound} holds. Whence, \eqref{eq:mon1} follows from item $(iii)$ and the proposition is proved.
\end{proof}

\begin{lemma}\label{lem:bdup}
For any $p>0$ and any matrix $A\in M_{d\times d}(\R)$  we define the function 
\be \label{eq:defup}
u_p(k,t)=\vert k \vert^p \exp\big[-\big(\lambda(p)-p\vert | A\vert |\big)t\big],\quad k\in\R^d,\; t\geq 0,
\ee  
where 
\be \label{eq:lambdap}
\lambda(p)=\int_{S^{d-1}} dn \ g(\omega \cdot n) \left[1- \left(\frac{1+(\omega\cdot n)}{2}\right)^{\frac p 2}-\left(\frac{1-(\omega\cdot n)}{2}\right)^{\frac p 2} \right].
\ee
Then, $u_p(k,t)\in C([0,T); C_{p}(\R^d;\C))$ and satisfies the inequality (cf. \eqref{eq:ubound})
\be \label{eq:bdup}
 u_{p}(k,t)\geq E(t)\vert k \vert^p+\int_0^t d\tau \left( E(t-\tau) \mathcal{L}(u_p(\cdot, \tau))\right)(k) \quad \text{for any}\quad t\geq 0.
\ee
\end{lemma}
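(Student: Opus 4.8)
The plan is to notice that $|k|^{p}$ behaves, modulo the dilation induced by $E(t)$, like an eigenfunction of the linearized operator $\mathcal{L}$, so that the right-hand side of \eqref{eq:bdup} can be evaluated almost explicitly. First, the regularity claim $u_{p}\in C([0,T);C_{p}(\R^{d};\C))$ is immediate: since $u_{p}(k,t)=|k|^{p}e^{-(\lambda(p)-p\|A\|)t}$ and $\sup_{k}\frac{|k|^{p}}{1+|k|^{p}}=1$, we get $\|u_{p}(\cdot,t)\|_{p}=e^{-(\lambda(p)-p\|A\|)t}$, finite and continuous in $t$; note also that $\lambda(p)$ in \eqref{eq:lambdap} is finite because its integrand involves a bracket bounded on $[-1,1]$ against $g$ with $\int_{S^{d-1}}g(\omega\cdot n)\,dn<\infty$ by \eqref{ass:kernel}.

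The key computation is the eigenvalue identity. For $k_{\pm}=\tfrac12(k\pm|k|n)$ one has $|k_{\pm}|^{2}=\tfrac{|k|^{2}}{2}(1\pm\hat{k}\cdot n)$, hence $|k_{\pm}|=|k|\big(\tfrac{1\pm\hat{k}\cdot n}{2}\big)^{1/2}$. Substituting $\psi(k)=|k|^{p}$ in \eqref{eq:linBolzF} gives
\be
\mathcal{L}\big[|\cdot|^{p}\big](k)=|k|^{p}\int_{S^{d-1}}dn\ g(\hat{k}\cdot n)\Big[\Big(\tfrac{1+\hat{k}\cdot n}{2}\Big)^{p/2}+\Big(\tfrac{1-\hat{k}\cdot n}{2}\Big)^{p/2}\Big],
\ee
and by the orthogonal invariance of integrals of the form $\omega\mapsto\int_{S^{d-1}}g(\omega\cdot n)h(\omega\cdot n)\,dn$ together with the normalization \eqref{eq:kernel} and the definition \eqref{eq:lambdap}, the integral equals $1-\lambda(p)$ independently of $\hat{k}$. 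Thus $\mathcal{L}\big[|\cdot|^{p}\big](k)=(1-\lambda(p))|k|^{p}$, and moreover $1-\lambda(p)\ge0$, being the integral of a nonnegative function. I will also use the pointwise dilation bound: from $E(s)\psi(k)=e^{-s}\psi(e^{-As}k)$ and $|e^{-As}k|\le e^{\|A\|s}|k|$ one gets, for any constant $c\ge0$, $E(s)\big[c\,|\cdot|^{p}\big](k)\le c\,e^{-(1-p\|A\|)s}|k|^{p}$.

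Finally I substitute these into $R(k,t):=E(t)|k|^{p}+\int_{0}^{t}d\tau\,\big(E(t-\tau)\mathcal{L}(u_{p}(\cdot,\tau))\big)(k)$, the right-hand side of \eqref{eq:bdup}. Writing $a=1-p\|A\|$ and $b=\lambda(p)-p\|A\|$ (so $a-b=1-\lambda(p)$ and $u_{p}(k,t)=|k|^{p}e^{-bt}$), one has $\mathcal{L}(u_{p}(\cdot,\tau))(k)=(1-\lambda(p))e^{-b\tau}|k|^{p}\ge0$, and hence by the dilation bound
\be
R(k,t)\le e^{-at}|k|^{p}+(1-\lambda(p))|k|^{p}\int_{0}^{t}d\tau\ e^{-a(t-\tau)}e^{-b\tau}.
\ee
The elementary integral equals $\frac{e^{-bt}-e^{-at}}{a-b}=\frac{e^{-bt}-e^{-at}}{1-\lambda(p)}$, so the right-hand side collapses to $R(k,t)\le e^{-at}|k|^{p}+|k|^{p}\big(e^{-bt}-e^{-at}\big)=|k|^{p}e^{-bt}=u_{p}(k,t)$, which is exactly \eqref{eq:bdup}. (When $\lambda(p)=1$ we have $a=b$, the integral term vanishes, and $R(k,t)\le e^{-at}|k|^{p}=u_{p}(k,t)$ directly.) There is no serious obstacle: the argument is a short direct computation once the eigenfunction structure is seen. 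The only points needing care are the orthogonal-invariance reduction that makes the $\mathcal{L}$-integral independent of $\hat{k}$, keeping track that $1-\lambda(p)\ge0$ so the inequalities from $|e^{-As}k|\le e^{\|A\|s}|k|$ point the right way, and handling the degenerate value $\lambda(p)=1$ separately (or by continuity).
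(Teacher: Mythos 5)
Your proof is correct and follows essentially the same route as the paper: the eigenvalue identity $\mathcal{L}\vert k\vert^{p}=(1-\lambda(p))\vert k\vert^{p}$ (which the paper quotes as \eqref{eq:Lkp} and you rederive from $\vert k_{\pm}\vert^{2}=\tfrac{\vert k\vert^{2}}{2}(1\pm\hat{k}\cdot n)$), the dilation bound $E(s)\vert k\vert^{p}\leq e^{-(1-p\Vert A\Vert)s}\vert k\vert^{p}$, and an exact evaluation of the resulting time integral. Your extra checks (the $C_{p}$-norm continuity and the degenerate case $a=b$, which in fact never occurs since $1-\lambda(p)\geq\min\{1,2^{1-p/2}\}>0$) are harmless additions to the paper's argument.
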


\begin{remark}
We notice that $\lambda(p)$  is the eigenvalue of the linearized collision operator $-\big(\mathcal{L}-I\big)$ (cf.~\eqref{eq:linBolzF}) that corresponds to the eigenfunction $\vert k \vert^p$ and the norm $\|A\|$ is defined as in \eqref{eq:normA2}. More precisely, 
\be \label{eq:Lkp}
\mathcal{L} \vert k\vert^p= \big( 1-\lambda(p)\big) \vert k\vert^p,\quad p>0.
\ee
We also remark that if $A=-\vert | A\vert | I$, where $I$ is the identity matrix, than the inequality \eqref{eq:bdup} becomes an identity. 
\end{remark}

\begin{proof}
To prove \eqref{eq:bdup} we argue as follows. 
We compute the right hand side of \eqref{eq:bdup} and we have
\begin{align*}
J:&= E(t)\vert k \vert^p+\int_0^t d\tau \left( E(t-\tau) \mathcal{L}(u_p(\cdot, \tau))\right)(k) \\&=   
E(t)\vert k \vert^p+ \big( 1-\lambda(p)\big) \int_0^t d\tau  e^{-(\lambda(p)-p\vert | A\vert |)\tau} E(t-\tau) \vert k\vert^p
\end{align*}
where we used \eqref{eq:Lkp}. 
Note that $\big( 1-\lambda(p)\big) >0$. 
 Using also that $E(t)\vert k \vert^p=e^{-t} \vert e^{-At} k\vert^p \leq e^{-t}e^{p\vert | A\vert| t}\vert k\vert^p$ we obtain
\begin{align*}
 J& \leq e^{-t}e^{p\vert | A\vert| t}\vert k\vert^p+  \big( 1-\lambda(p)\big)\vert k\vert^p \int_0^t d\tau e^{-(t-\tau)} e^{p\vert | A\vert| (t-\tau)}e^{-(\lambda(p)-p\vert | A\vert |)\tau} \\&
\leq  e^{-(\lambda(p)-p\vert | A\vert |)t}\vert k\vert^p=u_p(k,t)
\end{align*} 
and the result follows. 
\end{proof}

The fact that the equality sign  in \eqref{eq:bdup} holds for $A=-\vert | A\vert | I$ follows from the fact that for this particular choice of the matrix $A$ we have $E(t)\vert k \vert^p=e^{-t}e^{p\vert | A\vert| t } \vert  k\vert^p$. 

\medskip

We can now state the following fundamental result which is the main result of this section.
\begin{theorem} \label{thm:3}
Suppose that we consider two solutions $\vphi_{1}(k,t)$, $\vphi_{2}(k,t)$ to \eqref{eq:IntBoltFou}, $\vphi_{1}, \vphi_{2}\in C([0,T); \Phi) $ such that 
\be
\label{eq:estfi1fi2time0}
\vert \vphi_{1}(k,0)-\vphi_{2}(k,0) \vert \leq B\vert k\vert^p,\quad
\ee
for any $k\in \R^d$ and some $p>0$, $B>0$. Then the following inequality holds for all $t\geq 0$ and $k\in\R^d$:
\be\label{eq:estfi1fi2timet}
\vert \vphi_{1}(k,t)-\vphi_{2}(k,t) \vert \leq B\vert k\vert^p \exp\big[-\big(\lambda(p)-p\vert | A\vert |\big)t\big].
\ee
Moreover,  the function $\lambda(p)$ given in \eqref{eq:lambdap} is an increasing function for $p\in [0,\infty)$ and it satisfies $\lambda(p)>-1=\lambda(0)$ for $p>0$ and $\lambda(p)>0=\lambda(2)$ for $p>2$.
\end{theorem}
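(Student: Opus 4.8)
The plan is to combine the Comparison Theorem \ref{thm:2} with the supersolution estimate of Lemma \ref{lem:bdup}. First I would apply Theorem \ref{thm:2} with initial data $\vphi_{1,0},\vphi_{2,0}$ to obtain the pointwise bound $\vert\vphi_1(k,t)-\vphi_2(k,t)\vert\leq\big[\exp[(\mathcal{L}-I-\hat D)t]\,\vert\vphi_{1,0}-\vphi_{2,0}\vert\big](k)$. By hypothesis \eqref{eq:estfi1fi2time0} the initial datum $y_0:=\vert\vphi_{1,0}-\vphi_{2,0}\vert$ is a nonnegative function in $C_p(\R^d;\C)$ with $y_0(k)\leq B\vert k\vert^p$. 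Now Lemma \ref{lem:bdup} tells us that $u_p(k,t)=B\vert k\vert^p\exp[-(\lambda(p)-p\|A\|)t]$ (rescaling that lemma's function by the constant $B$, which is legitimate by linearity) satisfies inequality \eqref{eq:ubound} with this $y_0$, since $u_p(k,0)=B\vert k\vert^p\geq y_0(k)$ and the integral inequality \eqref{eq:bdup} scales by $B$. Hence by item (iii) of Proposition \ref{prop:supersol}, $\exp[(\mathcal{L}-I-\hat D)t]\,y_0(k)\leq u_p(k,t)$, which chained with the Comparison Theorem gives \eqref{eq:estfi1fi2timet}.

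For the monotonicity and sign statements about $\lambda(p)$, I would work directly from formula \eqref{eq:lambdap}. Writing $s=\omega\cdot n\in[-1,1]$ and $a=\tfrac{1+s}{2}$, $b=\tfrac{1-s}{2}$, so $a+b=1$ and $a,b\in[0,1]$, the bracket in \eqref{eq:lambdap} is $1-a^{p/2}-b^{p/2}$. Since $a,b\in[0,1]$, the map $p\mapsto a^{p/2}$ is nonincreasing (strictly decreasing unless $a\in\{0,1\}$), so the bracket is nondecreasing in $p$; integrating against the nonnegative kernel $g(\omega\cdot n)$ preserves this, giving that $\lambda(p)$ is increasing on $[0,\infty)$. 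At $p=0$ the bracket equals $1-1-1=-1$, and since $g$ is normalized by \eqref{eq:kernel} we get $\lambda(0)=-1$; strictness of the monotonicity for $p>0$ (using that the set $\{n:\,\omega\cdot n\in(-1,1)\}$ has positive measure) yields $\lambda(p)>-1$ for $p>0$. At $p=2$ the bracket is $1-a-b=0$, hence $\lambda(2)=0$, and again strict monotonicity gives $\lambda(p)>0$ for $p>2$.

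The only genuinely delicate point is justifying that Lemma \ref{lem:bdup} may be applied with the extra multiplicative constant $B$: this is immediate because both \eqref{eq:bdup} and the defining relation \eqref{eq:sgreq} are linear in the initial datum, so $B\,u_p$ is a supersolution with initial value $B\vert k\vert^p$. A minor technical caveat is that one must check $\vert\vphi_{1,0}-\vphi_{2,0}\vert$ genuinely lies in $C_p(\R^d;\C)$ so that Proposition \ref{prop:supersol} applies; this follows since $\vphi_{1,0},\vphi_{2,0}\in\Phi\subset\mathcal U$ are bounded continuous functions, so their difference is bounded and continuous, hence in $C_p$ for every $p>0$. Everything else is bookkeeping, and I expect the verification of the $\lambda(p)$ properties — in particular isolating where strict inequalities come from — to be the part requiring the most care, though it is elementary.
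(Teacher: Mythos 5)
Your proposal is correct and follows essentially the same route as the paper: Theorem \ref{thm:2} combined with the monotonicity and supersolution statements of Proposition \ref{prop:supersol} and Lemma \ref{lem:bdup} (the rescaling by $B$ being harmless by linearity), and then an elementary analysis of \eqref{eq:lambdap}. The only cosmetic difference is that the paper obtains the monotonicity of $\lambda(p)$ by differentiating under the integral sign rather than by your pointwise monotonicity of $p\mapsto\Omega_{\pm}^{p/2}$, which is an equivalent argument.
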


\begin{proof} 
Combining the inequality \eqref{eq:est2th2} in Theorem \ref{thm:2} with item $(ii)$ in Proposition \ref{prop:supersol} as well as \eqref{eq:estfi1fi2time0} we obtain 
\bes 
\vert \vphi_{1}(k,t)-\vphi_{2}(k,t) \vert \leq B  \exp( (\mathcal{L}-I-\hat{D})t)  (\vert k\vert^p).
\ees 
Using now item $(iii)$ in Proposition \ref{prop:supersol} as well as Lemma \ref{lem:bdup} we obtain $e^{(\mathcal{L}-I-\hat{D})t}\vert k\vert^p \leq u_p(k,t)$ whence the theorem follows.

The monotonicity of the function $\lambda(p)$ is obtained differentiating  \eqref{eq:lambdap}, namely
\bes
\frac{d \lambda}{dp}=-\frac 1 2 \int_{S^{d-1}} dn \ g(\omega \cdot n) \left[\left(\Omega_{+}\right)^{\frac p 2} \log\left(\Omega_{+}\right) +\left(\Omega_{-}\right)^{\frac p 2}\log \left(\Omega_{-}\right) \right] > 0,
\ees
where $\Omega_{\pm}= \frac{1\pm (\omega\cdot n)}{2}\in [0,1]$ and then $\log(\Omega_{\pm})\leq 0$.   Since $\lambda(0)=-1$  and $\lambda(2)=0$ the theorem follows. 
\end{proof}

\bigskip 
We now briefly discuss  the meaning of Theorem \ref{thm:3}. The inequality \eqref{eq:estfi1fi2timet} implies 
\bes
\lim_{t\to\infty} \vert \vphi_{1}(k,t)-\vphi_{2}(k,t)  \vert =0,\quad k\in\R^d
\ees
provided that $p>2$ and $\vert | A \vert | < \frac 1 p \lambda(p).$

\medskip

 In particular, if $f_1, f_2 \in C([0,T); \mathcal{P}_{+}(\R^d))$ are two solutions of \eqref{eq:GenBolt} with initial data $f_{1,0}, f_
 {2,0}$,  having identical moments at time zero, i.e.~$m^1_{\alpha}(0)=m^2_{\alpha}(0)$ with $|\alpha|\leq N$, $N\geq 3$,  then, estimating the right hand side of \eqref{eq:estfi1fi2}  in Lemma \ref{lem:estfi1fi2} by $C \vert k\vert^N$ and using Theorem \ref{thm:3}, we obtain that $f_1(\cdot,t), f_2(\cdot,t)$ will be close as $t\to \infty$ in the weak topology of probability measures. This is due to the classical result (see for instance \cite{Fe2}) that relates the pointwise convergence of characteristic functions with the weak convergence of probability measures. Notice that in the simplest case $A=0$ the stationary solutions of \eqref{eq:GenBolt} are given by Maxwellian distributions $f_{st}(\cdot)$.
Then, if $f_{2}(\cdot,t)= f(\cdot,t) $ is any other solution of \eqref{eq:GenBolt} having the same moments up to order $|\alpha|\leq 3$ as $f_{st}(\cdot)$ at time zero, it then follows, taking $f_{1}(\cdot,t)=f_{st}(\cdot)$, that   
\be \label{eq:asym}
\lim_{t\to\infty} \vphi(k,t)=\vphi_{st}(v), \quad k\in\R^d ,
\ee 
where $\vphi_{st}(\cdot)$ is the characteristic function associated to  $f_{st}(\cdot)$. Hence $f(\cdot,t) $ converges to $f_{st}(\cdot)$ as $t\to \infty$ in the weak topology of measures. 

We can apply a similar strategy in the case $A\neq 0$ although in this case we need to take into account that in general there are no steady states of \eqref{eq:GenBolt}. On the contrary, we will show that the long-time asymptotics of the solutions to \eqref{eq:GenBolt} is given by a family of self-similar solutions.  On the other hand, some additional analysis is needed to study the long-time behaviour of the second order moments. This is due to the fact that  we can expect the long-time behaviour of the solutions to \eqref{eq:GenBolt} to be characterized only by  moments of order zero and one, as well as the kinetic energy $\int_{\R^d} f(v,t) \vert v\vert^2 dv$.

\section{Generalities about self-similar and stationary solutions}  \label{sec:6}

In this section we discuss in a formal way the expected long-time asymptotics for the solutions to the integral equation \eqref{eq:IntBoltFou} (cf. also \eqref{eq:BoltFourier2}). 
 It is illuminating to begin describing first the situation for $A=0$ since in this case \eqref{eq:BoltFourier2} is just the  classical Boltzmann equation in the Fourier representation.    % We refer for instance  to \cite{ToVi} , \cite{We}.

It is readily seen, using \eqref{eq:BoltCollFourier}, that \eqref{eq:BoltFourier2} with $A=0$ has the following family of steady states 
%The asymptotic property (cf. \eqref{eq:estfi1fi2timet}) of solutions to the integral equation \eqref{eq:IntBoltFou}  for $p>2$ and sufficiently small $\|A\|$ becomes very natural if we assume that these solutions tend as $t\to\infty$ to the same stationary solution of \eqref{eq:IntBoltFou}.  This is exactly what happens in the well-known case $A=0$, i.e. for classical Boltzmann equation transformed to its integral form in the Fourier representation \cite{ToVi} , \cite{We}. If $A=0$ there exists a unique stationary solution to \eqref{eq:IntBoltFou} (or, equivalently \eqref{eq:BoltFourier2})
\be\label{eq:PsiAzero}
\Psi_{st}(k)=e^{-i U\cdot k}e^{-b |k|^2}, \quad k\in\R^d,\; U\in\R^d,\; 0\leq b<\infty. %notice that b=0 gives the dirac
\ee
Note that we always use the normalization $\Psi(0)=1$. Actually, there is a natural proof, using the machinery developed in the previous sections, that these are the only steady states to \eqref{eq:BoltFourier2} in a suitable class of characteristic functions. More precisely we have the following
\begin{proposition}\label{prop:mom}
Let $\Psi\in\Phi$ such that 
\be \label{eq:ExMomPsi}
\vert \Psi(k)-\sum_{\vert \alpha \vert\leq 2}c_{\alpha}(k)^\alpha\vert \leq C\vert k\vert^p \quad \forall \; k\in\R^d, \; p>2,%\text{for some}\; 
\ee
for some $c_{\alpha}\in\C$ and $C>0.$ Suppose also that $\Psi$ is a steady state to \eqref{eq:BoltFourier2}  with $A=0$, i.e. $ \Psi(\cdot)=\Gamma[ \Psi] (\cdot)$. Then there exists a vector $U\in \R^d$ and a real constant $b\geq0$ such that 
$\Psi$ is as in \eqref{eq:PsiAzero}.
\end{proposition}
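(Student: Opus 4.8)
The plan is to exploit the uniqueness machinery of Theorem~\ref{thm:3} combined with the existence of explicit steady states \eqref{eq:PsiAzero}. First I would extract from the hypothesis \eqref{eq:ExMomPsi} the values of the coefficients $c_\alpha$: since $\Psi = \hat f$ for some probability measure $f$, the condition $|\Psi(k) - \sum_{|\alpha|\le 2} c_\alpha k^\alpha| \le C|k|^p$ with $p>2$ forces $f$ to have finite second moments, and $c_O = 1$ (normalization), $c_\alpha = -i\, m_\alpha$ for $|\alpha|=1$ (so $c_\alpha$ encodes the first moment vector, call it $U\in\R^d$, up to the factor $-i$), and $c_\alpha = -\tfrac12 m_\alpha$ for $|\alpha|=2$. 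So the second moment matrix of $f$ is some symmetric positive semidefinite matrix $\Sigma$.

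The key idea is then to compare $\Psi$ with a \emph{candidate} steady state of the form \eqref{eq:PsiAzero}. The natural candidate is a Gaussian characteristic function $\Psi_{st}(k) = e^{-iU\cdot k} e^{-\frac12 k^T \Sigma k}$ built to have exactly the same zeroth, first and second moments as $f$. Unfortunately, in dimension $d\ge 2$, $e^{-\frac12 k^T\Sigma k}$ is in general \emph{not} a steady state of \eqref{eq:BoltFourier2} with $A=0$ unless $\Sigma$ is a scalar multiple of the identity — the collision operator is isotropic, so only $\Psi_{st}(k) = e^{-iU\cdot k}e^{-b|k|^2}$ works. So the real content of the proposition is that the steady-state equation $\Psi = \Gamma[\Psi]$ together with \eqref{eq:ExMomPsi} \emph{forces} $\Sigma = 2b I$ for some $b\ge 0$; i.e. forces the covariance to be isotropic. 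To see this I would examine the steady-state equation at second order in $k$: writing $\Psi(k) = 1 - iU\cdot k - \tfrac12 k^T\Sigma k + o(|k|^2)$ and plugging into $\Psi = \Gamma[\Psi] = \mathcal I^+(\Psi,\Psi)$, the zeroth and first order terms match automatically (mass and momentum are collision invariants), and the second-order terms give a linear equation $\tfrac12 k^T\Sigma k = \tfrac12 (\mathcal L - I + I)[\text{quadratic part}]$, i.e. $\Sigma$ must be a fixed point of the action of $\mathcal L$ restricted to quadratic forms. Computing this action — using $k_\pm = \tfrac12(k\pm|k|n)$ so that $k_+^T\Sigma k_+ + k_-^T\Sigma k_- = \tfrac12 k^T\Sigma k + \tfrac12 |k|^2 n^T\Sigma n$ and averaging over $n\in S^{d-1}$ against $g(\hat k\cdot n)$ — shows that the only quadratic forms fixed by $\mathcal L$ are the multiples of $|k|^2$ (the other eigenvalues $\lambda(2)$ counted with the right multiplicities are nonzero; this is where the harmonic decomposition of quadratic forms on the sphere enters). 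Hence $\Sigma = 2bI$ with $b\ge 0$ (positivity from positive semidefiniteness of $\Sigma$).

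Having identified that the correct candidate is $\Psi_{st}(k) = e^{-iU\cdot k}e^{-b|k|^2}$ with the \emph{same} moments up to order $2$ as $\Psi$, I would finish by a uniqueness argument. Both $\Psi$ and $\Psi_{st}$ are stationary solutions of \eqref{eq:IntBoltFou} with $A=0$ (constant in time), both lie in $\Phi$, and by construction $|\Psi(k)-\Psi_{st}(k)| \le B|k|^p$ for some $B>0$ and $p>2$ in a neighborhood of $k=0$; globally the bound $|\Psi - \Psi_{st}| \le 2$ together with the local bound gives $|\Psi(k) - \Psi_{st}(k)| \le B'|k|^p$ for all $k\in\R^d$ after adjusting the constant (since $|k|^p$ is bounded below on $|k|\ge 1$). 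Applying Theorem~\ref{thm:3} with $A=0$ gives, for every $t\ge 0$ and $k\in\R^d$,
\[
|\Psi(k) - \Psi_{st}(k)| = |\vphi_1(k,t) - \vphi_2(k,t)| \le B'|k|^p e^{-\lambda(p)t},
\]
and since $\lambda(p)>0$ for $p>2$ (last assertion of Theorem~\ref{thm:3}), letting $t\to\infty$ forces $\Psi(k) = \Psi_{st}(k)$ for all $k$. This proves $\Psi$ has the form \eqref{eq:PsiAzero}.

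The main obstacle is the middle step: rigorously extracting, from the single scalar steady-state equation $\Psi=\Gamma[\Psi]$ and the $o(|k|^2)$ control, that the second moment matrix of $f$ must be isotropic. This requires identifying the eigenstructure of the linearized collision operator $\mathcal L$ acting on the $5$-dimensional (in general $\binom{d+1}{2}$-dimensional) space of quadratic forms in $k$: the trace part $|k|^2$ is the eigenvalue-$1$ eigenspace of $\mathcal L$ (equivalently eigenvalue $\lambda(2)=0$ of $\mathcal L - I$), whereas the traceless part has a strictly different eigenvalue, so it cannot survive in a steady state. One must also be slightly careful that \eqref{eq:ExMomPsi} only gives a one-sided Taylor expansion (an inequality, not smoothness), but since $\Psi=\hat f$ with $f$ having finite second moments, $\Psi\in C^2$ near $0$ and the expansion is genuine, so differentiating the steady-state identity twice at $k=0$ is legitimate.
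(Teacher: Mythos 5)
Your proposal follows essentially the same route as the paper: extract the quadratic part of $\Psi$ from the second-order balance in $\Psi=\Gamma[\Psi]$ (using $k_+^T\Sigma k_+ + k_-^T\Sigma k_- = \tfrac12 k^T\Sigma k+\tfrac12 |k|^2\, n^T\Sigma n$ and the eigenstructure of $\mathcal{L}$ on quadratic forms to conclude it must be a multiple of $|k|^2$), then compare with the Gaussian $e^{-iU\cdot k}e^{-b|k|^2}$ via Theorem \ref{thm:3} with $p>2$, $\lambda(p)>0$, and let $t\to\infty$. The one step to tighten is the first-order term: since $\Gamma$ is quadratic, for $U\neq 0$ the product of the first-order parts contributes $(U\cdot k_+)(U\cdot k_-)$ at second order, so the ``fixed point of $\mathcal{L}$'' argument applies to the covariance rather than to the raw second-moment matrix; the paper sidesteps this by first multiplying by the phase $e^{iU\cdot k}$, which preserves the steady-state property when $A=0$, and your argument is correct once this centering is made explicit.
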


\begin{proof}
We first observe that results equivalent to this one could be proved using for instance the $H-$Theorem. However, we will include here a proof based on the Fourier Transform method for completeness. 

Given that $\Psi\in\Phi$ we have that $\Psi(0)=1$.  
Using Taylor series as well as \eqref{eq:ExMomPsi} there exists a (unique) vector $U\in \R^d$ such that the function $e^{i U\cdot k}\Psi(k)$ satisfies
\bes
\vert e^{i U\cdot k}\Psi(k) -\big(1+\sum_{\vert \alpha \vert= 2}\tilde{c}_{\alpha}(k)^\alpha\big)\vert \leq C\vert k\vert^p \quad \forall \; k\in\R^d, \; p>2
\ees
with $\tilde{c}_{\alpha}\in\C$. Moreover, using \eqref{eq:BoltCollFourier} if follows that if $A=0$, $\Psi(k)$ is a steady state of  \eqref{eq:BoltFourier2} if and only if $e^{i U\cdot k}\Psi(k) $ is a steady state of \eqref{eq:BoltFourier2}. Therefore, we can assume without loss of generality that ${c}_{\alpha}=0$ for $\vert \alpha \vert=1$ in \eqref{eq:ExMomPsi}. We can then rewrite \eqref{eq:ExMomPsi} as 
\bes
\Psi(k)= 1+ \sum_{j,\ell} b_{j,\ell} k_j k_{\ell}+O(\vert k\vert^p)
\ees
where $B=(b_{j,k})_{j,k}$ is a symmetric matrix with real coefficients.  
Plugging now this expansion in the equation for the steady states to \eqref{eq:BoltFourier2} we obtain
\begin{align*}
B: (k\otimes k)&=\int_{S^{d-1}} dn\ g(\hat{k}\cdot n ) \left[B: (k_{+}\otimes k_{+})+B: (k_{-}\otimes k_{-})\right] \\&
= \frac{1}{2} B: (k\otimes k)+\frac{\vert k\vert^2}{2}  \int_{S^{d-1}} dn\ g(\hat{k}\cdot n )B: (n\otimes n)
\end{align*} 
where we used the standard tensor notation $B: (v\otimes v)=b_{j,k} v_{j}v_{k}$ for any $v\in\R^d$.
Hence
\bes
B: (k\otimes k)= \vert k\vert^2 \int_{S^{d-1}} dn\ g(\hat{k}\cdot n ) B: (n\otimes n)= \frac{  \vert k\vert^2  }{d} \Tr(B),\quad k\in\R^d
\ees
where we used that $\int_{S^{d-1}} dn\ g(\hat{k}\cdot n ) (n\otimes n)= \frac{1}{d} I$. This implies that $B$ is  a diagonal matrix, i.e. $B=-b I$, $b\in\R$. 
Thus
\be\label{eq:PsiTaylorBis}
\vert \Psi(k)-(1-b \vert k \vert ^2 )\vert\leq C \vert k\vert^p,\quad b\geq 0, \; p>2,\;k\in\R^d .
\ee
Using then Theorem \ref{thm:3} with $\vphi_1=\Psi$ and $\vphi_2=\Psi_{st}$ as in the right hand side of \eqref{eq:PsiAzero} (with $U=0$ and the same value of $C$) we obtain that  
\bes 
\vert \Psi(k)- \Psi_{st}(k)\vert \leq C\vert k\vert^p \exp\big(-\lambda(p) t\big), \quad \lambda(p)>0 .
%\||  \Psi- \Psi_{st} \||_{p} \leq (1-\lambda(p))\||  \Psi- \Psi_{st} \||_{p}
\ees
Notice that \eqref{eq:estfi1fi2time0} holds due to \eqref{eq:PsiTaylorBis} and the Taylor series at $k=0$ for $\Psi_{st}$ in \eqref{eq:PsiAzero}. 
Taking the limit as $t\to \infty$ we obtain that $\Psi (k)\equiv \Psi_{st}(k)$ for any $k\in\R^d$ and the result follows. 
\end{proof}

\begin{remark}
Notice that the stability of solutions for the time-dependent problem \eqref{eq:BoltFourier2} with $A=0$ can be proved using similar arguments. However, since the case $A=0$ is a particular case of the results discussed in this paper we will not continue a separate analysis here. 
\end{remark}
 
 It turns out that in the case $A\neq 0$ the solutions to \eqref{eq:BoltFourier2} (or \eqref{eq:IntBoltFou})
do not behave in general as a steady state as $t\to \infty$. Actually, the stationary version of \eqref{eq:BoltFourier2} 
\be \label{eq:StatBoltFourier2}
\vphi + \left(Ak\right)\cdot \partial_k \vphi = \mathcal{I}^{+}(\vphi,\vphi) . 
\ee 
does not have any non-trivial solution for arbitrary matrices $A$.   
In particular, the proof of the fact that \eqref{eq:StatBoltFourier2} does not have any non-trivial  solution follows from the analysis of the equations for the second order derivatives of $\vphi$ (which corresponds to the second order moments of the associated probability measure $f$). 
Note that we are interested mainly in solutions with uniformly bounded second derivatives in $k$, which correspond to distribution functions having bounded second moment (energy).

We first recall a standard result that guarantees that the regularity of $\vphi_0(\cdot)$ at $k=0$ is propagated in time. 
\begin{proposition}\label{prop:regularity}
Let $N\in\N$ such that $N\geq 1$. Suppose that $\vphi_0\in\Phi$ and it satisfies
\bes
\vert \vphi_0(k)-\big(1+\sum_{|\alpha|\leq N} c_{\alpha} k^{\alpha}\big) \vert \leq C_0 \vert k\vert ^p,\quad N<p\leq N+1,
\ees
for some positive $C_0\in \R$.  Suppose that $\vphi \in C([0,\infty);\Phi)$ solves \eqref{eq:IntBoltFou}. Then we have 
\be 
\vert \vphi(k,t)-\big(1+\sum_{|\alpha|\leq N} b_{\alpha}(t) k^{\alpha}\big) \vert \leq C_0(t) \vert k\vert ^p,\quad N<p\leq N+1,
\ee
where $b_{\alpha}\in C[0,\infty)$ such that $b_{\alpha}(0)=c_{\alpha}$ for $\vert \alpha\vert\leq N$, and $C_0(\cdot)\in  C[0,\infty)$. 
\end{proposition}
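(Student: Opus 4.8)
The plan is to reduce the statement to an estimate for a scalar remainder function, and then to close the argument with a Duhamel/fixed-point iteration along exactly the lines of Theorem \ref{thm:1}, keeping track of how the operator $\mathcal{I}^{+}$ and the transport term $E(t)$ act on the finite Taylor jet at $k=0$. First I would write $\vphi(k,t)=1+P(k,t)+R(k,t)$, where $P(k,t)=\sum_{1\leq|\alpha|\leq N} b_{\alpha}(t)k^{\alpha}$ is the (unknown) polynomial part and $R$ is the remainder with $R(k,0)$ controlled by $C_0|k|^p$. The key algebraic observation is that the quadratic operator $\mathcal{I}^{+}$ and the linear transport operator $\hat{D}=(Ak)\cdot\partial_k$ both map polynomials of degree $\leq N$ to polynomials of degree $\leq N$ (for $\mathcal{I}^{+}$ this is immediate from \eqref{eq:BoltCollFourier}, since $k_{\pm}$ are linear in $k$, so $k_{\pm}^{\alpha}$ is a homogeneous polynomial of degree $|\alpha|$ in $k$; for $\hat{D}$ it is clear since $\hat{D}$ preserves homogeneity degree). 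Consequently, projecting the mild equation \eqref{eq:IntBoltFou} onto polynomials of degree $\leq N$ yields a closed finite-dimensional linear ODE system for the coefficients $\{b_{\alpha}(t)\}_{1\leq|\alpha|\leq N}$ (the collision term is quadratic, but because of the constant term $1$ in $\vphi$ and the normalization $\vphi(0,t)=1$, the equation for each $b_{\alpha}$ with $|\alpha|\leq N$ is linear in the lower-order coefficients plus a contribution bilinear among the $b$'s of strictly lower order), with initial data $b_{\alpha}(0)=c_{\alpha}$; standard ODE theory gives $b_{\alpha}\in C[0,\infty)$.

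It then remains to show that the remainder $R(k,t)=\vphi(k,t)-1-P(k,t)$ obeys $|R(k,t)|\leq C_0(t)|k|^p$ with $C_0(\cdot)$ continuous. Subtracting the polynomial identity from \eqref{eq:IntBoltFou}, I would obtain a mild equation of the form
\bes
R(\cdot,t)=E(t)R_0(\cdot)+\int_0^t d\tau\, E(t-\tau)\Big(\mathcal{L}[R(\cdot,\tau)]+\mathcal{N}(\cdot,\tau)\Big),
\ees
where $R_0=\vphi_0-1-P(\cdot,0)$ satisfies $|R_0(k)|\leq C_0|k|^p$, the linear term $\mathcal{L}[R]$ comes from $\Gamma[\vphi]=\mathcal{I}^{+}(\vphi,\vphi)$ after linearizing around the constant $1$ exactly as in \eqref{eq:linBolzF} (plus a genuinely quadratic piece in $P+R$, which I would absorb into $\mathcal{N}$ or into $\mathcal{L}[R]$ using that $\|\vphi\|_{\infty}\leq 1$ and $\mathcal{L}$-Lipschitzianity, Lemma \ref{lm:bdI}), and $\mathcal{N}$ is a forcing term that is a finite combination of $E(\cdot)$-transported polynomials of degree $\leq N$ times remainders, all vanishing to order $p$ at $k=0$. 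The essential point is to work in the norm $\|\cdot\|_p$ of $C_p(\R^d;\C)$ from Lemma \ref{lem:estEL}: there \eqref{est:L} gives $\|\mathcal{L}R\|_p\leq 2\|R\|_p$ and \eqref{est:E} gives $\|E(t)R\|_p\leq e^{-(1-p\|A\|)t}\|R\|_p$, so that $R$ stays in $C_p$. A Gronwall argument on $\|R(\cdot,t)\|_p$, after bounding the forcing $\|\mathcal{N}(\cdot,\tau)\|_p$ by a continuous function of $\tau$ (using that $P$ has bounded continuous coefficients on compact time intervals and that the transported remainder stays $p$-homogeneous up to constants), yields $\|R(\cdot,t)\|_p\leq C_0(t)$ with $C_0\in C[0,\infty)$, which is precisely the claimed estimate.

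The step I expect to be the main obstacle is the bookkeeping for the quadratic term: one must check carefully that when $\mathcal{I}^{+}(1+P+R,\,1+P+R)$ is expanded, every product either is a polynomial of degree $\leq N$ (hence gets absorbed into the $b_{\alpha}$ ODE and cancels), or is controlled in $\|\cdot\|_p$. Products of the form $P\cdot P$ of total degree $>N$, and products $P\cdot R$, must be shown to vanish to order $p$ at $k=0$ after applying $\mathcal{I}^{+}$ and $E(t-\tau)$; for this one uses that $\mathcal{I}^{+}$ evaluated at a function vanishing to order $m$ at $k=0$ produces a function vanishing to order $m$ (since $|k_{\pm}|\leq|k|$), together with the fact that a polynomial of degree $>N$ is $\leq C|k|^{N+1}\leq C(1+|k|)|k|^p$ on all of $\R^d$ is false pointwise, so one must instead split into $|k|\leq 1$ and $|k|\geq 1$ and exploit that on $|k|\geq 1$ everything is trivially bounded by $\|\cdot\|_{\infty}\leq 1\leq |k|^p$. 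Once this dichotomy is set up, the remaining estimates are routine and the proposition follows by the same extension argument used at the end of the proof of Theorem \ref{thm:1} to reach arbitrary $T$.
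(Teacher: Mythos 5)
Your overall plan --- a triangular ODE system for the jet coefficients $b_\alpha(t)$ plus a Duhamel/Gronwall estimate for the remainder --- is exactly the route the paper indicates (the paper only sketches this proof, calling it a classical adaptation), but two steps need repair. First, the justification of the key algebraic fact is wrong as stated: $k_\pm=\tfrac12(k\pm|k|n)$ are \emph{not} linear in $k$, and $k_\pm^{\alpha}$ is not a polynomial in $k$, only a homogeneous function of degree $|\alpha|$. The property you actually need --- that $\int_{S^{d-1}}dn\,g(\hat k\cdot n)\,k_+^{\alpha}k_-^{\beta}$ is again a \emph{polynomial}, homogeneous of degree $|\alpha|+|\beta|$ --- is true, but only after the angular integration: the integrals $\int_{S^{d-1}}dn\,g(\hat k\cdot n)\,n^{\gamma}$ are polynomials in $\hat k$ whose degrees have the same parity as $|\gamma|$, so all leftover powers of $|k|$ are even. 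This is precisely the invariance the paper invokes in Section \ref{sec:9} (namely $\mathcal{L}\big(H_{\ell}(\C)\big)\subset H_{\ell}(\C)$ and $\mathbb{K}_{\ell}\in H_{\ell}(\C)$), and it is what legitimizes your closed finite-dimensional system for the $b_\alpha$; without it the projection onto polynomials of degree $\le N$ is not well defined.

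Second, and more seriously, the norm in which you close the argument cannot give the statement: $\|\cdot\|_p$ from Lemma \ref{lem:estEL} carries the weight $1+|k|^p$, so your conclusion $\|R(\cdot,t)\|_p\le C_0(t)$ only yields $|R(k,t)|\le C_0(t)(1+|k|^p)$, which is vacuous near $k=0$ --- and small $|k|$ is the entire content of the proposition, since for $|k|\ge1$ the claimed bound follows trivially from $\|\vphi\|_\infty\le1$, $p>N$ and the boundedness of the $b_\alpha(t)$. You must work with the homogeneous weight $\sup_k|R(k,t)|/|k|^p$, i.e.\ argue as in Lemma \ref{lem:bdup} and Theorem \ref{thm:3}. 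Doing so exposes the difficulty your $|k|\le1$ versus $|k|\ge1$ dichotomy only partially addresses: the forcing terms (the cross terms $\mathcal{I}^{+}(P,R)$, $\mathcal{I}^{+}(R,P)$ and the degree $>N$ tail of $\mathcal{I}^{+}(1+P,1+P)$) are bounded by $C(t)|k|^p$ only on bounded sets of $k$ (they grow like $|k|^{N+p}$, resp.\ $|k|^{2N}$), so the homogeneous Gronwall estimate does not close globally in $k$. It does close locally in time: since $|k_\pm|\le|k|$ and $|e^{-At}k|\le e^{\|A\|t}|k|$, the Duhamel formula for $R(k,t)$ with $|k|\le1$ and $t\le T$ only samples $R$ and the forcing on $\{|k|\le e^{\|A\|T}\}$, where the forcing is $O(|k|^p)$ with a $T$-dependent constant; this produces a continuous, possibly growing, $C_0(t)$, which is all the proposition claims. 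You should also justify that the Gronwall quantity $\sup_{|k|\le e^{\|A\|T}}|R(k,t)|/|k|^p$ is finite before applying Gronwall (e.g.\ by propagating the bound along the Picard iterates of Theorem \ref{thm:1}). With these corrections your argument coincides with the proof the paper has in mind.
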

The proof of this result is a straightforward adaptation of an analogous standard result for the classical Boltzmann equation (i.e. $A=0$). The main idea is to use the fact that the moments  $c_{\alpha}$ with $\vert \alpha\vert=\ell$ satisfy a linear ODE which has as right hand side the moments $c_{\alpha}$ with $\vert \alpha\vert\leq \ell-1$. Thus all the moments can be estimated for arbitrary long times using a Gronwall argument. The same Gronwall type of estimate can be applied to control the remainder $\vert k\vert^p$. Since the argument is classical we will not include more details in this paper.

\bigskip

We will justify now that the most general behaviour as $t\to\infty $ for the solutions of \eqref{eq:BoltFourier2} is given by
\be\label{eq:PssGen}
\vphi (k,t)=\exp\left(-i \big(e^{-t A^T}U\big)\cdot k\right)\Psi (ke^{\beta t}),\quad \beta \in \R,\; U\in\R^d.
\ee
Moreover, we will prove that there exists solutions to  \eqref{eq:BoltFourier2} having exactly the form on the right hand side of \eqref{eq:PssGen}. 

Suppose that $\vphi(\cdot,t)$ is a solution to \eqref{eq:BoltFourier2} that satisfies for any time \eqref{eq:ExMomPsi}. 
Then, it is easy to see that we can rewrite the function $\vphi$ as 
\bes
\vphi (k,t)=e^{-i w(t)\cdot k}\psi(k,t),\quad w(t)\in\R^d 
\ees
where
\bes
 \psi(k,t)=1+O(|k|^2)\quad \text{as}\quad |k|\to 0,\quad \forall \;t\geq 0. 
 \ees
Then 
$$\vphi (k,t)=(1-iw(t)\cdot k)+O(|k|^2)\quad \text{as}\quad |k|\to 0.$$ Plugging this expansion into \eqref{eq:BoltFourier2} and using that $\Gamma[\vphi]=\vphi=O(|k|^2)$ as $|k|\to 0$, we formally obtain 
\bes
\big(\partial_t w\big) \cdot k+\big( Ak\big)\cdot w=O(|k|^2) \; \text{as}\; |k|\to 0.
\ees
Since $k$ is arbitrary we then have
\bes 
\partial_t w+ A^T w=0
\ees
whence 
\bes
w(t)=e^{-t A^T}U,\quad \text{with}\quad w(0)=U\in\R^d.
\ees 
This implies that $\vphi (k,t)=\exp\left(-i \big(e^{-t A^T}U\big)\cdot k\right)\psi(k,t)$. 
Notice that the phase factor appearing on the right hand side of this equation corresponds to a displacement of the velocity. It is worth to remark that in the case of the classical Boltzmann equation ($A=0$) this phase is time-independent. This phase factor will be equal to one if the initial momentum of the distribution $U$ is zero.

In the original Boltzmann equation \eqref{eq:GenBolt} we can impose that the initial average velocity of the distribution of particles is zero using a suitable Galilean transformation as indicated in \cite{JNV3}. With the approach based on characteristic functions used in this paper, the natural approach to remove this phase term (which acts trivially on the collision operator) is to perform the change of variables $\psi (k,t)=\exp\left(i \big(e^{-t A^T}U\big)\cdot k\right)\vphi(k,t)$ and then $\psi (k,t)$ solves 
\bes 
\partial_t \psi  + \left(Ak\right)\cdot \partial_k \psi = \Gamma[\psi]-\psi 
\ees 
with $\psi (k,t)=1+O(|k|^2)$ as $|k|\to 0$ for any $t\geq 0$.

Therefore, we can consider solutions to \eqref{eq:BoltFourier2} for which the initial datum satisfies 
\be\label{eq:expindata}
 \vphi(k,0)= 1+\sum_{\vert \alpha \vert = 2}c_{\alpha} k^\alpha+ O(\vert k\vert^p) \quad  k\in\R^d, \; p>2.
\ee

In this case we can expect to have solutions to \eqref{eq:BoltFourier2} with the form \eqref{eq:PssGen} with $U=0$. More precisely, self-similar solutions having the form
\be\label{eq:Pss}
\vphi (k,t)=\Psi (ke^{\beta t}),\quad \beta \in \R.
\ee
We explain now that we cannot expect to have $\beta=0$, i.e. stationary solutions for arbitrary matrices $A$. 

The argument above suggests that the solutions with initial data \eqref{eq:expindata} will satisfy
\be\label{eq:expffi_f1}
 \vphi(k,t)= 1+\sum_{\vert \alpha \vert = 2}c_{\alpha}(t) \ k^\alpha+ O(\vert k\vert^p) \quad  t\geq 0,\; k\in\R^d, \; p>2.
\ee
Since $\vert \alpha \vert = 2$ it is more convenient to rewrite \eqref{eq:expffi_f1} as 
\be\label{eq:expffi_f2}
 \vphi(k,t)= 1-\frac 1 2 \sum_{j,\ell=1}^{d} b_{j,\ell}(t) \ k_j k_{\ell}+ O(\vert k\vert^p) \quad  t\geq 0,\; k\in\R^d, \; p>2.
\ee
where $ b_{j,\ell}= b_{\ell,j}$. 
Then, plugging \eqref{eq:expffi_f2} in \eqref{eq:BoltFourier2} (see Appendix A for details) we obtain 
\be \label{eq:SecMomEq}
\partial_t B+ \big(BA +(BA)^T\big)+\frac{q d}{2(d-1)}  \left( B - \frac{\Tr(B)}{d}I\right)=0
\ee 
where $B=(b_{j,\ell})_{j,\ell=1}^d$ and
\be \label{eq:defq}
q=\int_{S^{d-1}} dn   \ g(\hat{k}\cdot n)    \big(  1- (\hat{k}\cdot n)^2  \big) .
\ee

We stress that  \eqref{eq:SecMomEq} is a linear equation for the second order tensor $B$.  
Notice that for an arbitrary matrix $A$ we cannot expect to have convergence of the tensor $B$ towards a constant  tensor $\bar{B}$ as $t\to\infty$, as it would happen if $\vphi(k,t)$ approaches a steady state as $t\to\infty$.  Actually, the theory of linear ODEs implies that $B$ behaves asymptotically as an exponential function $\text{Re}\left( t^a e^{bt}\right)$ with $b\in\C$ and $a\in\N_{\star}$ as $t\to\infty$. We prove later that for $\|A\|$ sufficiently small there exists $\beta\in\R$ such that $b_{j,\ell}(t)\sim N_{j,\ell}e^{2\beta t}$ as $t\to\infty$ (see \cite{JNV}, \cite{TM} for related arguments). Moreover, $\beta$ is small if $\|A\|$ is small. Therefore, we will obtain the asymptotic behaviour
\be \label{eq:expffi_f3}
 \vphi(k,t)= 1-\frac 1 2\sum_{j,\ell=1}^{d} N_{j,\ell}\ \big(e^{\beta t}k_j \big)\big(e^{\beta t}k_{\ell}\big)+ O(\vert k\vert^p) \quad k\in\R^d, \;p>2,
\ee
as $|k|\to 0$. This suggests that the long-time asymptotics of $\vphi(\cdot,t)$ will be given by solutions with the form \eqref{eq:Pss}.
These solutions will play the role of attractors for a large class of solutions of the problem \eqref{eq:phifk0}-\eqref{eq:BoltFourier2} as we will see in Section \ref{sec:8}.

Since it is simpler to work with stationary solutions we will make the following substitution in \eqref{eq:BoltFourier2}:  
\be \label{eq:chvarSSPtoSS}
\vphi (k,t)= \tilde{ \vphi }(\tilde{k},t), \quad \tilde k= k e^{\beta t}.
\ee 
Then, dropping the tildes for convenience, \eqref{eq:BoltFourier2} becomes
\be\label{eq:CauchyAb}
\partial_t\vphi+\big(A_{\beta}k\big) \cdot \partial_k\vphi+\vphi=\Gamma(\vphi), \quad A_{\beta}=A+{\beta}I.
\ee 
The previous considerations suggest that for any small matrix $A$ there exists $\beta\in \R$ such that there exists a stationary 
 solution $\Psi(k)$ to \eqref{eq:CauchyAb}. Moreover, this solution should be an attractor for all the initial data satisfying \eqref{eq:expindata}.  For initial data $\vphi_0(k)$ with $\partial_k \vphi_0(0)\neq 0$ a phase term analogous to the one appearing in \eqref{eq:PssGen} should be included. We will prove all these results in the rest of the paper.

 Given that \eqref{eq:CauchyAb} has the same form of the original equation \eqref{eq:BoltFourier2}, just replacing the matrix  $A$ by $A_{\beta}$, we can write the mild formulation of \eqref{eq:CauchyAb} as we did in the derivation of \eqref{eq:IntBoltFou}. Hence we have the following integral formulation: 
 \be \label{eq:evPhiEbeta}
 \vphi(k,t)=E_{\beta}(t) \vphi_0(k)+\int_0^t d\tau E_{\beta}(\tau) \Gamma \big( \vphi (k,t-\tau) \big),
 \ee
 where 
 \be \label{eq:Ebeta}
 E_{\beta}(t)=\exp \left[ -t\big( 1+ A_{\beta} k\cdot \partial_k\big)\right].
 \ee
Whence, 
  \be \label{eq:EbetaPhi}
 E_{\beta}(t)\vphi_0(k)=e^{-t} \vphi_0 \big(e^{-tA_{\beta} }k\big).
 \ee
In particular, if $\Psi(k)$ is a steady state to  \eqref{eq:evPhiEbeta} we would have 
 \be \label{eq:evPsiEbeta}
 \Psi(k)=E_{\beta}(t) \Psi (k)+\int_0^t d\tau E_{\beta}(\tau) \Gamma \big( \Psi (k) \big). 
 \ee
Since $\vert \Psi (k)\vert \leq 1$ for all $k\in\R^d$, using \eqref{eq:EbetaPhi}  we obtain that $E_{\beta}(t) \Psi (k)$ is exponentially small as $t\to \infty$, and thus taking the limit of \eqref{eq:evPsiEbeta} as $t\to\infty$ we get 
 \be \label{eq:intPsiEbeta}
 \Psi(k)=\int_0^{\infty} d\tau E_{\beta}(\tau) \Gamma[\Psi(k)].
\ee
Reciprocally, every solution to \eqref{eq:intPsiEbeta} is a steady state to \eqref{eq:evPhiEbeta}. Indeed
\bes 
E_{\beta}(t) \Psi(k)= \int_0^{\infty} d\tau E_{\beta}(t+\tau)\Gamma[\Psi(k)]= \int_{t}^{\infty} d\tau E_{\beta}(\tau) \Gamma[\Psi(k)]
\ees
then, writing
\bes 
 \Psi(k)=\int_{t}^{\infty} d\tau E_{\beta}(\tau) \Gamma[\Psi(k)] + \int_0^{t} d\tau E_{\beta}(\tau) \Gamma[\Psi(k)]
\ees 
it follows that $\Psi$ solves \eqref{eq:evPsiEbeta}.

\section{Existence and uniqueness of stationary solutions to $(\ref{eq:CauchyAb})$ } \label{sec:7}
We first introduce an eigenvalue  problem that will play a crucial role in what follows:
\be \label{eq:EVpbMom}
2 \beta B+ \big(BA +(BA)^T\big)+\frac{q d}{2(d-1)}  \left( B - \frac{\Tr(B)}{d}I \right)=0
\ee 
with $q$ given as in \eqref{eq:defq} and $B=(b_{j,\ell})_{j,\ell=1}^d \in \text{Symm}_{d\times d}(\C)$ which is the space of $d\times d$ symmetric matrices with complex coefficients. The rationale to study this eigenvalue problem is the analysis of the equations for the second order moments in \eqref{eq:SecMomEq}. More precisely, we obtain \eqref{eq:EVpbMom} looking for solutions of \eqref{eq:SecMomEq}  with the form $e^{2\beta t} B$ where $B$ is a fixed symmetric matrix independent on time. 

We also introduce the sets of functions
\be\label{def:FpB}
\mathcal{F}_p(B):=\left\{\Psi \in \Phi\;\; \text{s.t.}\; \; \sup_{k\in\R^d} \frac{\Big\vert \Psi(k)-\Big(1-\frac 1 2 \sum_{j,\ell=1}^d b_{j,\ell}k_j k_{\ell} \Big)\Big \vert}{\vert k\vert^p} < \infty\right\}
\ee
where $B \in \text{Symm}_{d\times d}(\C)$ and $p>2$.  Notice that it is not guaranteed a priori if these sets are nonempty because the matrix $B$ can be incompatible with $\Psi$ being a characteristic function. Assuming that $\mathcal{F}_p(B)\neq \emptyset$ we can endow it with a structure of metric space using the distance
\be \label{def:metricFpB}
d(\Psi_1,\Psi_2)=\sup_{k\in\R^d} \frac{\Big\vert \Psi_1(k)-\Psi_2(k)\Big \vert}{\vert k\vert^p}.
\ee
We further notice that for any $q>p>2$ we have $\mathcal{F}_q(B)\subset \mathcal{F}_p(B)$ since 
$\vert \Psi(k) \vert \leq \Psi(0)=1$ due to the fact that $\Psi\in\Phi$. 

We now prove the following result. 
\begin{theorem} \label{th:main1}
Let be $p>2$. There exists $\ep_0>0$ depending on the collision kernel $g$ in \eqref{eq:CollBolt} and $p$, such that if $\|A\| \leq \ep_0$  the following results hold. 
\begin{itemize}
\item[(i)] The eigenvalue $\beta$ solution of \eqref{eq:EVpbMom} with the largest real part is real and simple. 
The corresponding eigenvector which will be denoted as $N=(N_{j,\ell})_{j,\ell=1}^d$ has real coefficients, i.e. $N\in \text{Symm}_{d\times d}(\R)$, $N$ is a positive definite matrix and it can be normalized as $\frac 1 d  \Tr\big(N^T N \big)=1$ .
\item[(ii)]  Let be $\beta \in \R$ and $N \in \text{Symm}_{d\times d}(\R)$ as in item $(i)$. Then, the set $\mathcal{F}_p(B)$ with $B=N$ and $2<p\leq 4$ in \eqref{def:FpB} is nonempty. %it is a complete metric space. 
Moreover, there exists a unique $\Psi(\cdot)\in \mathcal{F}_p(N)$ that solves \eqref{eq:intPsiEbeta}, namely
 \be \label{eq:intPsiEbeta2}
 \Psi(k)=\int_0^{\infty} d\tau E_{\beta}(\tau) \Gamma[\Psi(k)]
\ee
with $E_\beta$ as in \eqref{eq:Ebeta}.
\end{itemize}
\end{theorem}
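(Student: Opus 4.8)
The two items are handled separately: item (i) is a finite-dimensional eigenvalue perturbation around $A=0$, and item (ii) is solved by a Banach fixed point argument for the map $\Psi\mapsto\int_0^\infty E_\beta(\tau)\Gamma[\Psi]\,d\tau$ on the complete metric space $(\mathcal{F}_p(N),d)$, exploiting the $\mathcal{L}$-Lipschitz machinery of Sections \ref{sec:3}--\ref{sec:5}. For item (i), write \eqref{eq:EVpbMom} as $\beta B=\mathcal{K}_A B$ for the linear map on $\text{Symm}_{d\times d}(\C)$ given by $\mathcal{K}_A B=-\tfrac12\big(BA+(BA)^T\big)-\tfrac{qd}{4(d-1)}\big(B-\tfrac{\Tr B}{d}I\big)$, with $q$ as in \eqref{eq:defq}. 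The map $A\mapsto\mathcal{K}_A$ is affine and $\|\mathcal{K}_A-\mathcal{K}_0\|\le C(d,g)\|A\|$. For $A=0$ the spectrum of $\mathcal{K}_0$ consists of the simple eigenvalue $0$ with eigenvector $I$, together with $-\tfrac{qd}{4(d-1)}<0$ on the traceless symmetric matrices (note $q>0$ since $g\ge0$, $\int_{S^{d-1}}g=1$ and $d\ge2$). Standard perturbation theory for an isolated simple eigenvalue (a resolvent contour argument, say) gives, for $\|A\|\le\ep_0$ small enough, exactly one eigenvalue $\beta=\beta(A)$ in a fixed small disk around $0$, which is simple; the remaining spectrum stays near $-\tfrac{qd}{4(d-1)}$, so $\beta$ has strictly largest real part. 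Since $\mathcal{K}_A$ preserves $\text{Symm}_{d\times d}(\R)$ its spectrum is conjugation-invariant, hence the unique eigenvalue near $0$ satisfies $\beta=\bar\beta$, i.e.\ $\beta\in\R$; the associated spectral projection is real, so one may pick a real eigenvector $N$, which is close to a multiple of $I$ and thus positive definite. Rescaling, normalize $\tfrac1d\Tr(N^TN)=1$.

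\textbf{Item (ii): set-up.} Since $N$ is positive definite, the Gaussian $G_N(k)=e^{-\frac12\langle Nk,k\rangle}$ is a characteristic function, and $G_N(k)=1-\frac12\langle Nk,k\rangle+O(|k|^4)$ shows $G_N\in\mathcal{F}_p(N)$ for $2<p\le4$, so this set is nonempty. By the discussion in Section \ref{sec:6}, steady states of \eqref{eq:evPhiEbeta} coincide with solutions of \eqref{eq:intPsiEbeta2}, so we seek a fixed point of $\mathcal{T}[\Psi](k)=\int_0^\infty E_\beta(\tau)\Gamma[\Psi](k)\,d\tau$, with $E_\beta$ as in \eqref{eq:Ebeta}. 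The integral converges absolutely since $|E_\beta(\tau)\Gamma[\Psi](k)|\le e^{-\tau}$, and $\mathcal{T}[\Psi](0)=1$. That $\mathcal{T}[\Psi]\in\Phi$ follows by a mixture argument: $\Gamma[\Psi]\in\Phi$ by Lemma \ref{eq:I+map}; $E_\beta(\tau)\Gamma[\Psi]=e^{-\tau}\Gamma[\Psi](e^{-\tau A_\beta}\,\cdot\,)\in e^{-\tau}\Phi$ because $e^{-\tau A_\beta}$ is invertible (Lemma \ref{lem:qBk}); and $\int_0^\infty e^{-\tau}\,d\tau=1$ together with the convexity and uniform closedness of $\Phi$ gives $\mathcal{T}[\Psi]\in\Phi$.

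\textbf{Item (ii): $\mathcal{T}$ preserves $\mathcal{F}_p(N)$.} For $\Psi\in\mathcal{F}_p(N)$ one has $\Gamma[\Psi](k)=1-\frac12\langle Q(N)k,k\rangle+O(|k|^p)$, where $Q$ is the linear action of $\Gamma$ on quadratic forms (computed explicitly from $\int_{S^{d-1}}g(\hat{k}\cdot n)\,n\otimes n\,dn$) and the remainder is estimated by comparing $\Gamma[\Psi]$ with $\Gamma[G_N]$ via Lemma \ref{lm:bdI} and a direct expansion of $\Gamma[G_N]$ (here $2<p\le4$ is used). Applying $\int_0^\infty E_\beta(\tau)\,\cdot\,d\tau$, the quadratic coefficient of $\mathcal{T}[\Psi]$ is $\mathcal{M}(Q(N))$ with $\mathcal{M}(S)=\int_0^\infty e^{-\tau}e^{-\tau A_\beta^T}Se^{-\tau A_\beta}\,d\tau$; an integration by parts gives the Lyapunov-type identity $\mathcal{M}(S)+A_\beta^T\mathcal{M}(S)+\mathcal{M}(S)A_\beta=S$, so $\mathcal{M}(Q(N))=N$ is equivalent to $Q(N)-N=A_\beta^TN+NA_\beta$, which after substituting $Q$ and $A_\beta=A+\beta I$ is exactly \eqref{eq:EVpbMom}; hence $\mathcal{M}(Q(N))=N$ by item (i). The remainder $\mathcal{T}[\Psi](k)-\big(1-\frac12\langle Nk,k\rangle\big)=\int_0^\infty E_\beta(\tau)\big[\Gamma[\Psi]-1+\frac12\langle Q(N)\,\cdot\,,\,\cdot\,\rangle\big](k)\,d\tau$ is then bounded using $|\Gamma[\Psi](k)-1+\frac12\langle Q(N)k,k\rangle|\le C'|k|^p$ and the elementary estimate $\int_0^\infty e^{-\tau}|e^{-\tau A_\beta}k|^p\,d\tau\le(1-p\|A_\beta\|)^{-1}|k|^p$, which is finite because $p\|A_\beta\|\le p(\|A\|+|\beta(A)|)<1$ for $\ep_0$ small. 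Thus $\mathcal{T}[\Psi]\in\mathcal{F}_p(N)$.

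\textbf{Item (ii): contraction, conclusion, and the main obstacle.} For $\Psi_1,\Psi_2\in\mathcal{F}_p(N)$, Lemma \ref{lm:bdS} (with the linear change $k\mapsto e^{-\tau A_\beta}k$) and Lemma \ref{lm:bdI} give $|\mathcal{T}[\Psi_1]-\mathcal{T}[\Psi_2]|(k)\le\int_0^\infty E_\beta(\tau)\mathcal{L}\big[|\Psi_1-\Psi_2|\big](k)\,d\tau$; since $|\Psi_1-\Psi_2|(k)\le d(\Psi_1,\Psi_2)|k|^p$, $\mathcal{L}|k|^p=(1-\lambda(p))|k|^p$ by \eqref{eq:Lkp}, and the bound above, one obtains $d(\mathcal{T}[\Psi_1],\mathcal{T}[\Psi_2])\le\frac{1-\lambda(p)}{1-p\|A_\beta\|}\,d(\Psi_1,\Psi_2)$. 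As $\lambda(p)>0$ for $p>2$ (Theorem \ref{thm:3}) while $p\|A_\beta\|$ is small for $\|A\|\le\ep_0$, the constant is $<1$, so $\mathcal{T}$ is a contraction; moreover $(\mathcal{F}_p(N),d)$ is complete (a $d$-Cauchy sequence converges locally uniformly, the limit is continuous, equals $1$ at the origin hence is a characteristic function by Lévy's theorem, and inherits the quadratic expansion and the $O(|k|^p)$ bound), so the Banach fixed point theorem yields the unique $\Psi\in\mathcal{F}_p(N)$ solving \eqref{eq:intPsiEbeta2}. The main obstacle is the invariance step for $\mathcal{F}_p(N)$: recognizing that the fixed-point equation $\mathcal{M}(Q(N))=N$ of the induced quadratic map is precisely the eigenvalue problem \eqref{eq:EVpbMom}, and controlling the $O(|k|^p)$ remainder uniformly via comparison with $G_N$ and the smallness of $p\|A_\beta\|$; the contraction step and item (i) are routine by comparison.
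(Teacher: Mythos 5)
Your proposal is correct and follows essentially the same route as the paper: finite-dimensional spectral perturbation around $A=0$ for item (i) (the paper's Lemma \ref{lem:EPbeta0}, via \cite{K76}), and a Banach fixed point argument for $\mathcal{S}\Psi=\int_0^\infty d\tau\, E_\beta(\tau)\Gamma[\Psi]$ on the complete metric space $\big(\mathcal{F}_p(N),d\big)$ using the $\mathcal{L}$-Lipschitz contraction, with the Gaussian $e^{-\frac12 N:k\otimes k}$ providing nonemptiness. The only organizational difference is the invariance step: the paper establishes the approximate-steady-state estimate $\vert\Psi_0-\mathcal{S}\Psi_0\vert\le C\min\{1,\vert k\vert^4\}$ (Lemma \ref{lem:StaSol}, via the identity \eqref{eq:evGk} and the uniqueness Lemma \ref{lem:techlem}) and concludes by the triangle inequality, whereas you compute the quadratic coefficient of $\mathcal{S}\Psi$ directly through the Lyapunov-type operator $\mathcal{M}$ and identify the resulting matrix equation with \eqref{eq:EVpbMom} before bounding the remainder by comparison with the Gaussian -- the same algebra, packaged slightly differently.
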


\begin{remark}
We observe that if $\Psi(\cdot)\in \mathcal{F}_p(N)$ is a solution to  \eqref{eq:intPsiEbeta2}, then $\Psi(\lambda \cdot) \in \mathcal{F}_p(\lambda^2 N)$ and it solves also \eqref{eq:intPsiEbeta2}.
\end{remark}

\begin{remark}
Notice that the spaces $\mathcal{F}_p(N)$ with $p\to 2^{+}$ are larger than the spaces $\mathcal{F}_p(N)$ with larger values of $p$. On the other hand, in order to obtain existence and uniqueness of solutions in the space $\mathcal{F}_p(N)$ with $p\to 2^{+}$ we have to assume more stringent smallness conditions for $\|A\|$. We further observe that the analysis of the solutions to \eqref{eq:intPsiEbeta} when $p\to 2^{+}$ and $\|A\|\to 0$ is a non trivial problem that however will not be considered in this paper. 
\end{remark}

In order to prove this Theorem we would need some preliminary results. 
\begin{lemma} \label{lem:EPbeta0}
The following results hold.
\begin{itemize}
\item[(i)]There exists $\ep_0>0$ depending on the dimension $d$ such that if $\|A\| \leq \ep_0$ the eigenvalue problem \eqref{eq:EVpbMom} has a unique  eigenvalue $\beta_0\in\R$ which is simple and has the property that $\text{Re}(\beta)<\beta_0$ where $\beta$ is any other eigenvalue of \eqref{eq:EVpbMom}. 
Furthermore we will denote as $N$ the corresponding eigenvector to the eigenvalue $\beta_0$ as $N$. The matrix $N$ has real coefficients and it will be normalised in such a way that  that $\frac 1 d  \Tr\big(N^TN\big)=1$. 

\item[(ii)] Under the same assumptions of item $(i)$, there exists a constant $C_0\in\R$ depending on $d$ such that  $\vert \beta_0 \vert \leq C_0 \|A\|$ and  $\| N-I \|\leq C_0 \|A\|$.

\end{itemize}
\end{lemma}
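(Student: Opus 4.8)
The plan is to recast \eqref{eq:EVpbMom} as an eigenvalue problem for a linear operator on the finite-dimensional space $\text{Symm}_{d\times d}(\C)$ and to apply standard analytic perturbation theory around $A=0$. Define $\mathcal{M}_A : \text{Symm}_{d\times d}(\C) \to \text{Symm}_{d\times d}(\C)$ by
\[
\mathcal{M}_A(B) = -\frac{1}{2}\big(BA + (BA)^T\big) - \frac{qd}{4(d-1)}\Big(B - \frac{\Tr(B)}{d}I\Big),
\]
so that \eqref{eq:EVpbMom} is precisely $\mathcal{M}_A(B) = \beta B$. For $A=0$ the operator $\mathcal{M}_0$ is diagonal with respect to the splitting $\text{Symm}_{d\times d}(\C) = \C I \oplus \{B : \Tr B = 0\}$: it vanishes on $\C I$ and acts as multiplication by $-\frac{qd}{4(d-1)}$ on the trace-free part. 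Since $q>0$ (which holds as soon as $g$ is not concentrated at $n=\pm\hat k$), the spectrum of $\mathcal{M}_0$ consists of the algebraically simple eigenvalue $\beta_0=0$ with eigenvector $I$ (which indeed satisfies $\frac 1 d \Tr(I^T I)=1$) and the eigenvalue $-\frac{qd}{4(d-1)}<0$ of multiplicity $\frac{d(d+1)}{2}-1$. In particular $0$ is the strictly dominant eigenvalue and is separated from the rest of the spectrum by a gap of size $\frac{qd}{4(d-1)}$.

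Next I would write $\mathcal{M}_A = \mathcal{M}_0 + \mathcal{P}_A$ with $\mathcal{P}_A(B) = -\frac12(BA + (BA)^T)$, which depends linearly on $A$ and satisfies $\|\mathcal{P}_A\|\le \|A\|$ in the operator norm induced by the Frobenius norm, by submultiplicativity and $\|A\|=\sup_{|k|=1}|Ak|$. By the standard theory of analytic perturbations of isolated eigenvalues (Riesz spectral projection along a fixed small circle around $0$), there is $\ep_0>0$, depending only on $d$ and the fixed constant $q$, such that for $\|A\|\le\ep_0$ the operator $\mathcal{M}_A$ has exactly one eigenvalue $\beta_0=\beta_0(A)$ inside that circle; it is algebraically simple, depends analytically on $A$, and the remaining eigenvalues stay within a small disk about $-\frac{qd}{4(d-1)}$, so that $\text{Re}(\beta)<\beta_0$ for every other eigenvalue $\beta$ once $\ep_0$ is small. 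Reality is immediate: $\mathcal{M}_A$ preserves the real subspace $\text{Symm}_{d\times d}(\R)$, hence its spectrum is symmetric about $\R$; since the circle is symmetric about $\R$ and contains only the one eigenvalue $\beta_0(A)$, that eigenvalue must be real, and $\mathcal{M}_A-\beta_0(A)I$, being a real singular operator, has a nonzero kernel vector $N=N(A)\in\text{Symm}_{d\times d}(\R)$.

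Finally, for item (ii) I would exploit the smooth dependence together with the values at $A=0$. From $\beta_0(0)=0$ and real-analyticity of $A\mapsto\beta_0(A)$ near $A=0$ we get $|\beta_0(A)|\le C_0\|A\|$. For the eigenvector, first choose the normalization that fixes the $\C I$-component of $\tilde N(A)$ to equal $I$ (possible since $\tilde N(0)=I$), so $\tilde N(A)$ is analytic with $\tilde N(0)=I$ and $\|\tilde N(A)-I\|\le C_0\|A\|$; then set $N(A) = \tilde N(A)\big/\sqrt{\tfrac1d \Tr(\tilde N(A)^T \tilde N(A))}$, which still obeys $\|N(A)-I\|\le C_0\|A\|$ after shrinking $\ep_0$, since the normalizing factor is $1+O(\|A\|)$. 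Positive definiteness of $N(A)$ then follows for small $\ep_0$ by continuity of eigenvalues, as $I$ is positive definite. I do not expect any individual step to be an obstacle, since each is classical; the real work is bookkeeping that all constants ($\ep_0$, $C_0$, the gap) depend only on $d$ and $q$, and the one structural input that must genuinely be checked is $q>0$, which is what opens the spectral gap at $A=0$ and makes the whole perturbation scheme run.
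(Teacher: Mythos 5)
Your proposal is correct and follows essentially the same route as the paper: identify the spectrum of the problem at $A=0$ (the simple dominant eigenvalue $0$ with eigenvector $I$, and the eigenvalue $-\frac{qd}{4(d-1)}<0$ on the trace-free subspace), then invoke standard finite-dimensional analytic perturbation theory (Kato) to get persistence, simplicity, reality, dominance, and the Lipschitz bounds in $\|A\|$. Your write-up merely fills in the details the paper delegates to the reference (Riesz projection, reality via invariance of $\text{Symm}_{d\times d}(\R)$, normalization, and the observation that $q>0$ provides the spectral gap), and correctly states the second unperturbed eigenvalue, which appears with a typo in the paper's proof.
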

\begin{proof}
The proof follows the same strategy of the proof of  Lemma 4.16 in \cite{JNV}. 

If $A=0$ the eigenvalue problem \eqref{eq:EVpbMom} reduces to 
 \bes
2 \beta B+\frac{q d}{2(d-1)}  \left( B - \frac{\Tr(B)}{d} I \right)=0,\quad B\in\text{Symm}_{d\times d}(\C).
\ees
There are two eigenvalues for this problem, namely $\beta=0$ and $\beta= \frac{\Tr(B)}{d} I$. The eigenvalue $\beta=0$ is simple with corresponding eigenspace $\{cI\;:\; c\in\C\}$. Then the lemma follows using standard continuity and differentiability results for the spectrum of finite dimensional eigenvalue problems (see for instance \cite{K76}).
\end{proof}

The following technical result will be required later. 
\begin{lemma}\label{lem:techlem}
Let $\beta \in \R$ and $A\in M_{d\times d}(R)$ satisfy the inequality $\displaystyle \vert| A \vert| < \frac {1}{2}+\beta$. 
Then, the following equation  
\be \label{eq:matrixeq}
(1+2\beta) M+MA+\big(MA\big)^T=0
\ee
for the matrix $M\in M_{d\times d}(R)$ has only the trivial solution $M=0$.   
\end{lemma}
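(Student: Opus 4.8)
The plan is to show that the linear map $M \mapsto (1+2\beta)M + MA + (MA)^T$ on $M_{d\times d}(\mathbb{R})$ is injective by deriving a contradiction from the existence of a nonzero solution $M$ to \eqref{eq:matrixeq}. First I would split the equation into its symmetric and antisymmetric parts. Writing $M = S + K$ with $S = S^T$ and $K = -K^T$, and noting that $(MA)^T = A^T M^T = A^T S - A^T K$, the symmetric and antisymmetric components of \eqref{eq:matrixeq} decouple in a useful way only after one observes that the operator $M \mapsto MA + (MA)^T$ need not preserve the symmetric/antisymmetric splitting; so instead the cleaner route is to estimate the operator norm directly. The key inequality I would use is that for any matrix $M$, $\|MA + (MA)^T\| \le 2\|M\|\,\|A\|$ with respect to the operator norm \eqref{eq:normA2} (this follows from submultiplicativity $\|MA\| \le \|M\|\|A\|$ and the triangle inequality, together with $\|N^T\| = \|N\|$ for the operator norm).

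The main step is then the following. Suppose $M \neq 0$ solves \eqref{eq:matrixeq}. Then
\[
(1+2\beta)\|M\| = \|MA + (MA)^T\| \le 2\|M\|\,\|A\|,
\]
hence, dividing by $\|M\| > 0$,
\[
1 + 2\beta \le 2\|A\|.
\]
But the hypothesis $\|A\| < \tfrac{1}{2} + \beta$ gives $2\|A\| < 1 + 2\beta$, a contradiction. Therefore $M = 0$, which is the claim. One subtlety to address: if $1 + 2\beta \le 0$ the argument above is vacuous and must be handled separately — however, the hypothesis $\|A\| < \tfrac12 + \beta$ combined with $\|A\| \ge 0$ forces $\tfrac12 + \beta > 0$, i.e. $1 + 2\beta > 0$, so the division and the inequality chain are legitimate. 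I would state this observation explicitly at the start of the proof.

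I expect no serious obstacle here; the only care needed is (a) confirming that the matrix norm \eqref{eq:normA2} is submultiplicative and invariant under transposition — both standard for the operator norm induced by the Euclidean norm, since $\|A^T\| = \|A\|$ is the statement that a matrix and its transpose have the same largest singular value — and (b) ensuring the strict inequality in the hypothesis is used in the right place to close the contradiction (a non-strict inequality $\|A\| \le \tfrac12 + \beta$ would only yield $\|M\|(1+2\beta) \le 2\|M\|\|A\| \le \|M\|(1+2\beta)$, forcing equality rather than a contradiction, so one would then additionally need to rule out the equality case; the strict hypothesis avoids this). An alternative, essentially equivalent, approach would be to view $\mathcal{M}(M) := (1+2\beta)M + MA + (MA)^T$ as a perturbation of $(1+2\beta)\,\mathrm{Id}$ on the $d^2$-dimensional space $M_{d\times d}(\mathbb{R})$ equipped with the norm induced by $\|\cdot\|$, with perturbation of norm at most $2\|A\| < 1 + 2\beta$; then $\mathcal{M}$ is invertible by the standard Neumann series argument. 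I would present the direct contradiction argument as it is shortest.
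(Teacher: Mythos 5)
Your argument is correct and is essentially the paper's own proof: the paper likewise notes that the hypothesis forces $1+2\beta>0$, rewrites the equation as $(1+2\beta)M=-\bigl(MA+(MA)^T\bigr)$, estimates $(1+2\beta)\|M\|\leq 2\|M\|\,\|A\|$ using submultiplicativity and $\|N^T\|=\|N\|$, and concludes $\tfrac12+\beta\leq\|A\|$ for $M\neq 0$, contradicting the strict inequality in the hypothesis. Your extra remarks (handling $1+2\beta\leq 0$, the role of strictness, the Neumann-series alternative) are sound but not needed beyond what the paper records.
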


\begin{proof}
The first inequality implies that $\beta>-\frac 1 2$. Then, from \eqref{eq:matrixeq}, we obtain 
\bes
(1+2\beta) \|M\|\leq \vert| MA+\big(MA\big)^T\vert|\leq 2 \vert| M\vert| \ \vert| A\vert|.
\ees
If $\vert| M\vert|>0$ we obtain the inequality $\frac 1 2 + \beta \leq \vert| A \vert|$ that contradicts the assumption of the lemma. Hence, this completes the proof.
\end{proof}

The following lemma provides an approximate solution $\Psi_0(k)$ of the equation \eqref{eq:intPsiEbeta2}. It will be used to show that the set $\mathcal{F}_p(B)$ defined in \eqref{def:FpB} is non empty for $p\leq 4$.

\begin{lemma}\label{lem:StaSol}
Suppose that $\|A\|$ is small enough to guarantee that item $(i)$ of Lemma \ref{lem:EPbeta0} holds. 
There exists $\ep_1>0$ depending on the dimension $d$ and on $q$ (cf.~\eqref{eq:defq}) such that if $\|A\|\leq \ep_1$  
then the function 
\be \label{eq:Psi0efN}
\Psi_0(k)=\exp\left[-\frac 1 2 N : k\otimes k\right]\quad \text{where}\;\;N : k\otimes k=N_{j,\ell} k_j k_\ell 
\ee
is contained in $\Phi$. Moreover, there exists a constant $C_1\in
\R$ depending only on $d$ such that
\be \label{eq:estPsi0StSo}
\left\vert \Psi_0(k)-\int_0^{\infty} d\tau E_{\beta}(\tau) \Gamma \big[\Psi_0(k)\big] \right\vert \leq C_1 \min\{1,\vert k \vert^4\},\quad k\in\R^d
\ee
where $E_{\beta}(t)$ is as in \eqref{eq:Ebeta}.
\end{lemma}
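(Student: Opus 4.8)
The plan is to split the statement into two parts: first, showing $\Psi_0\in\Phi$, and second, proving the quantitative estimate \eqref{eq:estPsi0StSo}. For the first part, I would use the fact (from item (i) of Lemma \ref{lem:EPbeta0}) that $N$ is a positive definite real symmetric matrix, so that $\Psi_0(k)=\exp(-\frac12 N:k\otimes k)$ is (up to a constant factor which equals one at $k=0$) the Fourier transform of a non-degenerate centered Gaussian probability density on $\R^d$; hence $\Psi_0\in\Phi$. Since item (ii) of Lemma \ref{lem:EPbeta0} gives $\|N-I\|\le C_0\|A\|$, for $\|A\|$ small enough $N$ is indeed positive definite, which is where the smallness threshold $\ep_1$ (depending on $d$ and, via the constant $C_0$ hidden in $q$, on $q$) enters. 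This part is essentially immediate from the preceding lemmas and standard properties of Gaussians.

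For the estimate \eqref{eq:estPsi0StSo}, the key observation is that $\Psi_0$ is exactly the solution of the \emph{linearized} stationary problem in the sense of the second-moment equation: because $N$ solves \eqref{eq:EVpbMom} with eigenvalue $\beta=\beta_0$, the quadratic Taylor coefficients of $\Psi_0$ at $k=0$ are exactly stationary under the flow. Concretely, I would compute $\Gamma[\Psi_0](k)=\mathcal{I}^+(\Psi_0,\Psi_0)(k)$ using \eqref{eq:BoltCollFourier} and the identities $k_+\otimes k_+ + k_-\otimes k_- = \frac12 k\otimes k + \frac{|k|^2}{2}n\otimes n$, together with $\int_{S^{d-1}}g(\hat k\cdot n)\,n\otimes n\,dn = \frac1d I$, to expand $\Gamma[\Psi_0](k) = 1 - \frac12 N:k\otimes k + O(|k|^4)$, where the $O(|k|^4)$ remainder is uniformly controlled because $\Psi_0$ is a Schwartz function with all derivatives bounded (the constant here depends only on $d$, since $\|N-I\|$ is small). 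Then I would apply the operator $\int_0^\infty E_\beta(\tau)\,d\tau$, using \eqref{eq:EbetaPhi}, $E_\beta(\tau)\psi(k)=e^{-\tau}\psi(e^{-\tau A_\beta}k)$, and the fact that $E_\beta$ acts diagonally on homogeneous polynomials: $\int_0^\infty E_\beta(\tau)\,d\tau$ applied to the constant $1$ returns $1$, applied to $k\otimes k$ returns the resolvent $(1 + A_\beta^T\cdot + \cdot A_\beta)^{-1}$ acting on $k\otimes k$, which by the eigenvalue relation \eqref{eq:EVpbMom} reproduces exactly $-\frac12 N:k\otimes k + (\text{the collision contribution})$, and applied to the $O(|k|^4)$ remainder stays $O(|k|^4)$ since $\lambda(4)>0$ and $\|A\|$ is small (so $\lambda(4)-4\|A\|>0$, invoking Lemma \ref{lem:bdup} / Theorem \ref{thm:3} with $p=4$). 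Matching the constant, quadratic, and remainder terms yields $\Psi_0(k) - \int_0^\infty E_\beta(\tau)\Gamma[\Psi_0](k)\,d\tau = O(|k|^4)$; combined with the trivial global bound $|\Psi_0|\le1$ and $\|\Gamma[\Psi_0]\|_\infty\le1$ (so the difference is $\le 2$ everywhere, giving the $\min\{1,|k|^4\}$ up to adjusting $C_1$), the claim follows.

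The main obstacle I anticipate is the bookkeeping in the second part: verifying that the quadratic part of $\int_0^\infty E_\beta(\tau)\Gamma[\Psi_0]\,d\tau$ matches the quadratic part of $\Psi_0$ \emph{exactly} requires carefully tracking how the collision term $\frac{qd}{2(d-1)}(B - \frac{\operatorname{Tr}(B)}{d}I)$ and the deformation term $BA+(BA)^T$ in \eqref{eq:EVpbMom} reassemble after the $\tau$-integration of $E_\beta$. One must check that the algebraic identity closing this loop is precisely the eigenvalue equation \eqref{eq:EVpbMom} at $\beta=\beta_0$ — i.e., that the derivation in Appendix A of \eqref{eq:SecMomEq} is consistent with the resolvent computation here — and that no lower-order (linear in $k$) terms are generated, which is guaranteed because $\Psi_0$ is even in $k$. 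The uniform control of the quartic remainder under $\int_0^\infty E_\beta(\tau)\,d\tau$ is the other place care is needed: one uses that $E_\beta(\tau)$ applied to a function bounded by $C|k|^4$ is bounded by $Ce^{-\tau}e^{4\|A_\beta\|\tau}|k|^4$ and $\|A_\beta\|=\|A+\beta I\|\le \|A\|+|\beta_0|\le (1+C_0)\|A\|$ can be made small enough that $1 - 4\|A_\beta\| > 0$, so the $\tau$-integral converges and produces a constant depending only on $d$.
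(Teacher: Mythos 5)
Your proposal follows essentially the same route as the paper's proof: positive definiteness of $N$ (from Lemma \ref{lem:EPbeta0}) makes $\Psi_0$ the characteristic function of a Gaussian, and the estimate \eqref{eq:estPsi0StSo} is obtained by expanding $\Gamma[\Psi_0]$ to quadratic order with a uniform $O(|k|^4)$ remainder, applying $\int_0^\infty E_\beta(\tau)\,d\tau$, identifying its action on quadratic forms with the inverse of $M\mapsto M+A_\beta^T M+MA_\beta$, matching the quadratic parts exactly through the eigenvalue relation \eqref{eq:EVpbMom} (the paper does this by exhibiting $\tfrac12 N$ as a particular solution of the resulting matrix equation and removing the homogeneous part via Lemma \ref{lem:techlem}), and controlling the quartic remainder using $1-4\|A_\beta\|>0$; the trivial bounds $\|\Psi_0\|_\infty\le 1$, $\|\Gamma[\Psi_0]\|_\infty\le 1$ then give the $\min\{1,|k|^4\}$ form.

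One intermediate formula in your sketch is incorrect and would derail the bookkeeping if used literally: $\int_{S^{d-1}}g(\hat k\cdot n)\,n\otimes n\,dn=\tfrac1d I$ is false for general kernels $g$; only the traceless combination $\int_{S^{d-1}}g(\hat k\cdot n)\big(k\otimes k-|k|^2\,n\otimes n\big)dn=\tfrac{qd}{d-1}\big(k\otimes k-\tfrac{|k|^2}{d}I\big)$ is universal (this is the computation in Appendix A). Consequently the expansion is not $\Gamma[\Psi_0](k)=1-\tfrac12 N:k\otimes k+O(|k|^4)$ but rather $\Gamma[\Psi_0](k)=\Psi_0(k)+\tfrac{qd}{4(d-1)}\,N:\big(k\otimes k-\tfrac1d|k|^2 I\big)+O(|k|^4)$, and it is precisely this extra collision term which, after the $\tau$-integration, is cancelled against the deformation term by the eigenvalue equation \eqref{eq:EVpbMom}. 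Since your closing paragraph explicitly identifies this cancellation as the step requiring care, the plan goes through once that expansion is corrected; with the corrected quadratic coefficient the argument you outline coincides with the paper's.
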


\begin{proof}
We first observe that, item $(ii)$ in Lemma \ref{lem:EPbeta0} implies that,  if $\ep_1$ is sufficiently small, the matrix $N$ yields a positive definite quadratic form. Then, $\Psi_0(k)$ is a gaussian function and its inverse Fourier Transform is a gaussian as well. Moreover, it gives a probability density since $\Psi_0(0)=1$. Then, $\Psi_0\in\Phi$.

Arguing as in the derivation of \eqref{eq:rhsMom} in Appendix \ref{app:A} just replacing $\vphi(k)$ by $\Psi_0(k)$ and the matrix $B$ by $N$ we obtain
\be\label{eq:rhsMomPsi}
\Gamma[\Psi_0](k)= \Psi_0(k)+ \frac{q d}{4(d-1)}  N : \big( k \otimes k - \frac 1 d \vert k \vert^2 I \big)+O(\vert k\vert^4)
\ee
where the constant estimating $O(\vert k\vert^4)$ depends only on $d$ (because $\|A\|$ and $\beta$ are assumed to be small).

We now claim that for any function $F\in C^{\infty}(\R^d)$ the function $G$ defined by means of
\be\label{eq:defGk}
G(k)=\int_0^{\infty}d\tau E_{\beta}(\tau)F(k)
\ee 
is a function in $C^{\infty}(\R^d)$  and it satisfies 
\be \label{eq:evGk}
\big(A_{\beta}k\big) \cdot \partial_k G(k)+G(k)=F(k).
\ee
We notice that using  \eqref{eq:defGk} as well as \eqref{eq:EbetaPhi} we obtain 
\bes
G\big(e^{-A_{\beta} h }k\big)=\int_0^{\infty} d\tau e^{-\tau} F\big(e^{-A_{\beta} (\tau+h) }k\big) ,\quad \forall \; k\in\R^d,\; h\in\R \setminus \{0\}.
\ees
Then, using the change of variables $\tau+h\to \tau$ we get
\begin{align*}
G\big(e^{-A_{\beta} h }k\big)&=e^h \int_0^{\infty} d\tau e^{-\tau} F\big(e^{-A_{\beta} \tau }k\big) - e^h \int_0^{h} d\tau e^{-\tau} F\big(e^{-A_{\beta} \tau }k\big) \\&
= e^hG\big(k\big)-e^h \int_0^{h} d\tau e^{-\tau} F\big(e^{-A_{\beta} \tau }k\big) .
\end{align*}
Then, performing a Taylor expansion for small $h$, and using in particular that $e^{-A_{\beta} h }= I-A_{\beta} h + O(h^2)$ we obtain  %h\to 0 
\bes
G(k) - h (A_{\beta} k)\cdot \partial_k G(k)=G(k)+ hG(k)-hF(k)+O(h^2).
\ees
Cancelling the term $G(k)$ in both sides, dividing by $h$ and taking the limit as $h\to 0$ we obtain \eqref{eq:evGk}.

Using \eqref{eq:rhsMomPsi} we obtain
\begin{align*}
J(k) &:= \Psi_0(k)-\int_0^{\infty} d\tau E_{\beta}(\tau) \Gamma \big[\Psi_0(k)\big] \\&
= \Psi_0(k)-\int_0^{\infty} d\tau E_{\beta}(\tau) \Psi_0(k) + G_1(k) +O(\vert k\vert^4)
\end{align*}
where 
$$G_1(k)=\int_0^{\infty} d\tau E_{\beta}(\tau)F_1(k)$$ 
with $F(k)$ given by the quadratic form, hence the $C^{\infty}$ function, 
$$F_1(k)=- \frac{q d}{4(d-1)} N : \big( k \otimes k - \frac 1 d \vert k \vert^2 I \big).$$ 
Using the Taylor expansion at $k=0$ of the function $\Psi_0(k)$ given as in \eqref{eq:Psi0efN} we obtain
\begin{align*}
J(k) = 1- \frac{1}{2} N : \big( k \otimes k \big)-1+\int_0^{\infty} d\tau E_{\beta}(\tau) F_2(k) + G_1(k) +O(\vert k\vert^4)
\end{align*}
where 
$$F_2(k)= \frac{1}{2} N : \big( k \otimes k \big).$$ 
Hence
\be\label{eq:defJk}
J(k) = - \frac{1}{2} N : \big( k \otimes k \big)+ G(k) +O(\vert k\vert^4)
\ee
where $G(k)$ is as in \eqref{eq:defGk} with $F(k)=F_1(k)+F_2(k)$. Moreover, since $F$ is a quadratic form it follows that also $G$
 is a quadratic form as it may be seen using \eqref{eq:defGk} and \eqref{eq:EbetaPhi}.

With a slight abuse of notation we will write $G(k)=G: (k\otimes k)$ with $G\in M_{d\times d}(\R)$. Then, using the form of $F(k)$ above as well as the definition of $A_\beta$ given in \eqref{eq:CauchyAb}, equation \eqref{eq:evGk} becomes
\be \label{eq:evGk2}
(1+2 \beta) G+ \big(GA +(GA)^T\big)=-\frac{q d}{4(d-1)}  \left( N  - \frac{\Tr(N )}{d}I \right) + \frac 1 2 N. 
\ee
We observe that \eqref{eq:evGk2} is a linear, non homogeneous equation for $G$ and therefore, its solution can be written as a sum of particular solution plus the solution of the homogeneous equation.
 A particular solution of \eqref{eq:evGk2} is $G_p=\frac 1 2 N$ as it might be seen using that $N$ solves \eqref{eq:EVpbMom}. %with $b=N$
 Then the solution to \eqref{eq:evGk2} is given by $G=\frac 1 2 N+M$ where $M$ is a solution of the homogeneous problem 
 \bes 
 (1+2 \beta) M+ \big(MA +(MA)^T\big)=0.
 \ees
 Lemma \ref{lem:techlem} shows that $M=0$  if $\ep_1$ is sufficiently small.  Hence $G(k)= \frac 1 2 N : k\otimes k$.  Therefore, \eqref{eq:defJk} implies $J(k)=O(\vert k \vert ^4)$ for any $k\in\R^d$. Using then that $\| \Psi(k)\|_{\infty} \leq 1 $ and $\|\Gamma(\Psi)\|_{\infty}\leq 1$ we obtain  \eqref{eq:estPsi0StSo} and the Lemma \ref{lem:StaSol} follows
\end{proof}

We can now prove Theorem \ref{th:main1}.

\begin{proofof}[Proof of Theorem \ref{th:main1}]
We first notice that item $(i)$ follows from Lemma \ref{lem:EPbeta0}. 

Concerning item $(ii)$ we observe that the fact that $\mathcal{F}_p(N)$ with $2<p\leq 4$ is nonempty as a consequence of Lemma \ref{lem:StaSol} since $\Psi_0(k)$ in \eqref{eq:Psi0efN} belongs to $\mathcal{F}_p(N)$. Moreover, $\mathcal{F}_p(N)$  endowed with the metric \eqref{def:metricFpB} is a complete metric space. This follows from the fact that the convergence in the topology induced by the metric $d$ in \eqref{def:metricFpB} implies uniform convergence in compact sets of $\R^d$. The space $\Phi$ of characteristic functions is closed under the uniform convergence in compact sets (cf.~\cite{Fe2}). It only remains to show that there exists a function $\Psi\in \mathcal{F}_p(N)$ satisfying \eqref{eq:intPsiEbeta2}. To this end we use a fixed point argument. 
We define an operator $\mathcal{S}:\mathcal{F}_p(N)\rightarrow \mathcal{F}_p(N)$ such that 
\be\label{eq:defS}
\big(\mathcal{S}\Psi\big)(k)=\int_0^{\infty} d\tau E_{\beta}(\tau) \Gamma[\Psi(k)].
\ee 
In order to prove Theorem \ref{th:main1} we now prove the following estimate for the operator $\mathcal{S}$. 
Let be $\Psi_1,\ \Psi_2\in \mathcal{F}_p(N)$. We have 
\be \label{eq:contrS}
\Big\vert \mathcal{S}\Psi_1-\mathcal{S}\Psi_2\Big\vert (k)\leq \theta d(\Psi_1,\Psi_2) \vert k\vert ^p
\ee
where $d$ is the distance defined as in \eqref{def:metricFpB} and $0<\theta<1$ if $\ep_0$ is sufficiently small.

To show \eqref{eq:contrS} we observe that using \eqref{eq:defS} as well as the $\mathcal{L}$-Lipschitzianity property (cf.~Lemma \ref{lm:bdI}) we obtain
\begin{align*}
\Big\vert \mathcal{S}\Psi_1-\mathcal{S}\Psi_2\Big\vert (k)&\leq \int_0^{\infty} d\tau  E_{\beta}(\tau) \mathcal{L}(\vert \Psi_1-\Psi_2 \vert )(k) \\&
\leq d(\Psi_1,\Psi_2) \int_0^{\infty} d\tau e^{-\tau} e^{-\tau (Ak)\cdot \partial_k}  \mathcal{L}(\vert k\vert^p) \\& = (1-\lambda(p)) d(\Psi_1,\Psi_2) \int_0^{\infty} d\tau e^{-\tau}  \vert e^{-\tau (Ak)\cdot \partial_k}  k\vert^p\\&
=  \theta d(\Psi_1,\Psi_2)\vert k\vert^p
\end{align*} 
where in the first equality we used \eqref{eq:Lkp} and we set $\theta:=\frac{(1-\lambda(p)) }{1-p\|A\|}$.  Note that $\theta<1$ if $\|A\| < \frac{1}{p}$. We remark that we are using also that $\lambda(p)>0$ if $p>2$ (cf.~Theorem \ref{thm:3}). Hence the claim \eqref{eq:contrS} follows.

We need to check that $\mathcal{S}$ maps $\mathcal{F}_p(N)$ into itself. The fact that $\big(\mathcal{S}\Psi\big)\in\Phi$ for any $\Psi \in\mathcal{F}_p(N)$ follows from Lemma \ref{eq:I+map}. We now prove that $\mathcal{S}\Psi \in \mathcal{F}_p(N)$. 

Due to \eqref{def:FpB} and \eqref{eq:Psi0efN} to prove that a function $\Psi\in\Phi$ belongs to $\mathcal{F}_p(N)$ with $p\in (2,4]$ it is equivalent to show that 
$$\sup_{k\in\R^d}\frac{\Big\vert \Psi(k)-\Psi_0(k)\Big \vert}{\vert k\vert^p}<\infty. $$

Therefore, in order to show that given $\Psi\in \mathcal{F}_p(N)$ with $p\in (2,4]$ we have $\mathcal{S}\Psi \in \mathcal{F}_p(N)$ with $p\in (2,4]$ we can use the following chain of inequalities
\begin{align*}
\sup_{k\in\R^d}\frac{\Big\vert \mathcal{S}\Psi(k)-\Psi_0(k)\Big \vert}{\vert k\vert^p} &\leq 
\sup_{k\in\R^d} \frac{\Big\vert \mathcal{S}\Psi(k)-\mathcal{S}\Psi_0(k)\Big \vert}{\vert k\vert^p} + C  \leq \theta d(\Psi,\Psi_0) +C 
\end{align*} 
for some finite constant $C$. Note that we used in the first inequality  \eqref{eq:estPsi0StSo} and in the second the claim \eqref{eq:contrS} and \eqref{eq:defS}.
Thus  $\mathcal{S}\Big(\mathcal{F}_p(N)\Big)\subset \mathcal{F}_p(N)$ for any $2<p\leq 4$.

It only remains to prove that the operator $\mathcal{S}$ is contractive in $\mathcal{F}_p(N)$. Using \eqref{eq:contrS} we obtain
\bes
d(\mathcal{S}\Psi_1, \mathcal{S}\Psi_2) \leq \theta d(\Psi_1,\Psi_2).
\ees
Therefore, Banach fixed point theorem ensures the existence of a unique fixed point in $\mathcal{F}_p(N)$ for the operator $\mathcal{S}$.
\end{proofof}

%%%%%%%%%%%%%%%%%%%%%%%%%%%%%%%

\section{Stability of the steady states of $(\ref{eq:evPhiEbeta})$ } \label{sec:8}

In this section we prove the following result for time-dependent solutions of the problem \eqref{eq:CauchyAb} (cf.~\eqref{eq:evPhiEbeta}) having finite energy. 

\begin{theorem} \label{th:main2}
Let $p>2$ and $\|A\|\leq \ep_0$ with $\ep_0$ as in  Theorem \ref{th:main1}.   Let $\beta \in \R$ be as in Theorem \ref{th:main1} and let $\Psi(\cdot)$ the solution of  \eqref{eq:intPsiEbeta} described in item $(ii)$ of Theorem \ref{th:main1}. Suppose that $\vphi_0\in\Phi$ satisfies
\be\label{eq:cdtffi0}
\vert  \vphi_0(k) - \big(1+\sum_{\vert \alpha \vert = 2}c_{\alpha} k^\alpha \big)\vert \leq C_0 \vert k\vert^p, \quad k \in \R^d
\ee
for some $C_0>0$.  Let $\vphi(\cdot,t)$ be the solution to \eqref{eq:evPhiEbeta}.
Then, there exists $\lambda \geq 0$, $C>0$ depending on $\vphi_0$ and $\theta>0$ depending on $q$ as in \eqref{eq:defq}, $p$ and $d$ such that 
\be \label{eq:stabin}
\vert  \vphi(k,t)-\Psi (\lambda k)\vert \leq C e^{-\theta t} (|k|^2+|k|^p), \quad k \in \R^d. 
\ee 
\end{theorem}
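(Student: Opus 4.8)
The plan is to reduce the problem to an application of the Comparison Theorem (Theorem~\ref{thm:2}) and the exponential-decay estimate (Theorem~\ref{thm:3}), once we have matched the second-order moments of $\vphi(\cdot,t)$ and of $\Psi(\lambda\cdot)$. First I would study the evolution of the second-order moments of $\vphi(\cdot,t)$. By Proposition~\ref{prop:regularity} the regularity at $k=0$ encoded in \eqref{eq:cdtffi0} is propagated, so $\vphi(k,t)=1-\tfrac12\sum_{j,\ell}b_{j,\ell}(t)k_jk_\ell+O(|k|^p)$ uniformly in $t$ on compact time intervals, with $B(t)=(b_{j,\ell}(t))$ solving the linear ODE \eqref{eq:SecMomEq} but now with $A$ replaced by $A_\beta=A+\beta I$ (because we are working with \eqref{eq:CauchyAb}); explicitly $\partial_t B+(BA_\beta+(BA_\beta)^T)+\tfrac{qd}{2(d-1)}(B-\tfrac{\Tr B}{d}I)=0$. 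By item~(i) of Theorem~\ref{th:main1}, for $\|A\|\le\ep_0$ the associated linear operator on $\mathrm{Symm}_{d\times d}(\R)$ has a simple dominant eigenvalue which (after the shift by $\beta$) is exactly $0$, with eigenvector $N$, and all other eigenvalues have strictly negative real part, with a spectral gap bounded below in terms of $q$, $p$, $d$. Hence $B(t)\to \lambda^2 N$ exponentially fast for a unique $\lambda=\lambda(\vphi_0,A)\ge 0$ determined by the projection of $B(0)=-2(c_{\alpha})_{|\alpha|=2}$ onto the dominant eigendirection: there is $\theta_1>0$ with $\|B(t)-\lambda^2 N\|\le C e^{-\theta_1 t}$.

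Next I would set up a comparison between $\vphi(\cdot,t)$ and the stationary solution $\Psi(\lambda\cdot)$. Note $\Psi(\lambda\cdot)\in\mathcal F_p(\lambda^2 N)$ and is itself a steady state of \eqref{eq:evPhiEbeta} by the Remark after Theorem~\ref{th:main1}. The obstruction to applying Theorem~\ref{thm:3} directly is that $\vphi(\cdot,t)$ and $\Psi(\lambda\cdot)$ are not tangent to order $p>2$ at $k=0$ for finite $t$; they only become tangent in the limit. To handle this I would introduce an auxiliary interpolating comparison: estimate $|\vphi(k,t)-\Psi(\lambda k)|$ by splitting it, for each $t$, as the difference between $\vphi(\cdot,t)$ and the solution $\tilde\vphi(\cdot,s)$ of \eqref{eq:evPhiEbeta} started at time $t$ from a characteristic function whose second-order moment tensor is exactly $\lambda^2 N$ — for instance the Gaussian $\Psi_0^{(\lambda)}(k)=\exp(-\tfrac12\lambda^2 N:k\otimes k)$ — together with the convergence of that auxiliary solution to $\Psi(\lambda\cdot)$. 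More efficiently, I would argue in two steps: (a) use the Comparison Theorem~\ref{thm:2} together with the explicit super-solution $u_p$ from Lemma~\ref{lem:bdup} to control $|\vphi(k,t)-\Psi_0^{(\lambda)}(k e^{\text{(shift)}})|$-type quantities, absorbing the mismatch of second moments $\|B(t)-\lambda^2 N\|$ into a remainder of size $C e^{-\theta_1 t}|k|^2$; and (b) use the contraction estimate \eqref{eq:contrS} for the operator $\mathcal S$ (equivalently, the decay of $E_\beta$ combined with $\mathcal L$-Lipschitzianity) to show that the auxiliary evolution converges to the unique fixed point $\Psi(\lambda\cdot)$ at rate $e^{-\theta_2 t}$ in the $\mathcal F_p$-metric.

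Concretely, the cleanest route: write
\[
|\vphi(k,t)-\Psi(\lambda k)|\le |\vphi(k,t)-\Psi_0^{(\lambda)}(k)|+|\Psi_0^{(\lambda)}(k)-\Psi(\lambda k)|,
\]
where the second term is $O(|k|^4)\le C|k|^p$ uniformly (it is $t$-independent, so this does not decay — so this naive split is not enough). Instead I would compare $\vphi(\cdot,t)$ with the time-dependent solution $\Phi^{(\lambda)}(\cdot,t)$ of \eqref{eq:evPhiEbeta} with initial datum $\Psi_0^{(\lambda)}$, which by the argument in the proof of Theorem~\ref{th:main1} (Lemma~\ref{lem:StaSol} plus the contraction \eqref{eq:contrS}) satisfies $\|\Phi^{(\lambda)}(\cdot,t)-\Psi(\lambda\cdot)\|_{\mathcal F_p}\to 0$ like $e^{-\theta_2 t}$; meanwhile $\vphi(\cdot,0)$ and $\Psi_0^{(\lambda)}$ have second-order moment tensors differing by $B(0)-\lambda^2 N$, so by Proposition~\ref{prop:regularity} applied to the difference and then Theorem~\ref{thm:2}, $|\vphi(k,t)-\Phi^{(\lambda)}(k,t)|$ is bounded by the linear semigroup acting on $|B(0)-\lambda^2 N||k|^2+C_0|k|^p$; the $|k|^2$ part does not decay under $\exp[(\mathcal L-I-\hat D)t]$ (eigenvalue $\lambda(2)=0$), which is exactly why we must choose $\lambda$ so that $B(0)-\lambda^2 N$ has zero component along $N$, i.e. lies in the complementary (decaying) eigenspace; with that choice the $|k|^2$ contribution decays like $e^{-\theta_1 t}|k|^2$ by the spectral gap of Lemma~\ref{lem:EPbeta0}, and the $|k|^p$ part decays like $e^{-(\lambda(p)-p\|A\|)t}|k|^p$ by Lemma~\ref{lem:bdup}. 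Taking $\theta=\min\{\theta_1,\theta_2,\lambda(p)-p\|A\|\}>0$ (positive since $\|A\|\le\ep_0<\lambda(p)/p$) and collecting the three contributions yields \eqref{eq:stabin}.

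The main obstacle, and the step I would spend the most care on, is precisely the non-decay of the $|k|^2$-mode under the comparison semigroup $\exp[(\mathcal L-I-\hat D)t]$: one must show that the \emph{difference} $\vphi-\Phi^{(\lambda)}$ has vanishing genuinely-quadratic part (after subtracting the common $N$-direction) and that its remaining quadratic part is governed by the subdominant spectrum of the matrix ODE \eqref{eq:SecMomEq}-with-$A_\beta$, which decays at rate $\theta_1$; this requires combining the ODE analysis of $B(t)$ (via Lemma~\ref{lem:EPbeta0} and $K$-theory perturbation of the spectrum) with the pointwise-in-$k$ comparison, and carefully tracking that the remainder term $C_0(t)|k|^p$ in Proposition~\ref{prop:regularity} stays bounded and is then itself damped by Lemma~\ref{lem:bdup}. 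The definition and $\lambda$-independence (well-posedness, nonnegativity $\lambda\ge 0$) of $\lambda=\lambda(\vphi_0,A)$ also needs to be pinned down here, as the component of $B(0)$ along the positive-definite eigenvector $N$; positivity of $\Tr(N^TB(0))$ (equivalently, positivity of the energy of $f_0$) guarantees $\lambda>0$ unless $f_0$ is a point mass.
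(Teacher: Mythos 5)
Your outline follows the paper's strategy in broad terms (the moment ODE for $B(t)$ with $A_\beta$, a Gaussian reference profile, the Comparison Theorem plus the supersolutions $u_p$), but the central step is not actually carried out, and as formulated it would fail. You compare $\vphi(\cdot,t)$ with the evolution $\Phi^{(\lambda)}$ started from the fixed Gaussian $\Psi_0^{(\lambda)}$ with covariance $\lambda^2N$, and claim that, because $B(0)-\lambda^2 N$ has no component along $N$, the quadratic part of the difference decays at the spectral-gap rate ``by Lemma \ref{lem:EPbeta0}''. But the pointwise tool you invoke, Theorem \ref{thm:2}, bounds $\vert\vphi-\Phi^{(\lambda)}\vert$ by the positivity-preserving scalar semigroup $\exp\big[(\mathcal{L}-I-\hat{D})t\big]$ applied to $\vert\vphi_0-\Psi_0^{(\lambda)}\vert$; once absolute values are taken, the subdominant/traceless structure of $B(0)-\lambda^2N$ is lost, and since $\lambda(2)=0$ the quadratic contribution does not decay under this semigroup --- Lemma \ref{lem:bdup} with $p=2$ gives at best growth $e^{2\Vert A_\beta\Vert t}\vert k\vert^2$. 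The spectral gap of the matrix ODE controls the Hessian of the difference, not the difference itself; converting Hessian decay into the pointwise bound \eqref{eq:stabin} requires a uniform-in-time control of the order-$p$ remainder, and Proposition \ref{prop:regularity} only provides a continuous, not bounded, $C_0(t)$. Your closing paragraph acknowledges exactly this obstacle but supplies no mechanism to overcome it. In addition, the exponential convergence of the auxiliary flow $\Phi^{(\lambda)}(\cdot,t)$ to $\Psi(\lambda\cdot)$ cannot be deduced from the contraction \eqref{eq:contrS}, which concerns the stationary fixed-point map $\mathcal{S}$ rather than the time-$t$ evolution (this piece can be repaired via Theorem \ref{thm:3} with $A$ replaced by $A_\beta$, since $\Psi_0^{(\lambda)}$ and $\Psi(\lambda\cdot)$ differ by $O(\vert k\vert^p)$).

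The paper closes the gap with two devices missing from your proposal. First, it compares $\vphi(\cdot,t)$ not with a fixed Gaussian but with the time-dependent Gaussian $\tilde\vphi(k,t)=\exp\big(-\tfrac12 B(t):k\otimes k\big)$, whose Hessian matches that of $\vphi(\cdot,t)$ at every time; since $\tilde\vphi$ solves the equation up to an $O(\vert k\vert^4)$ source (Lemma \ref{lem:approx}), a Duhamel--comparison argument with the supersolution $u_p$ yields the uniform-in-time bound $\vert\vphi(k,t)-\tilde\vphi(k,t)\vert\le C\vert k\vert^p$ (Lemma \ref{lem:ffiffitilde}). The quadratic mismatch is then carried entirely by $\vert\tilde\vphi(k,T)-\Psi(\lambda k)\vert\le Ce^{-\nu qT}\vert k\vert^2$, whose decay comes directly from the ODE (Lemma \ref{lem:asym}), with no absolute-value/semigroup step at the quadratic level. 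Second, since the $\vert k\vert^p$ part of $\vert\vphi(\cdot,T)-\Psi(\lambda\cdot)\vert$ is at that stage merely bounded, the paper restarts the comparison at time $T$ and runs it for a further time $L=T$: the $\vert k\vert^p$ part decays like $e^{-(\lambda(p)-p\Vert A\Vert)L}$, while the $\vert k\vert^2$ part, which may grow like $e^{2\Vert A\Vert L}$, starts from the prefactor $e^{-\nu qT}$, and for $\Vert A\Vert$ small this two-epoch bookkeeping produces the exponential rate $\theta$ in \eqref{eq:stabin}. Without the uniform remainder bound and this restart argument, your estimate does not close.
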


\begin{remark}
Notice that the argument at the end of Section \ref{sec:6} shows that $\Psi(\cdot)$ is a stationary solution to  \eqref{eq:CauchyAb}, or equivalently to \eqref{eq:evPhiEbeta}. 
\end{remark}

In order to prove Theorem \ref{th:main2} we will need some auxiliary results. The following lemma provides the long-time asymptotics of the Hessian matrix of $\vphi(\cdot,t)$ at $k=0$. 
\begin{lemma}\label{lem:asym}
Let $\beta \in \R$ be as in the statement of Theorem \ref{th:main1}.  Suppose that $\vphi \in C([0,\infty);\Phi)$ is a solution to \eqref{eq:evPhiEbeta} with initial datum $\vphi_0\in\Phi$ satisfying \eqref{eq:cdtffi0}.  
Let $B(t)\in \text{Symm}_{d\times d}(\R)$ such that 
\bes
\vert  \vphi(k,t) - \big(1- \frac 1 2 B(t) k \otimes k \big)\vert \leq C_0(t) \vert k\vert^p, \quad k \in \R^d, \;\;p>2
\ees
(cf.~Proposition \ref{prop:regularity}). Then, $B(t)$ solves the linear equation 
\be \label{eq:SecMomEq2}
\partial_t B+ \big(BA_{\beta} +(BA_{\beta})^T\big)+\frac{q d}{2(d-1)}  \left( B - \frac{\Tr(B)}{d}I\right)=0
\ee 
where $q$ is given by \eqref{eq:defq}. Thus, there exists a real constant $\lambda\geq 0$ such that the asymptotic behaviour of $B(t)$ is given by 
\be \label{eq:asymB}
B(t) \rightarrow \lambda^2 N \;\; \text{as} \; t\to\infty
\ee
where $N$ is as in item $(i)$ of Theorem \ref{th:main1}. Moreover, the convergence is exponential, i.e.
\be\label{eq:expconv}
\|B(t) - \lambda^2 N\|\leq C e^{-\nu q t}
\ee
where $q$ is as in \eqref{eq:defq}, the constant $C$ depends only on the moments $c_{\alpha}$ on the left hand side of  \eqref{eq:cdtffi0} and $\nu$ depends only on the dimension $d$ but it is independent on $A$.
\end{lemma}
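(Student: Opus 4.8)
First I would establish that $B(t)$ solves the linear ODE \eqref{eq:SecMomEq2}. Since \eqref{eq:CauchyAb} is just \eqref{eq:BoltFourier2} with $A$ replaced by $A_\beta=A+\beta I$, Proposition \ref{prop:regularity} (applied to \eqref{eq:CauchyAb} with $N=2$, the first order term in $k$ staying zero by the argument of Section \ref{sec:6}, since the first moment of $\vphi(\cdot,t)$ solves the homogeneous equation $\partial_t w+A_\beta^T w=0$) gives the expansion in the statement with $B\in C([0,\infty);\text{Symm}_{d\times d}(\R))$, and in fact $B\in C^1$ a posteriori. Plugging this expansion into \eqref{eq:CauchyAb}, using the identity $\Gamma[\vphi](k)=\vphi(k)+\frac{qd}{4(d-1)}B:\big(k\otimes k-\frac1d|k|^2 I\big)+O(|k|^4)$ (the computation of Appendix \ref{app:A}, cf.~\eqref{eq:rhsMom}, carried out with $A$ replaced by $A_\beta$) and matching the coefficients of $k_jk_\ell$ produces exactly \eqref{eq:SecMomEq2}.

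Next I would analyse the flow of \eqref{eq:SecMomEq2}. Write it as $\dot B=\mathcal M_{A_\beta}[B]$, a linear flow on the finite dimensional space $\text{Symm}_{d\times d}(\C)$, and observe that $A_\beta=A+\beta I$ implies $\mathcal M_{A_\beta}[B]=\mathcal M_{A}[B]-2\beta B$, where, comparing with \eqref{eq:EVpbMom}, $B$ is an eigenvector of $\mathcal M_{A}$ with eigenvalue $2\beta'$ precisely when $\beta'$ is an eigenvalue of \eqref{eq:EVpbMom}. By Theorem \ref{th:main1}(i) (equivalently Lemma \ref{lem:EPbeta0}), for $\|A\|\le\ep_0$ the eigenvalue of \eqref{eq:EVpbMom} with largest real part is precisely $\beta$, it is real and simple, and its eigenvector is $N$. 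Hence $0$ is a simple eigenvalue of $\mathcal M_{A_\beta}$ with eigenvector $N$, while every other eigenvalue has real part $\le-2\nu q$ for some $\nu=\nu(d)>0$: at $A=0$ (so $\beta=0$) the remaining eigenvalues of \eqref{eq:EVpbMom} all equal $-\frac{qd}{4(d-1)}$ (the corresponding eigenspace being the traceless symmetric matrices), so the bound follows by a perturbation argument after shrinking $\ep_0$, depending on $q$ and $d$, so that this spectral gap survives the perturbation by $A$.

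Finally I would read off the asymptotics. Decompose $\text{Symm}_{d\times d}(\R)=\R N\oplus W$ into $\mathcal M_{A_\beta}$-invariant subspaces, $W$ being the real part of the sum of the generalised eigenspaces of the nonzero eigenvalues, and write $B(t)=\mu(t)N+B_\perp(t)$. On $\R N$ the generator vanishes, so $\mu(t)\equiv\mu$ is constant, while on $W$ the spectral bound of the previous step gives $\|B_\perp(t)\|=\|e^{t\mathcal M_{A_\beta}}B_\perp(0)\|\le C e^{-\nu q t}$, any polynomial Jordan factors being absorbed into a slightly smaller $\nu=\nu(d)$ and $C$ being controlled by $\|B(0)\|$, hence by the $c_\alpha$. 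Thus $B(t)\to\mu N$ with exponential rate $e^{-\nu q t}$, which is \eqref{eq:asymB}--\eqref{eq:expconv} once we set $\lambda^2:=\mu$. That $\mu\ge0$ follows from positivity: $\vphi(\cdot,t)\in\Phi$ has vanishing first moment and finite second moments, so $B(t)$ is the covariance matrix of the associated probability measure and is positive semidefinite for every $t$; letting $t\to\infty$ gives $\mu N$ positive semidefinite, and since $N$ is positive definite (Theorem \ref{th:main1}(i)) we obtain $\mu\ge0$ and $\lambda=\sqrt\mu\ge0$.

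The hard part is the spectral gap claim, namely that the second eigenvalue of $\mathcal M_{A_\beta}$ stays bounded away from $0$ by $\nu q$ with $\nu$ depending only on $d$, uniformly over $\|A\|\le\ep_0$. This rests on the explicit spectrum of \eqref{eq:EVpbMom} at $A=0$ together with a quantitative continuity estimate for the eigenvalues of the analytic family $A\mapsto\mathcal M_{A_\beta}$ (and the matching choice of $\ep_0$); the remaining steps are routine, the first being the Appendix \ref{app:A} moment computation and the last the elementary positivity of covariance matrices.
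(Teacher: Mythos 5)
Your proposal is correct and follows essentially the same route as the paper: derive \eqref{eq:SecMomEq2} from Proposition \ref{prop:regularity} and the Appendix A computation with $A$ replaced by $A_{\beta}$, then use the spectral structure of the linear ODE (zero a simple eigenvalue with eigenvector $N$, the rest of the spectrum at distance of order $q$ by perturbation from $A=0$) to get exponential convergence to $\kappa_0 N$, and finally positivity of the covariance matrices $B(t)$ together with positive definiteness of $N$ to conclude $\kappa_0=\lambda^2\geq 0$. Your write-up is simply a more explicit version of the paper's argument, including the spectral-gap bookkeeping that the paper treats in one line.
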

\begin{remark}\label{rk:dirac}
Notice that $\kappa_0=0$ (or, equivalently $\lambda=0$ in \eqref{eq:stabin}) if and only if $B(t)=0$ for any $t\geq 0$. Then $\vphi(k,t) =1$ for any $t\geq 0$ since $\vphi\in\Phi$ (the corresponding probability distribution  $f$ is a Dirac mass at the origin).
\end{remark}

\begin{proof}
The fact that $B(t)$ solves \eqref{eq:SecMomEq2} follows from Proposition  \ref{prop:regularity} and from arguments similar to the ones yielding \eqref{eq:SecMomEq}. 

The result concerning the long-time asymptotics of $B(t)$ is a consequence of the classical theory of ODEs. Notice that the choice of $\beta$ in Theorem \ref{th:main1} implies that $B=N$ is a steady state of \eqref{eq:SecMomEq2} and that all the other eigenvalues associated to the solution of the linear equation \eqref{eq:SecMomEq2} have negative real part. Therefore $B(t)$ approaches $ \kappa_0 N$ for some $ \kappa_0 \in\R$ exponentially (cf.~\eqref{eq:expconv}). The dependence of $\nu$ only on the dimension $d$ follows from the fact that for $\|A\|$ small the rate of convergence of $B(t)$ to the hyperplane of steady states to \eqref{eq:SecMomEq2} is close to the one for the equation
\bes
\partial_t B+\frac{q d}{2(d-1)}  \left( B - \frac{\Tr(B)}{d}I\right)=0.
\ees
Absorbing $q$ in the time variable we then obtain \eqref{eq:expconv}. 
We also observe that $\kappa_0 \geq 0$ because $B(t)$, as well as $N$ are positive definite matrices. In the case of $B(t)$ this is due to the fact that $\vphi(\cdot,t)\in\Phi$ and then  $B(t)$ is the covariance matrix associated to the corresponding probability distribution $f$. We can then write $\kappa_0=\lambda^2$ for definiteness. 
\end{proof}

The following result yields a  Gaussian approximation for the solution of \eqref{eq:CauchyAb}.
\begin{lemma}\label{lem:approx}
Let $\vphi\in C([0,\infty);\Phi)$ be a solution to \eqref{eq:CauchyAb}.
Let $B(t)$ as in Lemma \ref{lem:asym}. 
We define the function $\tilde{\vphi}\in C([0,\infty);\Phi)$ as 
\be \label{eq:exptildeffi}
\tilde{\vphi}(k,t)=e^{-\frac 1 2 B(t):k\otimes k}, \quad k\in\R^d.
\ee
Then $\tilde{\vphi}$ satisfies the following equation
\be
\left\vert \partial_t \tilde{\vphi}+(A_{\beta}k)\cdot \partial_k \tilde{\vphi}-\Gamma [\tilde{\vphi}]+\tilde{\vphi}\right\vert \leq C_1 \vert k\vert^4, \quad k\in\R^d
\ee 
where $C_1\in \R_{+} $ depends on the coefficients $c_{\alpha}$ in \eqref{eq:cdtffi0} but it is independent on $t$. 
\end{lemma}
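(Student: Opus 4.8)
The plan is to expand each of the four terms $\partial_t\tilde{\vphi}$, $(A_\beta k)\cdot\partial_k\tilde{\vphi}$, $\Gamma[\tilde{\vphi}]$ and $\tilde{\vphi}$ to second order in $k$ about $k=0$ and to observe that the quadratic contributions cancel precisely because $B(t)$ solves \eqref{eq:SecMomEq2}. First I would differentiate the Gaussian $\tilde{\vphi}(k,t)=\exp\big(-\tfrac12 B(t):k\otimes k\big)$, obtaining $\partial_t\tilde{\vphi}=-\tfrac12\big(\partial_t B(t):k\otimes k\big)\tilde{\vphi}$ and $\partial_k\tilde{\vphi}=-(B(t)k)\tilde{\vphi}$, so that, using the symmetry of $B(t)$,
\[
(A_\beta k)\cdot\partial_k\tilde{\vphi}=-\big((A_\beta k)\cdot(B(t)k)\big)\tilde{\vphi}=-\tfrac12\big(B(t)A_\beta+(B(t)A_\beta)^T\big):k\otimes k\,\tilde{\vphi}.
\]
Since $\tilde{\vphi}$ is smooth with $\tilde{\vphi}=1-\tfrac12 B(t):k\otimes k+O(\vert k\vert^4)$, and the prefactors $\partial_t B:k\otimes k$ and $(BA_\beta+(BA_\beta)^T):k\otimes k$ are $O(\vert k\vert^2)$, I may replace $\tilde{\vphi}$ by $1$ in both products at the cost of an $O(\vert k\vert^4)$ error.

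Next, for the collision term I would invoke the computation of Appendix \ref{app:A}: arguing exactly as in the derivation of \eqref{eq:rhsMomPsi} in the proof of Lemma \ref{lem:StaSol}, but with the matrix $N$ replaced by $B(t)$ (both $\Psi_0$ there and $\tilde{\vphi}$ here are Gaussians whose Hessian at the origin is the relevant matrix), one obtains
\[
\Gamma[\tilde{\vphi}](k)=\tilde{\vphi}(k)+\frac{qd}{4(d-1)}\,B(t):\Big(k\otimes k-\tfrac1d\vert k\vert^2 I\Big)+O(\vert k\vert^4).
\]
Substituting the three expansions into $\partial_t\tilde{\vphi}+(A_\beta k)\cdot\partial_k\tilde{\vphi}-\Gamma[\tilde{\vphi}]+\tilde{\vphi}$, the two copies of $\tilde{\vphi}$ cancel, and collecting the remaining quadratic terms (using $B:\big(\tfrac1d\vert k\vert^2 I\big)=\tfrac{\Tr B}{d}\,I:k\otimes k$) one is left with
\[
\partial_t\tilde{\vphi}+(A_\beta k)\cdot\partial_k\tilde{\vphi}-\Gamma[\tilde{\vphi}]+\tilde{\vphi}=-\tfrac12\Big[\partial_t B+BA_\beta+(BA_\beta)^T+\tfrac{qd}{2(d-1)}\Big(B-\tfrac{\Tr B}{d}I\Big)\Big]:k\otimes k+O(\vert k\vert^4).
\]
By Lemma \ref{lem:asym}, $B(t)$ satisfies \eqref{eq:SecMomEq2}, so the bracket vanishes identically and only the $O(\vert k\vert^4)$ remainder survives, which is the claimed estimate.

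The delicate point — and the main obstacle — is to verify that the constant $C_1$ multiplying $\vert k\vert^4$ can be taken independent of $t$. Here I would use Lemma \ref{lem:asym} a second time: the exponential convergence $B(t)\to\lambda^2 N$ gives $\sup_{t\ge 0}\|B(t)\|<\infty$, with this supremum controlled by the coefficients $c_\alpha$ in \eqref{eq:cdtffi0}; consequently both the fourth-order Taylor remainder of the Gaussian $\tilde{\vphi}$ and the remainder in the Appendix-\ref{app:A} expansion of $\Gamma[\tilde{\vphi}]$ are bounded by $C_1\vert k\vert^4$ with $C_1$ depending only on $\sup_{t\ge 0}\|B(t)\|$, on $q$ and on $d$, hence only on the $c_\alpha$ and independent of $t$. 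Finally one records that $\tilde{\vphi}(\cdot,t)\in\Phi$: by Lemma \ref{lem:asym}, $B(t)$ is a covariance matrix, hence positive semidefinite, so $\tilde{\vphi}(\cdot,t)$ is the Fourier transform of a (possibly degenerate) Gaussian probability measure, and $t\mapsto\tilde{\vphi}(\cdot,t)$ is continuous because $B(\cdot)$ is.
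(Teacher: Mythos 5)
Your proof is correct and follows essentially the same route as the paper: Taylor expansion of the Gaussian \eqref{eq:exptildeffi}, the expansion of $\Gamma$ on Gaussians as in the derivation of \eqref{eq:rhsMomPsi} in Lemma \ref{lem:StaSol}, and cancellation of the quadratic terms via \eqref{eq:SecMomEq2}, with uniformity in $t$ coming from the boundedness of $B(t)$ guaranteed by Lemma \ref{lem:asym}. Your write-up is in fact a more detailed version of the paper's sketch (which only notes in passing that the constant depends on $B(0)$ and $d$), so no gap to report.
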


Notice that $\vphi$ and $\tilde{\vphi}$ have the same Hessian matrix at $k=0$ for any time $t$. 

\begin{proof}
The result follows using a Taylor series for the exponential function \eqref{eq:exptildeffi} as well as the equation \eqref{eq:SecMomEq2} in order to cancel the quadratic terms. Notice that the constant in front of the term $\vert k\vert^4,$ depends only on $B(0)$ and $d$. See Lemma \ref{lem:StaSol} for a related argument. 
\end{proof}

\begin{lemma} \label{lem:ffiffitilde}
Let $\vphi$ and $\tilde{\vphi}$ be given as in Lemma \ref{lem:approx}. The following estimate holds:
\be
\vert  \vphi(k,t)-\tilde{\vphi}(k,t)\vert \leq  C_2 \vert k\vert^p, \quad t \geq 0, \quad  2<p\leq 4,
\ee
where $C_2$ depends on $B(0)$, on $C_0$ (cf.\eqref{eq:cdtffi0}), $d$ and $p$, but it is independent on $t$. 
\end{lemma}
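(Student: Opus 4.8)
The plan is to set $w(k,t)=\vphi(k,t)-\tilde{\vphi}(k,t)$ and to estimate $|w|$ by the comparison machinery of Section~\ref{sec:5} (Theorem~\ref{thm:2}, Proposition~\ref{prop:supersol}), adapted to the matrix $A_\beta=A+\beta I$ and to an inhomogeneous term. First I would record the mild formulations. The function $\vphi$ solves \eqref{eq:evPhiEbeta}, while Lemma~\ref{lem:approx} states that $\tilde{\vphi}$ solves $\partial_t\tilde{\vphi}+(1+A_\beta k\cdot\partial_k)\tilde{\vphi}=\Gamma[\tilde{\vphi}]+R$ with $|R(k,t)|\le C_1|k|^4$ for all $k\in\R^d$ and all $t\ge0$; Duhamel's formula then gives $\tilde{\vphi}(\cdot,t)=E_\beta(t)\tilde{\vphi}(\cdot,0)+\int_0^t E_\beta(t-\tau)\big(\Gamma[\tilde{\vphi}(\cdot,\tau)]+R(\cdot,\tau)\big)\,d\tau$. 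Subtracting from \eqref{eq:evPhiEbeta}, taking absolute values, using the $\mathcal{L}$-Lipschitz bound $|\Gamma[\vphi]-\Gamma[\tilde{\vphi}]|\le\mathcal{L}[|w|]$ from Lemma~\ref{lm:bdI} together with Lemma~\ref{lm:bdS} (to commute $E_\beta$ past $\mathcal{L}$) and the positivity preservation $|E_\beta(s)g|=E_\beta(s)|g|$, I obtain the integral inequality
\[
|w(k,t)|\le E_\beta(t)|w_0|(k)+\int_0^t E_\beta(t-\tau)\big(\mathcal{L}[|w(\cdot,\tau)|](k)+C_1|k|^4\big)\,d\tau ,
\]
where $w_0=\vphi_0-\tilde{\vphi}(\cdot,0)$. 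Since $\vphi$ and $\tilde{\vphi}$ share the same Hessian $B(0)$ at $k=0$, the bound \eqref{eq:cdtffi0} together with a Taylor expansion of the Gaussian $\tilde{\vphi}(\cdot,0)=e^{-\frac12 B(0):k\otimes k}$ gives $|w_0(k)|\le C_0|k|^p+c(B(0),d)|k|^4$ for $|k|\le1$, while $|w_0|\le2$ always.

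Next I would construct an explicit, time-uniform supersolution of the form $u(k,t)=a|k|^p+b|k|^4$ (the two terms are merged when $p=4$). Using $\mathcal{L}|k|^r=(1-\lambda(r))|k|^r$ from \eqref{eq:Lkp} and $E_\beta(s)|k|^r\le e^{-(1-r\|A_\beta\|)s}|k|^r$, a direct scalar computation shows that $u$ satisfies $u(k,t)\ge E_\beta(t)u(\cdot,0)(k)+\int_0^t E_\beta(t-\tau)\big(\mathcal{L}u(\cdot,\tau)(k)+C_1|k|^4\big)\,d\tau$ as soon as $\lambda(p)\ge p\|A_\beta\|$ and $b\ge C_1/(\lambda(4)-4\|A_\beta\|)$, with no further constraint on $a$; taking in addition $a=C_0$ and $b$ large enough in terms of $B(0),d$ and $C_1$ also yields $u(\cdot,0)\ge|w_0|$. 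The smallness conditions $\lambda(p)\ge p\|A_\beta\|$ and $\lambda(4)>4\|A_\beta\|$ hold after shrinking $\ep_0$, since $\|A_\beta\|\le(1+C_0')\|A\|$ by Lemma~\ref{lem:EPbeta0}(ii) and $\lambda(p),\lambda(4)>0$ by Theorem~\ref{thm:3}.

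With $u$ fixed, the comparison step is the $A_\beta$-analogue of Proposition~\ref{prop:supersol}(ii)--(iii) with a time-independent forcing: the monotone iteration $z_0=0$, $z_{n+1}(\cdot,t)=E_\beta(t)u(\cdot,0)+\int_0^t E_\beta(t-\tau)\big(\mathcal{L}z_n(\cdot,\tau)+C_1|k|^4\big)\,d\tau$ is nondecreasing, stays below $u$ by the supersolution property, and converges to the solution of the corresponding linear equation; a Gronwall estimate on $\|\,(|w|-\text{that solution})_+\,\|_p$ based on \eqref{est:L} and the $A_\beta$-version of \eqref{est:E} gives $|w(k,t)|\le u(k,t)=a|k|^p+b|k|^4$ for all $t\ge0$ and $k\in\R^d$. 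Finally I combine this with the trivial bound $|w|\le2$: for $|k|\ge1$ one has $|w|\le2\le2|k|^p$, and for $|k|\le1$ one has $a|k|^p+b|k|^4\le(a+b)|k|^p$, so $|w(k,t)|\le\max\{a+b,2\}|k|^p=:C_2|k|^p$, with $C_2$ depending only on $C_0$, $B(0)$, $d$ and $p$ (and on the fixed data $g$, $\ep_0$), but not on $t$.

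The step I expect to be the real work is verifying the supersolution inequality with the correct threshold, i.e.~checking that the $|k|^4$ forcing does not spoil the $|k|^p$ bound and keeping track of constants carefully enough that $C_2$ comes out independent of $t$; by contrast, the adaptation of Proposition~\ref{prop:supersol} to the matrix $A_\beta$ and to a nonzero forcing term is routine, as is the passage from \eqref{eq:cdtffi0} to the estimate for $|w_0|$.
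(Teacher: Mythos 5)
Your argument is correct and follows essentially the same route as the paper: write the (mild) equation for the difference $\vphi-\tilde{\vphi}$ with the $O(|k|^4)$ remainder from Lemma \ref{lem:approx}, bound the initial difference using \eqref{eq:cdtffi0} and the Gaussian form of $\tilde{\vphi}(\cdot,0)$, and then apply the $\mathcal{L}$-Lipschitz comparison machinery (Theorem \ref{thm:2}, Proposition \ref{prop:supersol}, Lemma \ref{lem:bdup}) under the smallness conditions $\lambda(p)>p\|A_\beta\|$, $\lambda(4)>4\|A_\beta\|$. Your explicit time-independent supersolution $a|k|^p+b|k|^4$ is just a repackaging of the paper's bound $Cu_p(k,t)+C_1\int_0^t u(k,t-s)\,ds$, so the two proofs coincide in substance.
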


\begin{proofof}[Proof of Lemma \ref{lem:ffiffitilde}]
We consider the equation satisfied by the difference $\vphi(k)-\tilde{\vphi}(k)$, namely
\bes 
 \partial_t \big(\vphi-\tilde{\vphi}\big)+(A_{\beta}k)\cdot \partial_k \big(\vphi-\tilde{\vphi}\big)=\Gamma [\vphi]-\Gamma [\tilde{\vphi}]-\big(\vphi-\tilde{\vphi}\big)+O(\vert k\vert^4).
\ees 
Using  the definition of $B(t)$ in Lemma \ref{lem:asym} and \eqref{eq:exptildeffi} we have
\bes
\vert \vphi (k, 0)-\tilde{\vphi}(k, 0) \vert \leq C \vert k \vert ^p, \quad k\in \R^d 
\ees 
Then, a standard comparison argument (cf. Theorem \ref{thm:2} for a similar result) gives
\bes
\vert \vphi  (k,t)-\tilde{\vphi} (k,t) \vert \leq C \exp  \left[\left( \mathcal{L}-\hat{D}-I\right)t \right]  \vert k \vert ^p+  C_1 \int_0^t ds  \exp  \left[\left( \mathcal{L}-\hat{D}-I\right)(t-s) \right]\vert k \vert^4.
\ees 
Using Proposition \ref{prop:supersol} and Lemma \ref{lem:bdup} we obtain 
\bes 
\vert \vphi  (k,t)-\tilde{\vphi} (k,t) \vert \leq C u_p (k,t)+ C_1 \int_0^t ds \ u_p (k,t-s),
\ees
 where $u_p$ is defined as in \eqref{eq:defup}. Therefore $\vert \vphi  (k,t)-\tilde{\vphi} (k,t) \vert \leq  C_2 \vert k\vert^4$ and the result follows.
\end{proofof}

\begin{proofof}[Proof of Theorem \ref{th:main2}]
Using that $\Psi\in \mathcal{F}_p(N)$  with $\mathcal{F}_p(N)$ as in \eqref{def:FpB},  we obtain  
\be \label{estPTh}
\vert \vphi(k,T)- \Psi(\lambda k)\vert \leq \vert \vphi(k,T)- \tilde{\vphi}(k,T)\vert+ \vert  \tilde{\vphi}(k,T)- \Psi(\lambda k)\vert \leq  C|k|^p + C e^{-\nu q T} |k|^2
\ee 
with $2<p\leq 4$ and where we have used  Lemma \ref{lem:ffiffitilde} to estimate the first term on the right hand side of the first inequality and \eqref{eq:expconv} in Lemma \ref{lem:asym} to estimate the second term on the right hand side. 

We now apply again Theorem \ref{thm:2} starting at time $t=T$. Using also Proposition \ref{prop:supersol} and Lemma \ref{lem:bdup}, the following estimate holds:
\bes 
\vert \vphi(k,T+L)- \Psi(\lambda k)\vert \leq Ce^{-(\lambda(p)-p\|A\|) L}|k|^p+ C e^{-\nu q T} e^{ 2 \|A\| L}|k|^2 ,\quad L\geq 0,
\ees 
Notice that we have used $p=2$ to estimate the contribution due to the second term on the right hand side of \eqref{estPTh}. 

Assuming that $\|A\| \leq \min \left\{ \frac {\lambda(p)}{2p},\frac{\nu q}{4} \right\}$ we can take $L=T$ so that  
 \bes 
 \vert \vphi(k,2T)- \Psi(\lambda k)\vert \leq C e^{-2\theta T} (|k|^2+|k|^p), \quad k\in\R^d
 \ees 
where $\theta=\min \left\{ \frac {\lambda(p)}{4},\frac{\nu q}{4} \right\}$.
Hence the proof is completed.   
\end{proofof}

\section{Finiteness of the moments of the stationary solutions to arbitrary order}  \label{sec:9}

We prove in this section that the steady state $\Psi(k)$ defined by means of \eqref{eq:intPsiEbeta} has moments of arbitrary order if we assume that $\|A\|$ is sufficiently small.
 
We consider the stationary problem associated to \eqref{eq:CauchyAb}, namely 
\be\label{eq:s9CauchySt}
\big(A_{\beta}k\big) \cdot \partial_k\vphi+\vphi=\Gamma(\vphi).  
\ee 
We first compute the expected form for the moments $c_{\alpha}$ with $|\alpha|\leq M$ for any given $M$.  To this end we write 
\be\label{eq:s9ffi}
\vphi(k)=1+\sum_{2\leq \ell \leq M}Q_{\ell}(k)+O(|k|^p),\quad M<p \leq M+1,\quad Q_{\ell}\in H_{\ell}(\C)
\ee
where $H_{\ell}(\C)$ is the space of homogeneous polynomials in the variables $k_1,\dots, k_d$ with complex coefficients, namely
\be\label{eq:s9Hl}
H_{\ell}(\C)=\{h\ :\ h(k)=\sum_{|\alpha|=\ell} c_{\alpha}(k)^\alpha,\; c_{\alpha}\in\C\}\quad \ell=0,1,2,\dots
\ee
Notice that these spaces are not contained in $\Phi$ if $\ell \neq 1$.  
We remark that we can assume without loss of generality that the sum in \eqref{eq:s9ffi} starts at $|\alpha|=2$ since we can remove the first order moments multiplying  by a phase factor as in \eqref{eq:PssGen}.

Plugging formally \eqref{eq:s9ffi} into \eqref{eq:s9CauchySt} we get the following equation for the moments of order $M$
\be\label{eq:s9momeq}
Q_{\ell}(k)-\mathcal{L}Q_{\ell}(k)+ (A_{\beta}k)\cdot\partial_{k}Q_{\ell}(k)= \mathbb{K}_{\ell},\quad Q_{\ell}\in H_{\ell}(\C)
\ee
where $\mathcal{L}$ is as in \eqref{eq:linBolzF} and 
\begin{align}
& \mathbb{K}_{\ell}=\mathbb{K}_{\ell}\left[\{Q_j\}_{j=2}^{\ell-1}\right]
=\sum_{m+j=\ell,\; 2\leq m,j\leq M }\int_{S^{d-1}} dn\ g(\hat{k}\cdot n)Q_{j}(k_{+})Q_{m}(k_{-}),\quad \ell\geq 3,\nonumber \\& \mathbb{K}_2=0. \label{eq:s9opK}
\end{align}
We would denote as $\mathcal{L}_{\ell}$ the operator $\mathcal{L}$ restricted to the space $H_{\ell}(\C)$. Notice that $\mathcal{L}\big(H_{\ell}(\C)\big)\subset H_{\ell}(\C)$. Hence the operators $\mathcal{L}_{\ell}$ are just finite dimensional operators $\mathcal{L}_{\ell}: H_{\ell}(\C)\rightarrow H_{\ell}(\C)$. Moreover, we will use  the following norm in the spaces $H_{\ell}(\C)$:
\be
\label{eq:s9norm}
\|\vphi\|_{\ell}:=\sup_{k\in\R^d}\frac{|\vphi(k)|}{|k|^{\ell}}=\sup_{k\in\R^d,\ |k|=1} |\vphi(k)|.
\ee
%Note that this norm applied to functions from $H_{\ell}(\C)$  is equal to the maximal absolute value of  homogeneous polynomial on the unit sphere.
\medskip
We first prove an auxiliary property of the space $H_{\ell}(\C)$ that will be useful in the following. 
\begin{lemma}\label{lem:hompol}
 Suppose that $P\in H_{\ell}(\C)$ for any $\ell\geq 0,$ has the form $P(k)=\sum_{|\alpha|=\ell} c_{\alpha}(k)^\alpha,$ with $c_{\alpha}\in\C$. Then there exists a constant $C_{\ell}>0$ depending only on the number $\ell$ such that
\be\label{eq:bdpol}
\sup_{\alpha} \vert c_{\alpha}\vert  \leq C_{\ell} \| P\|_{\ell}. 
 \ee
\end{lemma}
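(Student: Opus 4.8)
The statement is a quantitative form of the equivalence of norms on the finite-dimensional space $H_\ell(\C)$ introduced in \eqref{eq:s9Hl}. Indeed, the map $P\mapsto(c_\alpha)_{|\alpha|=\ell}$ is a linear isomorphism of $H_\ell(\C)$ onto $\C^{D}$ with $D=\binom{\ell+d-1}{d-1}$, so $P\mapsto\sup_\alpha|c_\alpha|$ is a norm; on the other hand $\|\cdot\|_\ell$ in \eqref{eq:s9norm} is also a norm, since $\|P\|_\ell=0$ forces $P\equiv 0$ on $S^{d-1}$ and hence, by homogeneity, $P\equiv 0$. Any two norms on a finite-dimensional space being equivalent, \eqref{eq:bdpol} follows for some constant. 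The only point worth spelling out is that the constant can be chosen depending on $\ell$ alone (not on $d$); for this I would use polarization.

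First I would associate to $P\in H_\ell(\C)$ its polarization $\widetilde P\colon(\R^d)^\ell\to\C$, the unique symmetric $\ell$-linear form with $\widetilde P(k,\dots,k)=P(k)$. The classical polarization identity
\[
\widetilde P(x_1,\dots,x_\ell)=\frac{1}{2^\ell\,\ell!}\sum_{\varepsilon\in\{-1,1\}^\ell}\varepsilon_1\cdots\varepsilon_\ell\,P\Big(\sum_{j=1}^\ell\varepsilon_j x_j\Big),
\]
together with the homogeneity bound $|P(y)|\le|y|^\ell\,\|P\|_\ell$ and the estimate $\big|\sum_j\varepsilon_jx_j\big|\le\ell$ when $|x_j|\le 1$, gives
\[
\|\widetilde P\|:=\sup_{|x_1|=\dots=|x_\ell|=1}\big|\widetilde P(x_1,\dots,x_\ell)\big|\le\frac{\ell^\ell}{\ell!}\,\|P\|_\ell .
\]

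Next I would recover the coefficients from $\widetilde P$. Expanding $P(k)=\widetilde P(k,\dots,k)=\sum_{i_1,\dots,i_\ell=1}^d\widetilde P(e_{i_1},\dots,e_{i_\ell})\,k_{i_1}\cdots k_{i_\ell}$ and collecting, for a fixed $\alpha$ with $|\alpha|=\ell$, the $\ell!/\alpha!$ tuples $(i_1,\dots,i_\ell)$ that produce the monomial $k^\alpha$ (all of them giving the same value $\widetilde{P}_\alpha$ of $\widetilde P$ on the standard basis vectors $e_j$ taken with multiplicity $\alpha_j$, by symmetry), one finds $c_\alpha=\tfrac{\ell!}{\alpha!}\,\widetilde{P}_\alpha$. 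Since each $e_j$ has Euclidean norm $1$ we get $|\widetilde{P}_\alpha|\le\|\widetilde P\|$, hence
\[
|c_\alpha|\le\frac{\ell!}{\alpha!}\,\|\widetilde P\|\le\frac{\ell^\ell}{\alpha!}\,\|P\|_\ell\le\ell^\ell\,\|P\|_\ell,
\]
so \eqref{eq:bdpol} holds with $C_\ell=\ell^\ell$ (with the convention $0^0=1$; note $C_0=C_1=1$).

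There is no real obstacle here: the argument is elementary and the only mild care needed is in the combinatorics of the polarization identity and in checking that the polarization constant $\ell^\ell/\ell!$ is independent of $d$. Alternatively one could avoid polarization and estimate $c_\alpha=\frac{1}{\alpha!}\,\partial_k^\alpha P(0)$ by a Cauchy- or Markov-type inequality for polynomials on the unit ball (using that, by homogeneity, $\sup_{|k|\le 1}|P(k)|=\|P\|_\ell$), which again yields a constant depending only on the degree $\ell$; I prefer the polarization route as it makes the $d$-independence of the constant transparent.
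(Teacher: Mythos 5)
Your argument is correct, and it does more than the paper's own proof. The paper disposes of this lemma in two lines: both $P\mapsto\sup_{\alpha}\vert c_{\alpha}\vert$ and $\Vert\cdot\Vert_{\ell}$ are norms on the finite-dimensional space $H_{\ell}(\C)$, so the bound follows from the equivalence of all norms on a finite-dimensional space. That is exactly your opening paragraph, and it is all the paper needs, since $d$ is fixed throughout; but note that this soft argument only yields a constant depending on the space $H_{\ell}(\C)$, i.e.\ a priori on both $\ell$ and $d$, whereas the statement is phrased as ``depending only on the number $\ell$''. Your polarization argument is a genuinely different, quantitative route that settles this point: the identity $\widetilde P(x_1,\dots,x_\ell)=\frac{1}{2^{\ell}\ell!}\sum_{\varepsilon}\varepsilon_1\cdots\varepsilon_\ell P\bigl(\sum_j\varepsilon_j x_j\bigr)$ combined with $\vert P(y)\vert\leq \vert y\vert^{\ell}\Vert P\Vert_{\ell}$ gives $\Vert\widetilde P\Vert\leq \frac{\ell^{\ell}}{\ell!}\Vert P\Vert_{\ell}$, and the combinatorial identity $c_{\alpha}=\frac{\ell!}{\alpha!}\widetilde P_{\alpha}$ then yields the explicit, dimension-free constant $C_{\ell}=\ell^{\ell}$. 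What the paper's approach buys is brevity; what yours buys is an explicit constant and the literal $d$-independence claimed in the statement, at the cost of the polarization combinatorics (which you carry out correctly).
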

\begin{proof}
We observe that the left hand side of \eqref{eq:bdpol} is a norm in the finite dimensional space $H_{\ell}(\C)$.  
Then, since $\| \cdot \|_{\ell}$ is also a norm in $H_{\ell}(\C)$ the result follows  just using the equivalence of all the norms in any finite dimensional normed space. 
\end{proof}

We will show in the following that if $\|A\|$ is sufficiently small, equation \eqref{eq:s9momeq} can be solved inductively. 

Suppose that $Q_2(k)=-\frac 1 2 N:k\otimes k$ where $N$ is as in item $(i)$ of Theorem \ref{th:main1} and that there is a family of homogeneous polynomials $Q_{\ell} \in H_{\ell}(\C)$ for  $\ell= 3,\dots,M$ satisfying \eqref{eq:s9momeq}, \eqref{eq:s9opK}. Let be $p\in (M,M+1]$. 
We then define the following generalization of the space $\mathcal{F}_p(N)$ given in \eqref{def:FpB}:\be\label{def:FMpN}
\mathcal{F}_{p,M}(N):=\left\{\Psi \in \Phi\;\; \text{s.t.}\; \; \sup_{k\in\R^d} \frac{\Big\vert \Psi(k)-\Big(1+ \sum_{\ell=2}^{ M} Q_{\ell}(k) \Big)\Big \vert}{\vert k\vert^p} < \infty\right\}.
\ee
Notice that it is not ``a priori" clear if the space $\mathcal{F}_{p,M}(N)$ is nonempty even if the polynomials $Q_{\ell}$ are given since it is not obvious the existence of a characteristic function $\Psi$ such that the inequality in the definition \eqref{def:FMpN} holds. 
In the rest of this section we will prove the following result.  
\begin{theorem} \label{th:higmom}
Let be $M\in\N$, $M\geq 3$, $p\in(M,M+1]$ and $\|A\| \leq \ep_0$ with $\ep_0>0$  as in Theorem \ref{th:main1}.   
Let then be $\beta \in\R$ and $N \in Symm_{d\times d}(\R)$ as in item $(i)$ of  Theorem \ref{th:main1}.  Then there exists an $\ep_M\in (0,\ep_0)$ such that if $\|A\| \leq \ep_M$  the set $\mathcal{F}_{p,M}(N)$ defined in \eqref{def:FMpN} is nonempty. %it is a complete metric space. 
Moreover, there exists a unique $\Psi(\cdot)\in \mathcal{F}_{p,M}(N)$ that solves \eqref{eq:intPsiEbeta} with $E_\beta$ as in \eqref{eq:Ebeta}.
\end{theorem}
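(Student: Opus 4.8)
The plan is to argue by induction on $M$. The base case $M=2$, with $2<p\le4$, is Theorem \ref{th:main1} once one identifies $\mathcal{F}_p(N)$ with $\mathcal{F}_{p,2}(N)$; the inductive hypothesis at level $M-1$ states that the homogeneous polynomials $Q_2,\dots,Q_{M-1}$ solving \eqref{eq:s9momeq}--\eqref{eq:s9opK} are uniquely determined, and that the stationary solution $\Psi$ of Theorem \ref{th:main1} is, for every $p'\in(M-1,M]$, the unique element of $\mathcal{F}_{p',M-1}(N)$ solving \eqref{eq:intPsiEbeta} — in particular $|\Psi(k)-(1+\sum_{\ell=2}^{M-1}Q_\ell(k))|\le C|k|^M$. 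The first step of the inductive passage is to produce $Q_M$. Since $\mathcal{L}$ preserves the finite–dimensional spaces $H_\ell(\C)$, the operator $\mathcal{A}_\ell:=I-\mathcal{L}+(A_\beta k)\cdot\partial_k$ acts on $H_\ell(\C)$; for $A=0$, hence $\beta=0$ by Lemma \ref{lem:EPbeta0}, it becomes $I-\mathcal{L}$, and the classical computation of the spectrum of the linearized Maxwellian collision operator shows that $1$ belongs to the spectrum of $\mathcal{L}$ restricted to $H_\ell(\C)$ only for $\ell\in\{1,2\}$; thus $\mathcal{A}_\ell$ is invertible on $H_\ell(\C)$ for every $\ell\ge3$. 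By continuity of the spectrum of a finite–dimensional operator together with the bound $|\beta|\le C_0\|A\|$ of Lemma \ref{lem:EPbeta0}, there is $\ep_M'\in(0,\ep_0)$ such that $\mathcal{A}_\ell$ stays invertible for all $3\le\ell\le M$ as soon as $\|A\|\le\ep_M'$. One then sets $Q_M:=\mathcal{A}_M^{-1}\,\mathbb{K}_M[\{Q_j\}_{j=2}^{M-1}]$ with $\mathbb{K}_M$ as in \eqref{eq:s9opK}, so that $Q_M$ solves \eqref{eq:s9momeq} and the set $\mathcal{F}_{p,M}(N)$ of \eqref{def:FMpN} is well defined (a priori possibly empty) as a subset of $\Phi$.

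The second, and main, step is to show that $\Psi$ itself belongs to $\mathcal{F}_{p,M}(N)$, which settles at once non-emptiness and the existence part of the theorem. Put $R:=\Psi-(1+\sum_{\ell=2}^{M-1}Q_\ell)$, so $|R(k)|\le C|k|^M$ for $|k|\le1$ and $|R(k)|\le C(1+|k|^M)$ in general. Inserting $\Psi=\int_0^{\infty}d\tau\,E_\beta(\tau)\Gamma[\Psi]$, expanding $\Gamma[\Psi]$ around $1+\sum_{\ell=2}^{M-1}Q_\ell$, and using that $\mathcal{I}^{+}$ carries homogeneous polynomials of degree $m$ into homogeneous polynomials of degree $m$, that $\mathcal{I}^{+}(1,R)+\mathcal{I}^{+}(R,1)=\mathcal{L}R$, that the remaining cross terms and $\mathcal{I}^{+}(R,R)$ are $O(|k|^{M+1})$ near $k=0$, and that $\int_0^{\infty}d\tau\,E_\beta(\tau)F$ solves $(A_\beta k)\cdot\partial_k G+G=F$ (cf.~\eqref{eq:evGk}), together with the moment equations \eqref{eq:s9momeq} for $Q_2,\dots,Q_{M-1}$, one arrives at the identity $(I-\mathcal{K})(R-Q_M)=\rho$, where $\mathcal{K}:=\int_0^{\infty}d\tau\,E_\beta(\tau)\mathcal{L}(\cdot)$ and $|\rho(k)|\le C|k|^{M+1}$ for $|k|\le1$. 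Now $\mathcal{L}$ is monotone with $\mathcal{L}|k|^s=(1-\lambda(s))|k|^s$ (cf.~\eqref{eq:Lkp}) and $\int_0^{\infty}d\tau\,E_\beta(\tau)|k|^s\le(1-s\|A_\beta\|)^{-1}|k|^s$, so $\mathcal{K}$ is a strict contraction, with constant $\le\tfrac{1-\lambda(s)}{1-s\|A_\beta\|}<1$, on the space of continuous functions $h$ with $|h(k)|\le C|k|^s$ for $|k|\le1$ and $|h(k)|\le C(1+|k|^s)$ otherwise, for every $s\in[M,M+1]$, provided $\|A\|$ is small (here one uses $\lambda(s)>0$ for $s>2$); moreover $\mathcal{K}$ maps each of these spaces into itself. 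Since $R-Q_M$ lies in the $s=M$ space and $\rho$ in the smaller $s=M+1$ space, solving $(I-\mathcal{K})x=\rho$ by Neumann series shows $R-Q_M$ lies in the $s=M+1$ space, i.e.~$|\Psi(k)-(1+\sum_{\ell=2}^{M}Q_\ell(k))|\le C|k|^{M+1}\le C|k|^p$ for $|k|\le1$, while the bound for $|k|>1$ is immediate because $p>M$; hence $\Psi\in\mathcal{F}_{p,M}(N)$.

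For uniqueness, let $\Psi_1,\Psi_2\in\mathcal{F}_{p,M}(N)$ both solve \eqref{eq:intPsiEbeta}; writing $\mathcal{S}\Psi:=\int_0^{\infty}d\tau\,E_\beta(\tau)\Gamma[\Psi]$, they are fixed points of $\mathcal{S}$. The $\mathcal{L}$-Lipschitz property of $\mathcal{I}^{+}$ (Lemma \ref{lm:bdI}) and the contraction bound for $\mathcal{K}$ on the $s=p$ space give, exactly as in \eqref{eq:contrS}, $|\mathcal{S}\Psi_1-\mathcal{S}\Psi_2|(k)\le\theta\,d(\Psi_1,\Psi_2)\,|k|^p$ with $\theta=\tfrac{1-\lambda(p)}{1-p\|A_\beta\|}$ and $d$ the metric \eqref{def:metricFpB}; since $d(\Psi_1,\Psi_2)<\infty$, this forces $d(\Psi_1,\Psi_2)\le\theta\,d(\Psi_1,\Psi_2)$, hence $\Psi_1=\Psi_2$ as soon as $\theta<1$, i.e.~$p\|A_\beta\|<\lambda(p)$. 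Using $\|A_\beta\|\le(1+C_0)\|A\|$, all of the above smallness requirements are met by setting, say, $\ep_M:=\min\{\ep_M',\ \lambda(p)/(p(1+C_0)),\ (2(M+1)(1+C_0))^{-1}\}$; since $\lambda(p)/p\to0$ as $p\to\infty$, this forces $\ep_M\to0$ as $M\to\infty$, consistently with the statement. The unique element of $\mathcal{F}_{p,M}(N)$ solving \eqref{eq:intPsiEbeta} is thus $\Psi$, the stationary solution of Theorem \ref{th:main1}, which therefore has finite moments of order $M$; this closes the induction.

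The main difficulty lies in the second step, namely in justifying the expansion of $\Gamma[\Psi]$ rigorously near $k=0$, where $\hat k$ is not defined: there one relies decisively on the fact that $\mathcal{I}^{+}$ sends polynomials of a given degree to polynomials of the same degree (so the singular factor $g(\hat k\cdot n)$ disappears after integration over $n$) and on \eqref{eq:est} to control the non-polynomial contributions, which are multiplied by the high–order vanishing remainder $R$. A secondary point is the invertibility of $\mathcal{A}_M$ for $M\ge3$, which rests on the explicit spectrum of the linearized Maxwellian operator; this is classical but must be invoked, and one should also verify, using Lemma \ref{lem:hompol}, that the constants generated at the finitely many orders $\ell\le M$ do not conflict with the finitely many smallness conditions entering the definition of $\ep_M$.
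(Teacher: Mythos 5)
Your proposal is correct in substance, and for the decisive nonemptiness/existence step it takes a genuinely different route from the paper. The paper solves the moment hierarchy \eqref{eq:s9momeq}--\eqref{eq:s9opK} exactly as you do (Lemma \ref{lem:lem1s9}: invertibility of $I-\mathcal{L}_{\ell}+(A_{\beta}k)\cdot\partial_{k}$ on $H_{\ell}(\C)$ for $3\leq\ell\leq M$ and $\|A\|$ small), but it then proves that $\mathcal{F}_{p,M}(N)$ is nonempty by an explicit construction at the level of densities (Lemma \ref{lem:lem2s9}): a Gaussian perturbed by truncated Hermite polynomials as in \eqref{eq:probf}, whose coefficients $\xi_{\beta}$ are controlled through the estimate $\sup_{\ell}\|Q_{\ell}-\bar{Q}_{\ell}\|_{\ell}\leq C_M\|A\|$ so that the perturbed density stays nonnegative and realizes the prescribed moments; existence and uniqueness of $\Psi\in\mathcal{F}_{p,M}(N)$ are then obtained by rerunning the fixed point argument of Theorem \ref{th:main1} in this nonempty complete metric space. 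You instead bootstrap the regularity at $k=0$ of the already constructed fixed point: writing $(I-\mathcal{K})(R-Q_M)=\rho$ with $\mathcal{K}=\int_0^{\infty}d\tau\,E_{\beta}(\tau)\mathcal{L}$, and using monotonicity of $\mathcal{L}$, \eqref{eq:Lkp} and the contraction of $\mathcal{K}$ on weighted sup-norm spaces, you show that the $\Psi$ of Theorem \ref{th:main1} itself lies in $\mathcal{F}_{p,M}(N)$. This gives nonemptiness and existence in one stroke, avoids the Hermite construction entirely, and exhibits directly (rather than via the uniqueness remark following the paper's proof) that it is the same $\Psi$ which carries the order-$M$ moments; the price is more careful functional-analytic bookkeeping, whereas the paper's construction needs the nonnegativity of the perturbed Gaussian but keeps the fixed point argument verbatim. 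Your uniqueness step is the same contraction estimate as \eqref{eq:contrS}. Two small points should be patched: (i) with homogeneous weights the ``$s=M+1$'' space is \emph{not} contained in the ``$s=M$'' space (it allows more growth at infinity), so to identify the Neumann-series solution of $(I-\mathcal{K})x=\rho$ with $R-Q_M$ you should work in a common space, e.g. with weight $|k|^{M}+|k|^{M+1}$, on which $\mathcal{K}$ is still a strict contraction; since $|R-Q_M|\leq C|k|^{M}$ globally and $|\rho|\leq C|k|^{M+1}$ globally (at infinity because $|k|^{M}\leq|k|^{M+1}$ there), this is a one-line fix and then $|R-Q_M|\leq C|k|^{M+1}$ follows; (ii) you should record that the terms integrated against $E_{\beta}(\tau)$ grow at most like $|k|^{2(M-1)}$, so the time integrals converge under your smallness condition on $\|A_{\beta}\|$, and that the cross terms $\mathcal{I}^{+}(Q_j,Q_m)$ are genuinely homogeneous polynomials (the Maxwell-molecule closure property the paper also invokes when asserting $\mathbb{K}_{\ell}\in H_{\ell}(\C)$). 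With these adjustments your argument is complete.
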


In order to prove this theorem we need some auxiliary results.  
We first prove that  equation \eqref{eq:s9momeq} can be solved iteratively for $\| A\|$ small enough.  
\begin{lemma} \label{lem:lem1s9}
Let be $Q_{2} \in H_{2}(\C)$ such that 
 \be\label{eq:s9momeq2}
Q_{2}(k)-\mathcal{L}Q_{2}(k)+ (A_{\beta}k)\cdot\partial_{k}Q_{2}(k)=0.
\ee
Then, there exists an $ \tilde{\ep}_{M}=\tilde{\ep}_{M}(M,d)>0$ such that if $\|A\| \leq \tilde{\ep}_{M}$ 
there exists a unique family of homogeneous polynomials $\{Q_{\ell}(k)\}_{\ell=3}^{M}$ with $Q_{\ell} \in H_{\ell}(\C)$
for  $\ell= 3,\dots,M$ which satisfies \eqref{eq:s9momeq}, \eqref{eq:s9opK}.  Moreover, there exists a constant $C_M>0$ depending only on $d$ and $M$ such that 
\be \label{eq:bdQell}
\sup_{\ell=2,\dots, M} \|Q_{\ell} \| \leq C_M.
\ee 
\end{lemma}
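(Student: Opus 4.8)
The plan is to read \eqref{eq:s9momeq} as a triangular recursion in $\ell$. Because the constraint $2\le m,j$ in \eqref{eq:s9opK} forces $j,m\le \ell-2$, the right-hand side $\mathbb K_\ell=\mathbb K_\ell[\{Q_j\}_{j=2}^{\ell-1}]$ is a fixed element of $H_\ell(\C)$ once $Q_2,\dots,Q_{\ell-1}$ are known, so \eqref{eq:s9momeq} amounts to the linear equation $\mathcal A_\ell Q_\ell=\mathbb K_\ell$ on the finite-dimensional space $H_\ell(\C)$, where
\bes
\mathcal A_\ell:=\big(I-\mathcal L_\ell\big)+\mathcal D_\ell,\qquad \mathcal D_\ell Q:=(A_\beta k)\cdot\partial_k Q .
\ees
Here $\mathcal D_\ell$ clearly maps $H_\ell(\C)$ into itself, and $\mathcal L_\ell=\mathcal L|_{H_\ell(\C)}$ does too (recalled after \eqref{eq:s9opK}). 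For $\ell=2$ the operator $\mathcal A_2$ is \emph{not} invertible ($|k|^2$ lies in the kernel of $I-\mathcal L_2$); its kernel together with the choice of $\beta$ is exactly the eigenvalue problem \eqref{eq:EVpbMom}, which is why $Q_2$ must be supplied as data. The whole argument thus reduces to two things: invertibility of $\mathcal A_\ell$ for $3\le\ell\le M$ when $\|A\|$ is small, and propagation of an a priori bound on $\|Q_\ell\|_\ell$ along the recursion.

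The key step is a quantitative strict-contraction estimate for $\mathcal L_\ell$ in the homogeneous norm $\|\cdot\|_\ell$ of \eqref{eq:s9norm}. For $Q\in H_\ell(\C)$, using $|Q(k_\pm)|\le\|Q\|_\ell|k_\pm|^\ell$, $|k_\pm|=|k|\big(\tfrac{1\pm\hat k\cdot n}{2}\big)^{1/2}$ and \eqref{eq:kernel},
\bes
|\mathcal L_\ell Q(k)|\le\|Q\|_\ell\,|k|^\ell\int_{S^{d-1}}dn\ g(\hat k\cdot n)\,h_\ell(\hat k\cdot n),\qquad h_\ell(t):=\Big(\tfrac{1+t}{2}\Big)^{\ell/2}+\Big(\tfrac{1-t}{2}\Big)^{\ell/2}.
\ees
For $\ell\ge 3$ one has $h_\ell(t)\le 1$ on $[-1,1]$, with strict inequality for $t\in(-1,1)$, since $s^{\ell/2}\le s$ on $[0,1]$ with equality only at $s\in\{0,1\}$. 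Because $g$ is a measurable function, the poles $t=\pm1$ carry no $dn$-mass, so $\int_{S^{d-1}}g(\hat k\cdot n)h_\ell(\hat k\cdot n)\,dn<1$; by rotational invariance of the scalar product this integral does not depend on $\hat k$, and we call it $1-\rho_\ell$ with $\rho_\ell\in(0,1)$. Hence $\|\mathcal L_\ell Q\|_\ell\le(1-\rho_\ell)\|Q\|_\ell$, so $I-\mathcal L_\ell$ is invertible on $H_\ell(\C)$ with $\|(I-\mathcal L_\ell)^{-1}\|\le\rho_\ell^{-1}$. (This is precisely where $\ell\ge3$ enters: for $\ell=2$ one has $h_2\equiv1$, which is the source of the degenerate eigenvalue $\beta$ of Section \ref{sec:7}.) For the perturbation, differentiation $H_\ell(\C)\to H_{\ell-1}(\C)^d$ is bounded by equivalence of norms (cf.~Lemma \ref{lem:hompol}), so $|\partial_k Q(k)|\le c_\ell\|Q\|_\ell|k|^{\ell-1}$ and, using $|\beta|\le C_0\|A\|$ from Lemma \ref{lem:EPbeta0}(ii),
\bes
\|\mathcal D_\ell Q\|_\ell\le c_\ell\|A_\beta\|\,\|Q\|_\ell\le c_\ell(1+C_0)\|A\|\,\|Q\|_\ell .
\ees
Choosing $\tilde\ep_M=\tilde\ep_M(M,d)>0$ so small that $c_\ell(1+C_0)\tilde\ep_M<\rho_\ell/2$ for all $\ell\in\{3,\dots,M\}$ (finitely many conditions), a Neumann series around $(I-\mathcal L_\ell)^{-1}$ gives that $\mathcal A_\ell$ is invertible with $\|\mathcal A_\ell^{-1}\|\le 2\rho_\ell^{-1}$.

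With invertibility secured, the construction is a straightforward induction. For $\ell=2$ the given $Q_2$ equals, up to the normalisation of $N$, the polynomial $-\tfrac12 N:k\otimes k$ with $\|N-I\|\le C_0\|A\|$, so $\|Q_2\|_2\le C$ for some $C=C(d)$. Assuming $Q_2,\dots,Q_{\ell-1}$ have been produced with $\|Q_j\|_j\le C_j$, bound each term of \eqref{eq:s9opK} by $|Q_j(k_+)Q_m(k_-)|\le\|Q_j\|_j\|Q_m\|_m|k_+|^j|k_-|^m\le\|Q_j\|_j\|Q_m\|_m|k|^\ell$ and use \eqref{eq:kernel} to get $\|\mathbb K_\ell\|_\ell\le\sum_{m+j=\ell}C_jC_m=:K_\ell$; then $Q_\ell:=\mathcal A_\ell^{-1}\mathbb K_\ell$ is the unique element of $H_\ell(\C)$ solving \eqref{eq:s9momeq}, with $\|Q_\ell\|_\ell\le 2\rho_\ell^{-1}K_\ell=:C_\ell$, depending only on $d,\ell$ and $C_2,\dots,C_{\ell-1}$, hence ultimately only on $d$ and $M$. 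Uniqueness of the whole family is immediate since $\mathbb K_\ell$ depends only on $Q_2,\dots,Q_{\ell-1}$ and $\mathcal A_\ell$ is invertible. Setting $C_M:=\max_{2\le\ell\le M}C_\ell$ proves \eqref{eq:bdQell}. I expect the only real obstacle to be the quantitative contraction bound $\|\mathcal L_\ell\|\le 1-\rho_\ell$ together with the need to keep $\rho_\ell$ (and the constants $c_\ell$) uniformly controlled over $\ell\in\{3,\dots,M\}$, so that a single threshold $\tilde\ep_M$ suffices; once this is in place the rest is the same bookkeeping with the $\mathcal L$-bound already used in Sections \ref{sec:3}--\ref{sec:5}.
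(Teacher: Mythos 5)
Your proposal is correct and takes essentially the same route as the paper: a triangular recursion in $\ell$, invertibility of $I-\mathcal{L}_{\ell}+(A_{\beta}k)\cdot\partial_{k}$ on $H_{\ell}(\C)$ for $3\leq \ell\leq M$ by a Neumann series once $\|A\|$ (and hence $\|A_\beta\|$, via $|\beta|\leq C_0\|A\|$) is small, followed by the inductive bound on $\|\mathbb{K}_{\ell}\|_{\ell}$. The only cosmetic difference is how the unperturbed inverse is produced: your direct contraction bound $\|\mathcal{L}_{\ell}\|\leq 1-\rho_{\ell}$ (note your $\rho_{\ell}$ is exactly $\lambda(\ell)$ from \eqref{eq:lambdap}) replaces the paper's semigroup representation of $(I-\mathcal{L}_{\ell})^{-1}$ as $\int_0^{\infty} e^{t(\mathcal{L}_{\ell}-I)}\,dt$ with norm $\leq 1/\lambda(\ell)$, both resting on $\lambda(\ell)>0$ for $\ell\geq 3$.
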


\begin{proof}
We solve iteratively the system of equations \eqref{eq:s9momeq}. In order to do this we will prove that the operators $(I- \mathcal{L}_{\ell}+(A_{\beta}k)\cdot\partial_{k})$ are invertible in $H_{\ell}(\C)$ for $\ell=3, \dots, M$, if $\|A\|$ is sufficiently small. We first observe that the operators $(I-\mathcal{L}_{\ell})$ restricted to the space $H_{\ell}(\C)$ are invertible for any $\ell=3,\dots, M$.    
In order to prove this we consider the semigroups $e^{-t \mathcal{L}_{\ell}}:H_{\ell}(\C)\rightarrow H_{\ell}(\C)$. We have
\bes
\vert| e^{-t \mathcal{L}_{\ell}} \vphi \vert|_{\ell}= \sup_{k\in\R^d}\frac{\vert e^{-t \mathcal{L}_{\ell}} \vphi(k)\vert}{|k|^{\ell}}\leq \|\vphi\|_{\ell}  \sup_{k\in\R^d}\frac{\vert e^{-t \mathcal{L}_{\ell}} \vert \vert k\vert^{\ell}}{|k|^{\ell}}\leq  \|\vphi\|_{\ell} \; e^{(\lambda(\ell)-1)t}
\ees
where in the last inequality we used \eqref{eq:Lkp}. Hence the operator norm is estimated by
\be \label{eq:s9opnorm}
\vert| e^{-t \mathcal{L}_{\ell}}\vert|\leq e^{(\lambda(\ell)-1)t} \quad\text{for}\;\;  t\geq 0 %actually is t\in\R
\ee
Using then that $\lambda(\ell)<0$ for $\ell\geq 3$ we obtain that $(\mathcal{L}_{\ell}-I)^{-1}$ exists in $H_{\ell}(\C)$ for any $\ell=3,\dots, M$ and it is given by the representation formula
\bes
(\mathcal{L}_{\ell}-I)^{-1}=\int_{0}^{\infty} dt \ e^{t (I-\mathcal{L}_{\ell})},\quad \ell=3,\dots, M.
\ees
Moreover, the inequality \eqref{eq:s9opnorm} implies
\be\label{eq:s9inv1}
\vert (\mathcal{L}_{\ell}-I)^{-1} \vert \leq \frac{1}{\vert \lambda(\ell)\vert },\quad \ell=3,\dots, M.
\ee
Notice that $(A_{\beta}k)\cdot\partial_{k}$ is an operator from $H_{\ell}(\C)$ to itself whose norm can be estimated by $ C_{\ell} \|A\|$. Thus we can use the Neumann series to show that $(I- \mathcal{L}_{\ell}+(A_{\beta}k)\cdot\partial_{k})$ is invertible in $H_{\ell}(\C)$ for  $3\leq \ell\leq M$ if $\|A\|\leq \tilde{\ep}_M$ where $\tilde{\ep}_M=\tilde{\ep}_M(d,M)$. Moreover, choosing $\tilde{\ep}_M$ sufficiently small we obtain 
\be \label{eq:s9inv2}
\vert|(I- \mathcal{L}_{\ell}+(A_{\beta}k)\cdot\partial_{k})^{-1}\vert|\leq \frac 2 {\vert \lambda(\ell)\vert },\quad \ell=3,\dots, M.
\ee
We can now solve iteratively equations \eqref{eq:s9momeq}, \eqref
{eq:s9opK}. We consider the case $\ell\geq 3$.  
We first notice that $\mathbb{K}_{\ell}\in H_{\ell}(\C)$. Then, using the invertibility of $(I- \mathcal{L}_{\ell}+(A_{\beta}k)\cdot\partial_{k})$ in $H_{\ell}(\C)$ we can write 
\be \label{eq:s9momeql}
Q_{\ell}=(I- \mathcal{L}_{\ell}+(A_{\beta}k)\cdot\partial_{k})^{-1} \left[\mathbb{K}_{\ell}\right],\quad \ell=3,\dots, M,
\ee 
whence the Lemma follows.
\end{proof}

We now prove that we can associate the moments $Q_{\ell}$ for $\ell=3,\dots, M$ to a characteristic function $\vphi\in\Phi$.  More precisely we have the following.

\begin{lemma}\label{lem:lem2s9}
Let be  $\|A\| \leq \ep_0$ with $\ep_0>0$  as in Theorem \ref{th:main1}. 
Suppose that $Q_2(k)=-\frac 1 2 N : k\otimes k$,  with $N \in Symm_{d\times d}(\R)$ as in item $(i)$ of  Theorem \ref{th:main1}. Let $\tilde{\ep}_M$ be as in Lemma \ref{lem:lem1s9}, $\|A\|\leq \tilde{\ep}_M$ and $\{Q_{\ell}(k)\}_{\ell=3}^{M}$, $Q_{\ell}\in H_{\ell}(\C)$ be the unique family of polynomials defined by means of \eqref{eq:s9momeq}, \eqref{eq:s9opK}. Then there exists ${\ep}_M\in (0,\tilde{\ep}_M)$ depending on $d, M$ such that if $\| A\| \leq \ep_M$ there exists $\varphi\in \Phi$ such that \eqref{eq:s9ffi} holds.
\end{lemma}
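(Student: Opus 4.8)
The statement asks only for the \emph{existence} of a characteristic function $\vphi\in\Phi$ whose Taylor expansion at $k=0$ agrees with $1+\sum_{\ell=2}^{M}Q_\ell$ up to order $M$, with an $O(|k|^p)$ remainder. Writing each $Q_\ell(k)=\sum_{|\alpha|=\ell}\tfrac{(-i)^{|\alpha|}}{\alpha!}m_\alpha k^\alpha$, this is equivalent to producing a probability measure $f\in\mathcal{P}_{+}(\R^d)$ with $\int_{\R^d}f(v)|v|^p\,dv<\infty$ and with the prescribed moments $\int_{\R^d}f(v)v^\alpha\,dv=m_\alpha$ for all $|\alpha|\le M$; then $\vphi:=\hat f\in\Phi$ works, the remainder bound being the Taylor estimate of Lemma \ref{lem:estfi1fi2} together with the finiteness of all moments of $f$. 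Note first that only even orders really occur: $\mathbb{K}_2=0$ and $\mathbb{K}_3=0$ (the sum in \eqref{eq:s9opK} is empty for $\ell=3$), and inductively $\mathbb{K}_\ell=0$ whenever $\ell$ is odd (each factor in a term of $\mathbb{K}_\ell$ then has odd index $\ge 3$, hence vanishes), so $Q_\ell=0$ and $m_\alpha=0$ for all odd $|\alpha|$; in particular the first moments vanish, consistently with the normalization made before \eqref{eq:s9momeq}.

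The key step is to show that the target moments $m_\alpha=m_\alpha(A)$ are a small perturbation of the moments of the standard Gaussian. When $A=0$ we have $\beta=0$ and $N=I$, and the Gaussian characteristic function $e^{-|k|^2/2}$ is a steady state of $\vphi=\Gamma[\vphi]$, since $|k_{+}|^2+|k_{-}|^2=|k|^2$ and $\int_{S^{d-1}}g(\hat k\cdot n)\,dn=1$. Expanding the identity $e^{-|k|^2/2}=\Gamma[e^{-|k|^2/2}]$ in homogeneous polynomials and using $\mathcal{L}|k|^2=|k|^2$ (because $\lambda(2)=0$, cf.~\eqref{eq:Lkp} and \eqref{eq:linBolzF}), one sees that the degree-$\ell$ Taylor coefficients $R_\ell^{(0)}$ of $e^{-|k|^2/2}$ solve exactly the recursion \eqref{eq:s9momeq}--\eqref{eq:s9opK} with $A=0$; by the uniqueness part of Lemma \ref{lem:lem1s9} the polynomials $Q_\ell$ at $A=0$ coincide with the $R_\ell^{(0)}$. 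Since $\beta(A)$ and $N(A)$ are Lipschitz in $A$ (Lemma \ref{lem:EPbeta0}), and since the inverse operators $(I-\mathcal{L}_\ell+(A_\beta k)\cdot\partial_k)^{-1}$ (represented by their Neumann series) and the nonlinearities $\mathbb{K}_\ell$ in \eqref{eq:s9momeql} depend Lipschitz-continuously on $A$, an induction on $\ell$ gives $\|Q_\ell-R_\ell^{(0)}\|\le C_M\|A\|$ for $\ell=2,\dots,M$; equivalently $|m_\alpha(A)-m_\alpha^{G}|\le C_M\|A\|$ for all $|\alpha|\le M$, where $m_\alpha^{G}$ are the moments of the standard Gaussian density $G$.

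It remains to realize the $m_\alpha(A)$ as the moments of an explicit probability density. A direct choice such as $\exp\!\bigl(1+\sum_\ell Q_\ell\bigr)$ or a Hermite-type correction $G(v)\bigl(1+\sum_\alpha c_\alpha v^\alpha\bigr)$ of the Gaussian fails because positivity is lost for large $|v|$; overcoming this is the main obstacle, and the remedy (unlike the case $M=2$, where the genuine Gaussian $\Psi_0$ of Lemma \ref{lem:StaSol} already works) is to truncate. Consider the finite-dimensional family
\[
f_{\vec c,R}(v)=\kappa(\vec c,R)\,\mathbb{1}_{\{|v|\le R\}}\,G(v)\Bigl(1+\sum_{1\le|\alpha|\le M} c_\alpha v^\alpha\Bigr),
\]
with $R$ large, $\kappa$ normalizing the total mass to $1$, and $\|\vec c\|$ small enough (depending on $R$) to keep the bracket positive on $\{|v|\le R\}$, so that $f_{\vec c,R}\in\mathcal{P}_{+}(\R^d)$. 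The moment map $\vec c\mapsto\bigl(m_\beta[f_{\vec c,R}]\bigr)_{1\le|\beta|\le M}$ has, at $\vec c=0$, a Jacobian which converges as $R\to\infty$ to the covariance matrix of the vector of monomials $(v^\alpha)_{1\le|\alpha|\le M}$ under $G$; this matrix is positive definite because $1$ and the monomials $v^\alpha$, $1\le|\alpha|\le M$, are linearly independent. Fixing $R$ large enough that this Jacobian is invertible and that $\bigl(m_\beta[f_{0,R}]\bigr)$ is close to $\bigl(m_\beta^{G}\bigr)$, the implicit function theorem yields, for $\|A\|$ small, a (small) solution $\vec c=\vec c(A)$ with $m_\beta[f_{\vec c(A),R}]=m_\beta(A)$ for all $1\le|\beta|\le M$; since $m_0=1$ by the normalization, $f:=f_{\vec c(A),R}$ has all the prescribed moments up to order $M$, and being compactly supported it has finite moments of every order. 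Hence $\vphi:=\hat f\in\Phi$ satisfies \eqref{eq:s9ffi}. Choosing $\ep_M\in(0,\tilde\ep_M)$ small enough for the continuity estimate of the second paragraph and for the implicit function theorem to apply then completes the proof.
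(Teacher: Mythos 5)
Your proposal is correct, and its first half is essentially the paper's argument: you compare the prescribed polynomials $Q_\ell(A)$ with the Taylor coefficients of a Gaussian characteristic function (the paper's $\bar Q_\ell$ in \eqref{eq:defQbar}), and obtain $\sup_\ell\|Q_\ell-\bar Q_\ell\|_\ell\le C_M\|A\|$ by perturbing the recursion \eqref{eq:s9momeq}--\eqref{eq:s9opK} around $A=0$, exactly as in \eqref{eq:bdQ2diff}--\eqref{eq:diffQell}; your extra remark that all odd-order $Q_\ell$ vanish is correct (the convolution sums \eqref{eq:s9opK} are empty or involve a vanishing factor) and slightly simplifies matters, though it is not needed. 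Where you genuinely diverge is the realization step. The paper keeps the full Gaussian density and perturbs it by \emph{truncated Hermite polynomials}, $f=f_g\bigl(1+\sum_\beta\xi_\beta\tilde H_{\beta,R}\bigr)$ as in \eqref{eq:probf}, so that matching the moments reduces to a \emph{linear} system \eqref{def:xibeta} whose matrix \eqref{eq:s9matrix} tends to the identity as $R\to\infty$; positivity then follows because the truncated Hermite corrections are bounded and the $\xi_\beta$ are $O(\|A\|)$. You instead restrict the Gaussian to the ball $\{|v|\le R\}$, correct by plain monomials, renormalize the mass, and invoke the inverse/implicit function theorem, with the Jacobian at $\vec c=0$ converging to the Gaussian covariance (Gram) matrix of the monomials, which is positive definite. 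Both routes work; the paper's linear-algebra formulation is slightly cleaner on quantifiers, whereas your nonlinear normalization $\kappa(\vec c,R)$ forces an IFT and a mild uniformity check that you gloss over: the IFT neighbourhood radius around $\bigl(m_\beta[f_{0,R}]\bigr)$ depends on $R$, so you should note that the Jacobians are uniformly invertible and the second derivatives of the moment map are uniformly bounded for large $R$ (all integrals being dominated by full Gaussian moments), so that one can first fix $R$ with $\bigl|m_\beta[f_{0,R}]-m^G_\beta\bigr|$ below a uniform IFT radius and then take $\|A\|\le\ep_M$ small. With that sentence added, and with the Taylor-remainder bound at the end stated via the estimate used in the proof of Lemma \ref{lem:estfi1fi2} (your $f$ is compactly supported, so all moments are finite and the global $O(|k|^p)$ bound for $M<p\le M+1$ follows), your argument is a complete and valid alternative to the paper's proof of Lemma \ref{lem:lem2s9}.
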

\begin{proof} 
We will obtain $\varphi\in \Phi$  as a perturbation of a characteristic function associated to a Gaussian probability measure. More precisely, 
%To this end we first notice that Lemma \ref{lem:EPbeta0} implies that $Q_2(k)=-\frac 1 2 B : k\otimes k$, $B\geq c_0I$ with $c_0>0$.
we define $\vphi_g(k)=e^{-\vert k\vert^2}$. We have that $\vphi_g \in\Phi$ since it is the characteristic function associated to a Gaussian probability measure $f_g\in\mathcal{M}_{+}(\R^d)$ (actually a Maxwellian distribution). 

The higher order moments associated to $f_{g}$ can be computed in terms of the homogeneous polynomials $\bar{Q}_{\ell}(k)$ which are obtained by means of the following polynomial identity
\be\label{eq:defQbar}
 \bar{Q}_{2 j}(k)= (-1)^{j} \frac{\vert k\vert ^{2 j}}{j !},\quad  \bar{Q}_{2j+1}(k)=0 \quad \text{for}\quad j\geq 1.
\ee  %the odd terms are zero because it is a gaussian%
Notice that the polynomials $\bar{Q}_{\ell}(k)$ uniquely determine the moments (cf.~\eqref{eq:series})
\be\label{def:mombarm}
\bar{m}_{\alpha}=\int_{\R^d} dv \ v^{\alpha}  f_{g}(v).
\ee
We remark for further reference that for any probability distribution $f\in\mathcal{P}_{+}(\R^d)$ (not necessarily Gaussian) having moments $m_{\alpha}$ we have
\be\label{eq:s9Qm}
Q_{\ell}(k)=(-i)^{\ell} \sum_{|\alpha|=\ell} \frac{m_{\alpha} k^{\alpha}}{\alpha!},\quad \ell=0,1,2,\dots
\ee
Moreover, the polynomials $Q_{\ell}(k)$ determine uniquely the moments $m_{\alpha}$.  

We now construct the family of Hermite polynomials $\{H_{\alpha}(v)\}_{\alpha\in\N^d}$ associated to the Gaussian probability measure $f_g$. These polynomials satisfy the orthogonality condition (see for instance \cite{Bog})
\bes
\int_{\R^d_{+}} dv \ H_{\alpha}(v)H_{\beta}(v)  f_{g}(v)=\delta_{\alpha,\beta},\qquad \alpha, \beta\in\N^d.
\ees
Moreover, the polynomials $H_{\alpha}(v)$ have degree $\vert\alpha\vert$. 
Furthermore, the polynomials $H_{\alpha}(v)$ have the form 
\be \label{eq:formHalpha}
H_{\alpha}(v)=\sum_{|\beta|\leq |\alpha|}c_{\alpha,\beta}v^{\beta},\, \quad c_{\alpha,\beta}\in\R.
\ee
The values of the coefficients $c_{\alpha,\beta}$ are well-known but we will not need the explicit form in the rest of the argument. 

Notice that if $f\in \mathcal{P}(\R^d)$ has moments $m_{\alpha}$ with $|\alpha|\leq M$ we have 
\be\label{def:Palpha}
 \int_{\R^d}dv \ f(v)H_{\alpha}(v)=\sum_{|\beta|\leq |\alpha|}c_{\alpha,\beta}m_{\beta}:=P_{\alpha}(\{m_{\beta}\}_{|\beta|\leq |\alpha|}),\,\quad |\alpha|\leq M.
 \ee 
 We assume in all the following that $m_0=1$ and $m_{\alpha}=0$ for $|\alpha|=1$. 
We define the truncated polynomials $\{\tilde{H}_{\alpha,R}(v)\}$ as
\be
\tilde{H}_{\alpha,R}(v)= H_{\alpha}(v) \quad \text{for}\; |v|\leq R \quad \text{and}\quad \tilde{H}_{\alpha,R}(v)=0 \quad  \text{for}\; |v|> R,
\ee
where $R>0$ is chosen sufficiently large so that the matrix 
\be\label{eq:s9matrix}
\left( J_{\alpha,\beta;R} \right)_{0\leq |\alpha|,|\beta|\leq M}:=\left(\int_{\R^d_{+}} dv \ H_{\alpha}(v)\tilde{H}_{\beta,R}(v)  f_{g}(v)\right)_{0\leq |\alpha|,|\beta|\leq M}
\ee
is invertible.   This is possible due to the fact that $ J_{\alpha,\beta;R}\rightarrow \delta_{\alpha,\beta}$ as $R\to \infty$.
 
We now look for a probability density $f$ with the form
\be\label{eq:probf}
f(v)= f_{g}(v) \left(1+ \sum_{ 0\leq |\beta| \leq M} \xi_{\beta }\tilde{H}_{\beta,R}(v)\right),\quad \xi_{\beta }\in\R
\ee 

Given the polynomials $Q_{\ell}\in H_{\ell}(\C)$ satisfying \eqref{eq:s9momeq}, \eqref{eq:s9opK} for $\ell=3,\dots, M$ we choose $m_{\alpha}$ as in \eqref{eq:s9Qm}. Then, if the probability distribution $f$ exists, it would satisfy \eqref{def:Palpha}.  Assuming that $f$ has the form \eqref{eq:probf}  it would follow that
\bes
  \sum_{ 0\leq |\beta| \leq M} \xi_{\beta }J_{\alpha,\beta;R} =P_{\alpha}(\{m_{\beta}\}_{|\beta|\leq |\alpha|})-\int_{\R^d}dv \ f_g(v)H_{\alpha}(v) 
\ees
whence, thanks to the definition of $\bar{m}_{\alpha}$ (cf.~\eqref{def:mombarm}) and to \eqref{def:Palpha}, we obtain
\be \label{def:xibeta}
\sum_{ 0\leq |\beta| \leq M} \xi_{\beta }J_{\alpha,\beta;R}  =P_{\alpha}(\{m_{\beta}\}_{|\beta|\leq |\alpha|})- P_{\alpha}(\{\bar{m}_{\beta}\}_{|\beta|\leq |\alpha|}).
\ee
Using the fact that the matrix $\left( J_{\alpha,\beta;R} \right)_{0\leq |\alpha|,|\beta|\leq M}$ is invertible it follows that 
\bes 
\sup_{0\leq |\beta| \leq M} \vert  \xi_{\beta }\vert \leq C_{M}\sup_{0\leq |\beta| \leq M} \vert m_{\beta}-\bar{m}_{\beta}.\vert 
\ees
and using Lemma \ref{lem:hompol} we have
\bes 
\sup_{0\leq |\beta| \leq M} \vert m_{\beta}-\bar{m}_{\beta}\vert \leq C_{M} \sup_{\ell=2,\dots M}\| Q_{\ell}-\bar{Q}_{\ell} \|_{\ell}. 
\ees
whence 
\be \label{eq:bdxibeta} 
\sup_{0\leq |\beta| \leq M} \vert  \xi_{\beta }\vert \leq C_{M} \sup_{\ell=2,\dots M}\| Q_{\ell}-\bar{Q}_{\ell} \|_{\ell}.
\ee
We remark that when $A=0$ the solutions to \eqref{eq:s9CauchySt} are Gaussian due to Proposition \ref{prop:mom}. In particular one solution of  \eqref{eq:s9CauchySt} is given by 
$$\varphi(k)= e^{-\vert k \vert ^2}=\sum_{\ell=0}^{\infty }\bar{Q}_{\ell}(k)$$
with $\bar{Q}_{\ell}(k)$ given as in \eqref{eq:defQbar}. Plugging the Taylor expansion into \eqref{eq:s9CauchySt} as in the derivation of \eqref{eq:s9momeq} we obtain that the polynomials $\bar{Q}_{\ell}$ satisfy 
\be\label{eq:evbarQ}
\bar{Q}_{\ell}(k)-\mathcal{L}_{\ell}\bar{Q}_{\ell}(k)= \bar{\mathbb{K}}_{\ell}, \;\quad \; \bar{\mathbb{K}}_{\ell}=\mathbb{K}_{\ell}\left[\{\bar{Q}_{j}(k)\}_{j=1}^{\ell-1}\right],\quad \ell \geq 3
\ee
where $\mathbb{K}_{\ell}\big[\cdot\big]$ is as in \eqref{eq:s9opK}. 

We now look at the difference $\bar{Q}_{\ell}(k)- Q_{\ell}(k)$. We obtain
\be \label{eq:diffQell}
\left(\bar{Q}_{\ell}(k)- Q_{\ell}(k)\right) -\mathcal{L}_{\ell}\left(\bar{Q}_{\ell}(k)- Q_{\ell}(k)\right)= \bar{\mathbb{K}}_{\ell}-\mathbb{K}_{\ell}+ (A_{\beta} k)\cdot \partial_{k} Q_{\ell}(k),\qquad \ell \geq 3. 
\ee
Using our choice of $Q_2(k)$ in the lemma  as well as item $(ii)$ in Lemma \ref{lem:EPbeta0} 
we get 
\be \label{eq:bdQ2diff}
\| \bar{Q}_{2}(k)- Q_{2}(k)\|_{2} \leq C\| A\| .
\ee 
Using \eqref{eq:s9opK} and \eqref{eq:bdQell}, we obtain 
\bes 
\|\bar{\mathbb{K}}_{\ell}-\mathbb{K}_{\ell}\|_{\ell} \leq \tilde{C}_{M} \sup_{j=2,\dots,\ell-1} \| \bar{Q}_{j}(k)- Q_{j}(k)\|_{j}. 
\ees 
Thanks to the invertibility of the operator $(I-\mathcal{L}_{\ell})$ for $\ell\geq 3$ as well as $\vert|(A_{\beta}k)\cdot\partial_{k})^{-1}\vert|\leq C \vert|A_{\beta}\vert|$, where $\|\cdot\|$ is the operator norm from $H_{\ell}(\C)$ to $H_{\ell}(\C)$, and applying again estimate \eqref{eq:bdQell} we arrive at
\be
\| \bar{Q}_{\ell}(k)- Q_{\ell}(k)\|_{\ell} \leq C_{M} \left(\| A\| + \sup_{j=2,\dots,\ell-1} \| \bar{Q}_{j}(k)- Q_{j}(k)\|_{j}\right),\quad \ell \geq 3
\ee
Iterating this formula, using also \eqref{eq:bdQ2diff} as first step it follows that 
\bes 
\sup_{\ell=2,\dots,M}  \| \bar{Q}_{\ell}(k)- Q_{\ell}(k)\|_{\ell}\leq C_{M} \| A\|.
\ees 
Combining this estimate with \eqref{eq:bdxibeta} we obtain that, defining the coefficients $ \xi_{\beta }$ as in \eqref{def:xibeta}, they satisfy 
\bes
\sup_{0\leq |\beta| \leq M} \vert  \xi_{\beta }\vert \leq C_{M} \| A\|.
\ees 
Therefore if $\|A\|$ is sufficiently small we obtain that $ \left(1+ \sum_{ 0\leq |\beta| \leq M} \xi_{\beta }\tilde{H}_{\beta,R}(v)\right)$ in \eqref{eq:probf} is nonnegative whence $f$ defined by means of \eqref{eq:probf}  is a probability density satisfying \eqref{def:Palpha}. The fact that $f$ is a probability density follows since $m_0=1$. Using the fact the Hermite polynomials with degree $|\alpha|\leq M$ are a basis of the space of polynomials with degree $|\alpha|\leq M$ it then follows, from the definition of the functions $P_{\alpha}$ in \eqref{def:Palpha}, that the moments associated to the probability density $f$ are given by $m_{\alpha}$.  Hence $\vphi\in\Phi$.

\end{proof}

\begin{proofof}[Proof of Theorem \ref{th:higmom}]
The fact that the set $\mathcal{F}_{p,M}(N)$ is nonempty  follows from Lemma \ref{lem:lem2s9}.  
The existence and uniqueness of the stationary solution $\Psi(\cdot)\in \mathcal{F}_{p,M}(N)$ can be proved by means of a fixed point argument in $\mathcal{F}_{p,M}(N)$ as in the proof of Theorem \ref{th:main1}. 
Notice that the equations for the higher order moments $Q_{\ell}(k)$ (cf. \eqref{eq:s9momeq}-\eqref{eq:s9opK}) have to be obtained using the corresponding mild formulation for the equation, in terms of the semigroup $E_\beta(t)$, as it was made  in the proof of Theorem \ref{th:main1} for the second order moments $Q_2(k)$. Here we omit the details for the sake of brevity.
\end{proofof}

\begin{remark}
We observe that Theorem \ref{th:higmom} combined with the uniqueness result in Theorem  \ref{th:main1} imply that the solution to \eqref{eq:s9CauchySt} has finite moments of any order if $\|A\|$ is sufficiently small. In our proof we use strong smallness conditions to obtain finite higher order moments. It would be interesting to clarify if we can have infinitely many finite moments if $\|A\|$ is small. This would require more refined estimates than the ones in this paper.
\end{remark}

\section{Reformulation of the results in terms of the probability distribution $f$} \label{sec:10}

In this section we reformulate the two main results of the paper, i.e. Theorem \ref{th:main1} and Theorem \ref{th:main2} in terms of the original measure $f\in\mathcal{P_{+}}(\R^d)$.
\smallskip

We first define the function $f_{st}(\cdot) \in \mathcal{P}_{+}(\R^d)$ by means of
\be\label{eq:deffst}
f_{st}(v)=\frac  1 {(2\pi)^{d}} \int_{\R^d} dk \  e^{i k\cdot v } \Psi(k)
\ee
where $\Psi\in \mathcal{F}_p(N)$ is the solution to \eqref{eq:intPsiEbeta} obtained in Theorem \ref{th:main1}.  Notice that formula \eqref{eq:deffst} has to be understood as an identity in the sense of  distributions, i.e. acting on a  Schwartz test function (cf. \cite{RS2}). 

Theorem \ref{th:main1} implies the existence of a particular class of self-similar solutions to \eqref{eq:GenBolt}, \eqref{eq:CollBolt}. More precisely,
\begin{theorem}\label{th:exssf}
Let $2<p\leq 4$. There exists $\ep_0>0$ depending on $p$ and $d$ such that if $\|A\|\leq \ep_0$ the following formula provides an explicit self-similar solution of \eqref{eq:GenBolt}, \eqref{eq:CollBolt} :
\be
f(v,t)= \big( e^{-\beta t} \big)^d f_{st}\left(\frac{v- e^{-tA^T} U}{e^{\beta t}}\right), \quad U\in\R^d,
\ee
such that $\int_{\R^d} dv \ f_{st}(v)|v|^p<\infty$.  Here $f_{st}$ is given as in \eqref{eq:deffst}. Moreover, for every $M\in\N$ there exists an $\ep_M\in (0,\ep_0)$ sufficiently small such that if $\| A\|\leq \ep_M $  then $\int_{\R^d} dv \ f_{st}(v)|v|^M<\infty$.
\end{theorem}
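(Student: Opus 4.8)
The plan is to transport the stationary solution $\Psi$ of the Fourier-side equation produced by Theorem~\ref{th:main1} into a genuine self-similar solution of \eqref{eq:GenBolt}, \eqref{eq:CollBolt} by undoing the change of variables \eqref{eq:chvarSSPtoSS} together with the phase normalisation \eqref{eq:PssGen}, and then to read off the moment bounds from the membership $\Psi\in\mathcal{F}_p(N)$ and from Theorem~\ref{th:higmom}. \textbf{Step 1: from $\Psi$ to a self-similar characteristic function.} By Theorem~\ref{th:main1} there are $\beta=\beta(A)\in\R$, a positive definite $N\in\text{Symm}_{d\times d}(\R)$ and a (unique) $\Psi\in\mathcal{F}_p(N)$, $2<p\le4$, solving \eqref{eq:intPsiEbeta2}; as noted after \eqref{eq:evPsiEbeta}, this is equivalent to $\Psi$ being a stationary solution of \eqref{eq:CauchyAb}, i.e. $(A_\beta k)\cdot\partial_k\Psi+\Psi=\Gamma(\Psi)$ with $A_\beta=A+\beta I$. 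Put
\bes
\varphi(k,t)=\exp\!\big(-i\,(e^{-tA^T}U)\cdot k\big)\,\Psi\!\big(k\,e^{\beta t}\big),\qquad U\in\R^d.
\ees
Reversing the substitution \eqref{eq:chvarSSPtoSS} shows that $\chi(k,t):=\Psi(k e^{\beta t})$ solves \eqref{eq:BoltFourier2} precisely because $\Psi$ is stationary for \eqref{eq:CauchyAb}; prepending the phase $e^{-iw(t)\cdot k}$ with $w(t)=e^{-tA^T}U$ preserves this property, by the computation already used for \eqref{eq:PssGen} — the phase factors out of $\Gamma$ since $k_++k_-=k$, and it is compatible with the transport term because $\dot w+A^T w=0$. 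Hence $\varphi$ solves \eqref{eq:BoltFourier2}, and $\varphi(\cdot,t)\in\Phi$ for every $t$ (the dilation $k\mapsto k e^{\beta t}$ preserves $\Phi$ by Lemma~\ref{lem:qBk}, and the phase corresponds to a translation of the underlying measure).

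\textbf{Step 2: back to $f$, and the moment of order $p$.} Let $f_{st}\in\mathcal{P}_+(\R^d)$ be the measure with $\hat f_{st}=\Psi$, as in \eqref{eq:deffst}. On the Fourier side, the dilation $v\mapsto v/\lambda$ acts by $\hat g(k)\mapsto\hat g(\lambda k)$ and the translation $v\mapsto v-a$ by $\hat g(k)\mapsto e^{-ia\cdot k}\hat g(k)$; hence $\varphi(\cdot,t)$ is the characteristic function of $f(v,t)=\big(e^{-\beta t}\big)^{d}f_{st}\!\left(\frac{v-e^{-tA^T}U}{e^{\beta t}}\right)$. Since the Fourier transform intertwines \eqref{eq:GenBolt}, \eqref{eq:CollBolt} with \eqref{eq:BoltFourier2} (Section~\ref{sec:2}, with the regularity supplied by Remark~\ref{eq:C1sol}), this $f$ is the claimed self-similar solution. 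By \eqref{def:FpB}, $\Psi\in\mathcal{F}_p(N)$ gives $\big|\hat f_{st}(k)-(1-\tfrac12 N:k\otimes k)\big|\le C|k|^p$ for all $k$, together with $|\hat f_{st}|\le1$; by the classical correspondence between the behaviour of a characteristic function near the origin and the absolute moments of the underlying measure — essentially the converse of Lemma~\ref{lem:estfi1fi2}, writing $\int_{\R^d}|v|^p f_{st}(v)\,dv$ as a convergent integral of a second difference of $\hat f_{st}$ against a kernel $\sim|k|^{-d-p}$ — this tangency implies $\int_{\R^d}dv\ f_{st}(v)|v|^p<\infty$.

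\textbf{Step 3: higher moments.} Fix $M\in\N$ and take $p\in(M,M+1]$. By Theorem~\ref{th:higmom}, for $\|A\|\le\ep_M$ there is a stationary solution of \eqref{eq:intPsiEbeta} lying in $\mathcal{F}_{p,M}(N)$; since $Q_2(k)=-\tfrac12 N:k\otimes k$ and $|Q_\ell(k)|\le C|k|^\ell$, the definition \eqref{def:FMpN} forces that solution to lie in $\mathcal{F}_3(N)$ as well, so by the uniqueness statement of Theorem~\ref{th:main1} (applicable since $3\in(2,4]$) it coincides with $\Psi$, hence with $\hat f_{st}$. Membership in $\mathcal{F}_{p,M}(N)$ means that $\hat f_{st}$ admits a Taylor expansion of order $M$ at $k=0$ with remainder $O(|k|^p)$, $p>M$; the same converse-moment argument then yields $\int_{\R^d}dv\ f_{st}(v)|v|^M<\infty$. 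I expect the only genuinely delicate point to be this moment extraction: passing from the quantitative tangency of $\hat f_{st}=\Psi$ at $k=0$ to the finiteness of a (possibly non-integer) absolute moment of $f_{st}$, the case of an integer exponent requiring some extra care in the Tauberian estimate; everything else is bookkeeping around \eqref{eq:chvarSSPtoSS}--\eqref{eq:PssGen} and direct appeals to Theorems~\ref{th:main1} and \ref{th:higmom}.
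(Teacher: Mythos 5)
Your argument is correct and follows essentially the paper's own route: the paper proves Theorem \ref{th:exssf} precisely by combining Theorems \ref{th:main1} and \ref{th:higmom} with the phase computation \eqref{eq:PssGen} and the change of variables \eqref{eq:chvarSSPtoSS}, which is exactly what your Steps 1--3 carry out (your identification of the $\mathcal{F}_{p,M}(N)$ solution with $\Psi$ via uniqueness mirrors the paper's remark after Theorem \ref{th:higmom}). The only place you go beyond the paper is in making explicit the Tauberian step that converts the tangency estimates defining $\mathcal{F}_p(N)$ and $\mathcal{F}_{p,M}(N)$ into finiteness of the $p$-th and $M$-th absolute moments of $f_{st}$; the paper takes this classical fact for granted, and your sketch (kernel representation with polynomial subtraction, with symmetric-difference care at integer exponents) is the standard correct way to justify it.
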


\begin{proof}
This result is a consequence of Theorems \ref{th:main1}, \ref{th:higmom} as well as \eqref{eq:PssGen} and  \eqref{eq:chvarSSPtoSS}.
\end{proof}

We now reformulate in the space of probability measures the stability result obtained for characteristic functions (cf.~Theorem \ref{th:main2}).

\begin{theorem}\label{thm:Main_f}
Suppose that $f\in C([0,\infty);\mathcal{M}_{+}(\R^d))$ is a solution to \eqref{eq:GenBolt}, \eqref{eq:CollBolt} with initial datum  $f(\cdot,0)=f_0(\cdot)\in \mathcal{P}_{+}(\R^d)$ such that $\int_{\R^d} dv\ f_0(v) |v|^p<\infty$ for $p>2$. Moreover, we define 
\be
U= \int_{\R^d} dv\ f_0(v) v\quad \text{and}\quad  b_{j,\ell}=2 \int_{\R^d} dv\ f_0(v) (v-U)_j (v-U)_{\ell}, \quad B_0=(b_{j,\ell})_{{j,\ell}=1}^d.
\ee
Suppose that $\|A\|\leq \ep_0$ with $\ep_0$ as in Theorem \ref{th:main2} and $\beta\in\R$ is the eigenvalue with largest real part of \eqref{eq:EVpbMom}. Let $\lambda$  be as in \eqref{eq:asymB} in Lemma \ref{lem:asym} where $B(t)$ is the solution of \eqref{eq:SecMomEq2} with initial datum $B(0)=B_0.$
Then, 
\be
\left( e^{\beta t} \right)^d \ f \left(e^{\beta t}v+e^{-tA^T} U,t\right) \rightarrow  \lambda^{-d}f_{st}\big({\lambda}^{-1} v\big) \quad \text{as}\quad t\to\infty 
\ee
in the weak topology of $\mathcal{P}_{+}(\R^d)$ and where $f_{st}$ is as in \eqref{eq:deffst}. 
\end{theorem}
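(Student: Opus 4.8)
The plan is to pass to the Fourier side, invoke Theorem \ref{th:main2}, and translate back by the classical equivalence between pointwise convergence of characteristic functions and weak convergence of probability measures. First I would note that, by the mass conservation identity \eqref{eq:phifk0}, $f$ in fact lies in $C([0,\infty);\mathcal{P}_{+}(\R^d))$, so $\vphi(k,t):=\hat f(k,t)$ (cf.~\eqref{eq:phif}) is a well-defined element of $C([0,\infty);\Phi)$ solving \eqref{eq:BoltFourier2}, equivalently the mild equation \eqref{eq:IntBoltFou}. The hypothesis $\int f_0(v)|v|^p\,dv<\infty$ for some $p>2$ yields, by a Taylor expansion exploiting this moment bound (cf.~the estimate \eqref{eq:SN} in the proof of Lemma \ref{lem:estfi1fi2}), the expansion $\vphi_0(k)=1-iU\cdot k-\tfrac12\sum_{j,\ell}\big(\int v_jv_\ell f_0\big)k_jk_\ell+O(|k|^p)$.

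Next I would remove the linear term by the time-dependent phase constructed in Section \ref{sec:6}: put $\psi(k,t)=e^{i(e^{-tA^T}U)\cdot k}\vphi(k,t)$. Since the phase acts trivially on $\Gamma$ and exactly cancels the drift generated by the first moment (the computation giving $w(t)=e^{-tA^T}U$ in Section \ref{sec:6}), $\psi$ again solves \eqref{eq:BoltFourier2}; moreover $\psi(\cdot,t)$ is the characteristic function of $v\mapsto f(v+e^{-tA^T}U,t)\in\mathcal{P}_{+}(\R^d)$, so $\psi\in C([0,\infty);\Phi)$, and $\psi(\cdot,0)$ satisfies \eqref{eq:cdtffi0} with quadratic coefficients given by the centered second moments of $f_0$, i.e.~by $B_0$. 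I would then perform the self-similar change of variables \eqref{eq:chvarSSPtoSS}, $\tilde{\vphi}(\tilde k,t)=\psi(k,t)$ with $\tilde k=ke^{\beta t}$, so that $\tilde{\vphi}\in C([0,\infty);\Phi)$ solves \eqref{eq:CauchyAb} (equivalently \eqref{eq:evPhiEbeta}) with $\tilde{\vphi}(\cdot,0)=\psi(\cdot,0)$. Theorem \ref{th:main2} now supplies $\Psi\in\mathcal{F}_p(N)$ and $\lambda\ge 0$ --- which, via Lemma \ref{lem:asym}, is exactly the $\lambda$ of the statement because $B(t)$ there solves \eqref{eq:SecMomEq2} with $B(0)=B_0$ --- such that
\[
|\tilde{\vphi}(k,t)-\Psi(\lambda k)|\le C e^{-\theta t}\big(|k|^2+|k|^p\big)\longrightarrow 0\quad (t\to\infty)
\]
for every $k\in\R^d$.

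It remains to identify $\tilde{\vphi}(k,t)$ with the characteristic function of the rescaled measure in the theorem. Writing $h_t(v)=(e^{\beta t})^d f(e^{\beta t}v+e^{-tA^T}U,t)$ and changing variables $w=e^{\beta t}v+e^{-tA^T}U$ in $\int h_t(v)e^{-ik\cdot v}\,dv$ gives $\widehat{h_t}(k)=e^{i(e^{-\beta t}k)\cdot(e^{-tA^T}U)}\vphi(e^{-\beta t}k,t)=\psi(e^{-\beta t}k,t)=\tilde{\vphi}(k,t)$, where the last equality uses $\psi(k,t)=\tilde{\vphi}(ke^{\beta t},t)$. On the other hand, by \eqref{eq:deffst}, the characteristic function of $\lambda^{-d}f_{st}(\lambda^{-1}\cdot)$ is $\widehat{f_{st}}(\lambda k)=\Psi(\lambda k)$. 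Hence $\widehat{h_t}(k)\to\Psi(\lambda k)$ pointwise, the limit being a characteristic function that is continuous at $k=0$ with value $1$; L\'evy's continuity theorem then gives $h_t\to\lambda^{-d}f_{st}(\lambda^{-1}\cdot)$ in the weak topology of $\mathcal{P}_{+}(\R^d)$, which is the assertion. In the degenerate case $\lambda=0$ one has $\psi(\cdot,t)\equiv 1$ by Remark \ref{rk:dirac}, the limit is $\delta_0$, and $\lambda^{-d}f_{st}(\lambda^{-1}\cdot)$ is to be read as $\delta_0$.

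I expect the only genuine difficulty to be the bookkeeping: keeping straight the interplay of the phase $e^{-tA^T}U$ and the self-similar dilation $e^{\beta t}$ across the two substitutions, and verifying that the scalar $\lambda$ delivered by Theorem \ref{th:main2} (through Lemma \ref{lem:asym}) is precisely the one obtained from \eqref{eq:SecMomEq2} with initial datum $B_0$. Apart from that, the proof is a direct combination of the results of Sections \ref{sec:6}--\ref{sec:8} with the standard characteristic-function/weak-convergence correspondence.
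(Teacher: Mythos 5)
Your proposal is correct and takes essentially the same route as the paper's own (much terser) proof: Fourier transform, removal of the mean by the phase $e^{-tA^T}U$ from Section \ref{sec:6}, the self-similar rescaling \eqref{eq:chvarSSPtoSS}, Theorem \ref{th:main2}, and the standard correspondence between pointwise convergence of characteristic functions and weak convergence of probability measures. The only wrinkle, inherited from the paper's statement rather than from your argument, is the factor $2$ in the definition of $B_0$, which you tacitly identify with the centered second-moment matrix that actually appears in the quadratic expansion \eqref{eq:expffi_f2}.
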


\begin{remark}
We notice that in the case $\lambda=0$ we have $f_{0}\left(  v\right)  =\delta\left(  v-U\right)$ and the solution obtained in Theorem \ref{th:exssf}, and Theorem \ref{thm:Main_f} is just a shifted Dirac Delta function which corresponds to the solution $\vphi(k,t) =1$ discussed in Remark \ref{rk:dirac}.
\end{remark}

\begin{proof}
At the level of the characteristic functions $\vphi(\cdot,t)$, the proof is a consequence of Theorem \ref{th:main2} combined with the computations made in Section \ref{sec:6} to see which is the effect of the original average velocity $U$ (cf. \eqref{eq:PssGen}), as well as the change of variables \eqref{eq:chvarSSPtoSS} which was made to transform self-similar solutions to the steady state. %(\Psi)
In order to obtain the result for the probability distribution $f(\cdot,t)$ we use elementary properties of the Fourier transform that relates the group of translations with the multiplication by a phase, the scaling properties of the Fourier transform as well as the fact that the uniform convergence of characteristic functions in compact sets implies the weak convergence of the original probability distributions (cf.~\cite{Fe2}). %(convergence in \mathcal{P}_{+})
\end{proof}

\section{Conclusions}\label{sec:11}

We have considered in detail the initial value problem \eqref{eq:GenBolt} for the Maxwell-Boltzmann equation with collision term \eqref{eq:CollBolt} and studied the long time asymptotics of the solutions. Notice that, in addition to the classical collision operator, \eqref{eq:GenBolt} contains a linear stretching term with the form $\operatorname{div}\left(  Avf\right).$  Applications of this problem are  connected with the well-known class of homoenergetic solutions, which are a particular class 
of solutions to the spatially inhomogeneous Boltzmann equation. They are characterized by having the
same dispersion of velocities at all the points $x\in\mathbb{R}^{3}$ for each
given time $t\in\mathbb{R}$. These solutions have been studied by several authors since 1950s.  Our main goal was to study a large time asymptotics of  solutions under the assumption of smallness of the matrix $A$.  The investigation was performed in the Fourier representation on the basis of  some special properties (see, in paricular, Section \ref{sec:3}) of  the non-linear collision operator in such representation.  The main  result of  the paper is formulated in Theorem \ref{thm} (Section \ref{sec:2}). Roughly speaking, it says that,  for sufficiently small norm of $A$,  any non-negative solution with finite second moment tends to a self-similar solution of relatively simple form  for large values of time. This is what we call ``the self-similar asymptotics".  A similar property is known to be valid for quasi-isotropic or one-dimensional  Maxwell models \cite{BCG09}, but in our case the self-similar solutions are strongly anisotropic.  To the best of our knowledge there are no previous  results on the convergence to anisotropic self-similar solutions. We also gave an alternative proof to the one in \cite{JNV} of existence of  the self-similar solutions and obtained some of their properties, in particular, the existence of  moments of any given order for sufficiently small norm of $A$ (depending on that maximal order).

 We finish the paper with  some open  questions. Our approach can be relatively easily extended to the case of moderate values of the norm of $A$ in \eqref{eq:GenBolt}. It is still not clear if all moments of the self-similar solution  are finite even for small $A$.  The case of large $A$ is more difficult except for exactly solvable (for self-similar solutions) version of \eqref{eq:GenBolt}, where $A$ is proportional to the identity matrix. It is interesting to study the high energy tail and finiteness of moments for particular case of shear flow with nilpotent matrix $A$ s.t. $A^2 = 0$.  There are some formal arguments by physicists (see e.g. \cite{GS}) suggesting that the fourth moment of  the self-similar profile becomes infinite for large values of  $A$ (strong shear), but there are no rigorous results on that problem.  Hopefully the methods of this paper can be useful for solving these and similar problems in the future.

%%%%%%%%%%%%%%%%%%%%%%%%%%%%%%%%%%%%%%%%%%%%%%%%%%%%%%%%%%%%%%%%%%%%%%%%%%%%%%%%%%%%%%%%%%%%%%%%%%%%%%%%%%%%%%%%%%%%%%%%%%%%%%%%%%%%%%%%%%%%%%%%%%%%%%%%%%%%%%%%%%%%%%%%%%%%%%%%%%%%%%%%%%%%%%%

\section*{Appendix A: Derivation of the matrix equation for $B(t)$} \label{app:A}

To obtain the equation for the second moments (cf.\eqref{eq:SecMomEq}) we compute first the left hand side  of \eqref{eq:BoltFourier2} acting over functions $\vphi$ with the form \eqref{eq:expffi_f2}. Then it becomes
\begin{align}\label{eq:lhsMom}
&-\frac 1 2 \sum_{j,\ell=1}^{d} \partial_t b_{j,\ell}(t) \ k_j k_{\ell} -\frac 1 2\sum_{j}^{d} \sum_{r,s=1}^{d}  \left(A_{r,j}b_{r,s}(t) \ k_j k_{s}+A_{s,j}b_{r,s}(t) \ k_j k_{r} \right) \nonumber \\&
=-\frac 1 2 \sum_{j,\ell=1}^{d} \partial_t b_{j,\ell}(t) \ k_j k_{\ell} -\frac 1 2\sum_{j}^{d} \sum_{r}^{d}\sum_{\ell=1}^{d}  A_{r,j}b_{r,\ell}(t) \ k_j k_{\ell}
\end{align}  
while the right hand side of \eqref{eq:BoltFourier2} gives
\begin{align}\label{eq:rhsMom2}
 \Gamma[\vphi](k)-\vphi(k)& =
-\frac 1 4 \int_{S^{d-1}} dn \ g(\hat{k}\cdot n) \left[ \sum_{j,\ell=1}^{d}b_{j,\ell}(t) \left( k_{j}k_{\ell}+|k|^2 n_{j}n_{\ell}\right)\right] +\frac 1 2\sum_{j,\ell=1}^{d} b_{j,\ell}(t) \ k_j k_{\ell} + O(\vert k\vert^p)\nonumber \\&
=\frac 1 2 \sum_{j,\ell=1}^{d}b_{j,\ell}(t) \int_{S^{d-1}} dn   \ g(\hat{k}\cdot n)  \left[ \frac 1 2 k_{j}k_{\ell} -\frac 1 2 \vert k \vert^2 n_j n_\ell\right]+O(\vert k\vert^p)\nonumber \\&
=\frac 1 4 \sum_{j,\ell=1}^{d}b_{j,\ell}(t)  T_{j,\ell}+O(\vert k\vert^p)
\end{align}
where 
\be
 T_{j,\ell}:= \int_{S^{d-1}} dn   \ g(\hat{k}\cdot n)    \big(  k_{j}k_{\ell} - \vert k \vert^2 n_j n_\ell \big).
\ee
Notice that $T_{j,\ell}$ is a traceless isotropic tensor depending on $k$. Therefore, it has the form  $ T_{j,\ell}= C_0\big(k_{j}k_{\ell} -\frac {\vert k \vert^2}{ d} \delta_
{j,\ell}\big)$ for some $C_0\in\R$. 

Computing the tensor for the particular choice of the vector $k=(1,0,\dots)$ we obtain
\be \label{eq:defq2}
C_0\left(\frac{d-1}{d}\right)= \int_{S^{d-1}} dn   \ g(\hat{k}\cdot n)    \big(  1- (\hat{k}\cdot n)^2  \big) :=q
\ee
whence
\bes
 T_{j,\ell}= \left(\frac{d}{d-1}\right) q \big(  k_{j}k_{\ell} - \frac 1 d \vert k \vert^2 \delta_{j,\ell} \big). 
\ees 
Thus \eqref{eq:rhsMom2} becomes
\be\label{eq:rhsMom}
\Gamma[\vphi](k)-\vphi(k)= \frac{q d}{4(d-1)}  \sum_{j,\ell=1}^{d}b_{j,\ell}(t) \big(  k_{j}k_{\ell} - \frac 1 d \vert k \vert^2 \delta_{j,\ell} \big)+O(\vert k\vert^p).
\ee

Thus, combining \eqref{eq:lhsMom} with \eqref{eq:rhsMom} and symmetrizing in the second sum of \eqref{eq:lhsMom}, we obtain 
\begin{align*}
& \partial_t b_{j,\ell}(t) + \sum_{r=1}^{d} \left( A_{r,j}b_{r,\ell}(t)+A_{r,\ell}b_{r,j}\right) =-  \frac{q d}{2(d-1)} b_{j,\ell}(t) + \frac{q}{2(d-1)} \Tr(B) \delta_{j,\ell} 
\end{align*}
that can be rewritten in matrix form as 
\be
\partial_t B+ \big(BA +(BA)^T\big)+\frac{q d}{2(d-1)}  \left( B - \frac{\Tr(B)}{d}I\right)=0
\ee 
where $M^T$ denotes the transposed of the matrix $M$.

\bigskip

\textbf{Acknowledgements.}
The authors gratefully acknowledge the support of the Hausdorff Research Institute for Mathematics
(Bonn), through the {\it Junior Trimester Program on Kinetic Theory},
and of the CRC 1060 {\it The mathematics of emergent effects} at the University of Bonn funded through the German Science Foundation (DFG). A.V.B. also acknowledges the support of Russian Basic Research Foundation (grant 17 - 51 - 52007).

%%%%%%%%%%%%%%%%%%%%%%%%%%%%%%%%%%%%%%%%%%%%%%%%%%%%%%%%%%%%%%%%%%%%%%%%%%%%%%%%%%%%%%%%%%%%%%%%%%%%%%%%%%%%%%%%%%%%%%%%%%%%%%%%%%%%%%%%%%%%%%%%%%%%%%%%%%%%%%%%%%%%%%%%%%%%%%%%%%%%%%%%%%%%%%%

\bigskip

\bigskip

\adresse


\begin{thebibliography}{99}                                                                                               
%%
%%
\bibitem {Bar} G.~I. Barenblatt, Scaling, self-similarity, and intermediate asymptotics, Cambridge University Press edition (1996)

\bibitem {BC02a} A.~V. Bobylev, C. Cercignani, Self-similar solutions of the Boltzmann equation and their applications, 
\textit{J. Stat. Phys.} \textbf{106}(5-6), 1039--1071 (2002)

\bibitem {BC02b} A.~V. Bobylev, C. Cercignani, Exact eternal solutions of the Boltzmann equation, 
\textit{J. Stat. Phys.} \textbf{106}(5-6), 1019--1039 (2002)

\bibitem {BC03} A.~V. Bobylev, C. Cercignani, Self-similar asymptotics for the Boltzmann equation with inelastic and elastic interactions, \textit{J. Stat. Phys.}, \textbf{1-2}, 335-375 (2003)

\bibitem {BCT03} A.~V. Bobylev, C. Cercignani, G. Toscani, Proof of an asymptotic property of self-similar solutions of the Boltzmann for granular materials, \textit{J. Stat. Phys.}, \textbf{111}(1-2), 403--417 (2003)

%
\bibitem {Bo75} A.~V. Bobylev, The method of the Fourier transform in the theory of the Boltzmann equation for Maxwell molecules. (Russian) 
\textit{Dokl. Akad. Nauk. SSSR} \textbf{225}, 1041--1044 (1975), 
\textit{Soviet  Phys.  Dokl.} \textbf{20},  820--822  (1976)

\bibitem{Bo88}  A.~V. Bobylev, The theory of the nonlinear spatially uniform Boltzmann equation for Maxwellian molecules, \textit{Sov. Scient. Rev. C}, \textbf{7}, 111--233 (1988)
%
\bibitem{BCG09} A.~V. Bobylev, C. Cercignani and I. M. Gamba, On the self-similar asymptotics for generalized non-linear kinetic Maxwell models, \textit{ Commun. Math. Phys.} \textbf{291}, 599--644 (2009)
%
%
\bibitem {BobCarSp} A.~V. Bobylev, G.~L. Caraffini, G. Spiga, On group
invariant solutions of the Boltzmann equation. \textit{Journal Math.
Phys.} \textbf{37}, 2787--2795 (1996)
%

%
\bibitem{Bog} V.~I. Bogachev: Gaussian Measures. American Mathematical Society, New York (1998)

\bibitem{CercArchive} C. Cercignani, Existence of homoenergetic affine flows
for the Boltzmann equation. \textit{Arch. Rat. Mech. Anal.}
\textbf{105}(4), 377--387, (1989)

\bibitem{Cerc2000} C. Cercignani, Shear Flow of a Granular Material.
\textit{J. Stat. Phys.} \textbf{102}(5),1407--1415, (2001)

\bibitem{Cerc2002} C. Cercignani, The Boltzmann equation approach to the shear
flow of a granular material. \textit{Philosophical Trans. Royal Society.} \textbf{360}, 437--451, (2002)
%
\bibitem{CL} E. Coddington, N. Levinson, Theory of ordinary differential equations, McGraw-Hill Book Company, Inc. New York-Toronto-London (1955).

\bibitem {EB02}  M. H. Ernst, R. Brito, 
Scaling Solutions of Inelastic Boltzmann Equations with Over-Populated High Energy Tails, \textit{J. Stat. Phys.}, \textbf{109} (3-4) (2002)
 
%
\bibitem{G1} V.~S. Galkin, On a class of solutions of Grad's moment
equation. \textit{PMM}, \textbf{22}(3), 386--389, (1958). (Russian version
\textit{PMM} \textbf{20}, 445-446, (1956)
%
%

\bibitem{Fe2} W. Feller, An introduction to Probability Theory and Applications, Vol.2, Wiley, New-York 1971 


\bibitem {GS} V. Garz\'{o} and A. Santos, Kinetic Theory of
Gases in Shear Flows: Nonlinear Transport, Kluwer Academic Publishers (2003)

% 
\bibitem {JNV}
R.~D. James, A. Nota, J. J. L. Vel\'azquez,
Self-similar profiles for homoenergetic solutions of the Boltzmann equation:
 particle velocity distribution and entropy, \textit{Arch. Rat. Mech} \textbf{231}(2), pp. 787-843 (2019)   
%
\bibitem{JNV2} R.~D. James, A. Nota, J. J. L. Vel\'azquez, Long-Time Asymptotics for Homoenergetic Solutions of the Boltzmann Equation: Collision-Dominated Case, \textit{J. Nonlinear Sci.} \textbf{29}(5), 1943--1973 (2019)

\bibitem{JNV3} R.~D. James, A. Nota, J. J. L. Vel\'azquez, Long-Time Asymptotics for Homoenergetic Solutions of the Boltzmann Equation: Hyperbolic-Dominated Case, arXiv:1906.08816 (2019)
%

\bibitem {MT} K. Matthies, F. Theil, Rescaled Objective Solutions of Fokker-Planck and Boltzmann equations, \textit{SIAM Journal on Mathematical Analysis} \textbf{51}(2), 1321--1348  (2019)
%
\bibitem {K76}T. Kato, Perturbation Theory for Linear Operators, Springer
Verlag, Classics in Mathematics, (1976)

\bibitem{RS2} M. Reed and B. Simon, Methods of modern mathematical physics, Volume II, Fourier analysis, self-adjointness, Academic Press (1975) 

\bibitem{T} C. Truesdell, On the pressures and flux of energy in a gas
according to Maxwell's kinetic theory II, \textit{J. Rat. Mech. Anal.} \textbf{5}, 55--128 (1956)
%
\bibitem{ToVi} G. Toscani, C. Villani, Sharp entropy dissipation bounds and explicit rate of trend to equilibrium for the spatially homogeneous Boltzmann equation, \emph{Comm. Math. Phys.} \textbf{203}(3), 667-706, (1999)

\bibitem {TM} C. Truesdell and R. G. Muncaster, Fundamentals of Maxwell's
Kinetic Theory of a Simple Monatomic Gas, Academic Press (1980)

\end{thebibliography}
\end{document}